\colorlet{darkblue}{blue!70!black}
\colorlet{darkgreen}{green!70!black}
\newcommand{\nv}{\nu}
\newcommand{\mrm}{\mathrm}
\newcommand{\nn}{\nonumber}
\newcommand{\lb}{\left(}
\newcommand{\rb}{\right)}
\newcommand{\lcb}{\left\{}
\newcommand{\FF}{\mathcal{F}}
\newcommand{\rcb}{\right\}}
\newcommand{\lsb}{\left[}
\newcommand{\rsb}{\right]}
\newcommand\be{\begin{equation}}
\newcommand\ba{\begin{eqnarray}}
\newcommand\ee{\end{equation}}
\newcommand\ea{\end{eqnarray}}
\newcommand\padic{$p$-adic }
\numberwithin{equation}{section}
\newcommand{\arxivold}[1]
  {\href{http://arxiv.org/abs/#1}{#1}}
\newcommand{\arxiv}[1]
  {\href{http://arxiv.org/abs/#1}{arXiv:#1}}
\DeclareMathOperator{\sgn}{sgn}
\newcommand\Qp{{\mathbb{Q}_p}}
\newcommand\Zp{{\mathbb{Z}_p}}
\newtheorem{defn}{Definition}
\newtheorem{thm}{Theorem}
\newtheorem{lemma}{Lemma}
\newtheorem{remark}{Remark}
\newtheorem{sdefn}{Sketch definition}
\newtheorem{sthm}{Sketch theorem}
\newtheorem{slemma}{Sketch lemma}
\newtheorem{sremark}{Sketch remark}
\begin{document}

\begin{titlepage}

\ \\
\vspace{-2.3cm}
\begin{center}

\begin{spacing}{2.3}
{\LARGE{Green's Functions for Vladimirov Derivatives and Tate's Thesis}}
\end{spacing}

\vspace{0.5cm}
An Huang,$^{1}$ Bogdan Stoica,$^{2}$ Shing-Tung Yau,$^{3,4}$ and Xiao Zhong$^{5}$

\vspace{5mm}

{\small

\textit{
$^1$Department of Mathematics, Brandeis University, Waltham, MA 02453, USA}\\

\vspace{2mm}

\textit{
$^2$Department of Physics and Astronomy, Northwestern University, Evanston IL 60208,~USA}\\

\vspace{2mm}

\textit{
$^3$Department of Mathematics, Harvard University, Cambridge MA 02138, USA}

\vspace{2mm}

\textit{
$^4$Center of Mathematical Sciences And Applications, Harvard University, Cambridge MA 02138, USA }\\

\vspace{2mm}

\textit{
$^5$Department of Pure Mathematics, University of Waterloo, Waterloo, Ontario, N2L 3G1, Canada}

\vspace{4mm}

{\tt anhuang@brandeis.edu, bstoica@northwestern.edu, yau@math.harvard.edu, xiao.zhong@uwaterloo.ca}

\vspace{0.3cm}
}

\end{center}

\begin{abstract}
\noindent Given a number field $K$ with a Hecke character $\chi$, for each place $\nu$ we study the free scalar field theory whose kinetic term is given by the regularized Vladimirov derivative associated to the local component of $\chi$. These theories appear in the study of $p$-adic string theory and $p$-adic AdS/CFT correspondence. We prove a formula for the regularized Vladimirov derivative in terms of the Fourier conjugate of the local component of $\chi$. We find that the Green's function is given by the local functional equation for Zeta integrals. Furthermore, considering all places $\nu$, the field theory two-point functions corresponding to the Green's functions satisfy an adelic product formula, which is equivalent to the global functional equation for Zeta integrals. In particular, this points out a role of Tate's thesis in adelic physics.
\end{abstract}

\vspace{3cm}

\begin{center}
\emph{In memory of Steven Gubser and John Tate.}
\end{center}

\end{titlepage}

\pdfbookmark[1]{\contentsname}{toc}
\setcounter{tocdepth}{2}
\tableofcontents
\onehalfspacing
\clearpage

\section{Introduction}

The field of $p$-adic numbers and its many generalizations have been of interest in physics since the works of Manin \cite{Manintheta}, Freund and Olson \cite{FreundOlson}, Freund and Witten \cite{FreundWitten}, and others, tracing back to the time period immediately after the first superstring revolution. \mbox{$p$-adic} quantum mechanics has been considered in \cite{vladimirovvolovich1, vladimirovvolovich2, vladimirovvolovich3, ruelleetal, zelenovpeq2, zelenovpathintegral}; this series of works led to the introduction of the so-called Vladimirov derivative, which is the natural notion of derivative on a field of numbers with the $p$-adic topology, and will be a central object of investigation in our present paper. Vladimirov derivatives have played an important role in the works of Zabrodin \cite{zabrodin,zabrodin2} computing the $p$-adic Veneziano amplitudes as path integrals on the Bruhat-Tits tree, and which can be thought of as a precursor of AdS/CFT. In recent years, motivated in part by the appearance of discrete features such as tensor networks and quantum error-correction in holography, there has been a resurgence of interest in $p$-adic physics \cite{ManMar,Gubser:2016guj, Heydeman:2016ldy, Gubser:2016htz, Gubser:2017vgc, Bhattacharyya:2017aly, Gubser:2017tsi, Dutta:2017bja, Gubser:2017qed, Bocardo-Gaspar:2017atv, Marcolli:2018ohd, Gubser:2018bpe, Gubser:2018yec, Qu:2018ned, Jepsen:2018dqp, Gubser:2018cha, Heydeman:2018qty, Hung:2018mcn, Jepsen:2018ajn, Gubser:2019uyf, Garcia-Compean:2019jvk}.

\textbf{Motivation from physics:} The interest in $p$-adic number systems in physics stems from attempts to understand the ultraviolet (high-energy) behavior of physical theories. Many physical quantities in gravity and quantum field theory (such as scattering amplitudes) naively suffer from divergences coming from the high-energy regime of the theory, which must be ``renormalized away'' via various methods. These divergences are a consequence of the fact that the high-energy regime in these theories is not well-defined, and they are expected to be removed, or to become tamer, once the theories are UV-completed. The situation in string theory is better, and it is believed that the extended nature of strings (and of other objects in the theory) renders the string theory ultraviolet finite, however how this happens is not precisely understood. Furthermore, string theory has regimes (such as the strong string coupling regime) where the conventional perturbative methods typically used to describe it cannot be applied.

The point of $p$-adic approaches to physics is that these ultraviolet difficulties are intimately connected to the topological structure of the real numbers, where two points can be brought arbitrarily close together in a continuous manner. Thus, a sensible attempt at dealing with renormalization more systematically is to replace the the field of real numbers with fields where the topology is better behaved (such as the $p$-adic fields), define physical theories on these objects, and then attempt to ``reconstruct'' the original theory on the reals \cite{Stoica:2018zmi}. Such an approach could also help guide understanding the ultraviolet completion of various theories. 

Perhaps surprisingly, introducing $p$-adic fields makes contact with another feature present in modern high-energy physics, that of nonlocality. Various notions of nonlocality exist in high-energy physics, for instance in quantum entanglement, or in the holographic dictionary of the AdS/CFT correspondence, where local bulk data has nonlocal representations on the boundary, and vice versa. Because of the negative sign in the exponent of the $p$-adic norm, rational points that are close in $p$-adic space will typically be far apart in real space, and the other way around. This provides a natural (and fundamental) realization of nonlocality in the context of $p$-adic physics.

\textbf{Physics and mathematics:} Historically, developments in $p$-adic physics have been closely related to developments in mathematics. In the present paper we will contribute to this connection, by establishing results between the reconstruction of propagators, in the sense of \cite{Stoica:2018zmi}, and functional equations for Zeta functions and $L$-functions.

Our construction is as follows. Given any number field $K$ with any choice of a Hecke character,\footnote{As is standard in the mathematics and physics literature, we will denote by $\chi$ both the Hecke characters and the local additive characters. The notation should be clear from the context, but as a general rule of thumb $\chi$ will be used mostly for the local additive characters in Sections \ref{sec2}, \ref{secnumberfield}, and \ref{secQpn}, and for Hecke characters in Section \ref{cftnf}.} we associate a field theory on each local completion of $K$, which is defined by a free action whose kinetic term is determined by the local component of the Hecke character. When $K=\mathbb{Q}$ with the character $x\to|x|$, it turns out that the field theory on the local completion is the dual CFT of the genus 0 $p$-adic open string worldsheet theory, as first worked out by Zabrodin \cite{zabrodin2}. When $K=\mathbb{Q}(i)$, at the Archimedean place, the theory on the local completion is the two dimensional free scalar field theory. Furthermore, the product of these local completion two point functions (in the sense of analytic continuation) is equal to 1, which is ensured by the global functional equation in Tate's thesis. Thus we find a role of Tate's thesis in adelic physics, with immediate generalizations anticipated -- e.g. to the quaternions over $\mathbb{Q}$.\footnote{There is an unfortunate clash of terminology between ``local'' as in local completion or component and the physicist's notion of locality. We should note however that for $K=\mathbb{Q}$ or $\mathbb{Q}(i)$ and character $x\to |x|$, the resulting Archimedean theories are local in the physicist's sense as well, because the Archimedean Green's functions are Green's functions for the usual (local) Laplacian with Dirac-delta source.} 

The key to relate Tate's thesis with adelic physics is to interpret the local Zeta integrals in terms of Green's functions for the so-called Vladimirov derivatives, which are linear operators from the Schwartz space to a bigger function space. This interpretation relies on the fact that there are two ways to express the Vladimirov derivative operator in general: either in terms of the Fourier conjugate of a multiplicative character, or in terms of an explicit integral formula. In Section \ref{secnumberfield} we will prove that these two definitions are equivalent in general.

There is one other point worth mentioning. In physics, one instance of the \mbox{AdS/CFT} (holographic) correspondence states that gravitational theories on asymptotically hyperbolic spaces are equivalent to conformal field theories on the boundary of these spaces. As originally noted by Zabrodin \cite{zabrodin2}, in the $p$-adic setting, theories on $\Qp$ have bulk duals on the Bruhat-Tits tree $T_p$. From this point of view, Zeta functions and $L$-functions can be understood as objects to which it makes sense to apply holography, and the work in the present paper represents a \emph{boundary} computation, whereas the previous work \cite{Huang:2019nog} was in the bulk. 

\textbf{The symmetries of $p$-adic scalar theories:} In this paper we will be mostly concerned with free scalar theories on $\mathbb{Q}_p$, $\mathbb{R}$, or a number field. These theories are given by a standard kinetic term action,
\be
S=\int_{K_{\nu}}\phi(x) \partial^{\chi_{\nv}}\phi(x)dx,
\ee
defined in relation to a Vladimirov derivative $\partial^{\chi_{\nv}}$. As will be explained in detail below, when $K_\nu=\mathbb{Q}_p$ and $\chi_{\nv}=|\cdot|^s_p$ (here $|\cdot|_p$ is the $p$-adic norm and $s\in\mathbb{C}$) this action becomes
\be
\label{eqac12}
S = \int_{\mathbb{Q}_p} \phi(x) \frac{\phi(x)-\phi(y)}{|x-y|^{s+1}_p} dxdy.
\ee
When $s=1$, this action is invariant under the action of $\mathrm{PGL\lb2,\mathbb{Q}_p\rb}$, and so can be thought of as a $p$-adic analogue of a conformal field theory (as originally employed by Zabrodin \cite{zabrodin2}, and explained in \cite{Heydeman:2016ldy}). When $s\neq 1$, the action \eqref{eqac12} remains invariant under scaling, but may no longer be invariant under more general transformations. Note that in the Archimedean setting conformal and scaling symmetry are closely related, but this relation appears weaker in $p$-adic settings. When $\chi_\nu$ is replaced by a more general character, the symmetries of the action \eqref{eqac12} become more complicated, and can depend for example on the behavior of the Hilbert symbol under scaling. A preliminary study of $p$-adic conformal field theory was performed by Melzer \cite{Melzer}, however in the present paper we will be mostly agnostic about conformal symmetry and the symmetries of the action, since they will not affect our analysis. 

We should also emphasize that the theories considered in this paper are in general (for arbitrary character) non-local, since the Vladimirov derivative has a non-local presentation. However, theories that are non-local at the finite places can be local at the Archimedean place; the free particle in quantum mechanics, particle in a box, as well as $p$-adic bosonic strings (i.e. the theories discussed in this paper when $K_\nu=\mathbb{Q}_p$ and $s=1$, and their Bruhat-Tits tree duals) are examples of this phenomenon. Furthermore, it is currently not well understood what forms restrictions such as unitarity, causality, and so on take in non-Archimedean theories, or how they are related between the non-Archimedean and Archimedean sides.

A connection between string four-point amplitudes, Tate's thesis, and the Riemann Zeta function was reported in \cite{AdelicNpoint}.\footnote{We thank the anonymous referee for pointing out this paper.} Our work differs from \cite{AdelicNpoint} in that we are working with the Green's functions, and not with the four-point amplitudes, and furthermore our analysis applies not only to the Zeta functions, but also to a Hecke $L$-function of a general number field. While it may be possible to formally write down amplitudes related to such $L$-functions, currently it is not clear to which theories they should belong. Nonetheless, it would be worthwhile to systematically study the relations between Green's functions and four- and $N$-point amplitudes, from the viewpoint of Tate's thesis.

\subsection{Summary of results and outline}

We now summarize our results. The main observation of this paper is that the product over places of the Green's functions for derivatives,
\be
\label{prod111}
\prod_{\nu} G_{(\nu)}(x,y)=1,
\ee
for $\mathbb{Q}$ and other number fields, is equivalent to a global functional equation. When $\pi=|\cdot|^s$ the functional equation is for the Riemann Zeta function, while for more general quasi-characters on number fields it will be an $L$-function functional equation. The equivalence between the product \eqref{prod111} and functional equations follows directly from Tate's thesis.

As we will show, the Green's function for the Vladimirov derivative is unique up to an additive constant. Equation \eqref{prod111} corresponds to picking this constant so that the Green's function is equal to a two-point function in field theory. Therefore, the adelic product formulas can be more precisely thought of as applying to two-point functions in field theory. Throughout the rest of the paper, we will use the terms ``Green's functions'' and ``two-point functions'' interchangeably when discussing product formulas, with the understanding that the constant in the Green's functions has been chosen appropriately.

At a finite place, the Vladimirov derivative $D_\pi$ acting on functions $\psi:F\to\mathbb{C}$ has two representations, 
\be
D^{(1)}_\pi \psi(x) = \int_{F} \chi(-kx)\pi(k) \int_{F} \chi(kx') \psi(x') dx' dk,
\ee
where $\chi$ is an additive character, and
\be
D^{(2)}_\pi \psi(x) =\Gamma(|\cdot|\pi) \int_{F} \frac{\psi(x) - \psi(x')}{\pi(x-x')|x-x'|} dx.
\ee
where the $|\cdot|: F \longrightarrow \mathbb{R}^{\geq 0}$ is the standard ultrametric non-Archimedean absolute value of this finite place of $F$. We prove that the two representations are equal when acting on functions in the Schwartz space, i.e. $D_\pi^{(1)}=D_\pi^{(2)}$, for non-Archimedean completions of number fields. The identity $D_\pi^{(1)}=D_\pi^{(2)}$ can be thought of as bridging physics and Tate's thesis, as representation $D_\pi^{(2)}$ is natural in the context of Tate's thesis, whereas representation $D_\pi^{(1)}$ is common in physics.

The outline of the paper is as follows. In Section \ref{sec2} we lay out the skeleton of our argument, and we exemplify it for the case of $F=\Qp$. We present each step as a \emph{sketch}, for arbitrary $F$, and then we make these steps precise for $F=\Qp$. These results for $\Qp$ are not new, as they have appeared before in various parts of the literature (see e.g. \cite{vvzbook} for an early reference, or \cite{zunigabook} for a rigorous treatment), and in any case they follow from Tate's thesis. Rather, we intend this section as a review of how the machinery works, for readers who may not be familiar with Tate's thesis, and as a bare-bones presentation of the structure that generalizes to more involved cases. In Section \ref{secnumberfield} we extend our analysis to number fields, by showing that the equivalence between the two presentations of the Vladimirov derivative holds generally. As far as we know, the computations in this section are new. Furthermore, the equivalence between the Euler product of Green's functions and the functional equation again holds, following from Tate's thesis. In Section \ref{cftnf} we comment on applications of the machinery in this paper to $p$-adic field theory. In Section \ref{secQpn} we consider the case of multiplicative characters on $\mathbb{Q}_p^n$ coming from the maximum norm.

\section{Green's functions for Vladimirov derivatives on~$\Qp$}
\label{sec2}

\subsection{The general recipe}
\label{sec21}

We now present our argument. It works in much the same way on any arbitrary $F$ that is a local completion of a number field, however the specifics (such as for which values of the parameters the integrals converge, and so on) will differ from one case to another. In this section we will specify $F=\Qp$. Nonetheless, the reader should keep in mind that the argument is written so that it can be extended to more general situations. We will discuss the case when $F$ comes from a number field in Section \ref{secnumberfield}, and we will touch on $F=\mathbb{Q}_p^n$ in Section \ref{secQpn}~below. 

We will first sketch the main arguments below ignoring certain details, including some convergence issues. These details will be recovered later in Section \ref{sec232} of the paper for $\mathbb{Q}_p$ by direct computation, and in general for a local completion of a number field by Tate's thesis, discussed in Sections \ref{secnumberfield} and \ref{cftnf}. We are interested in functions (which could for instance be the fields in a conformal field theory)
\be
\phi: F \to \mathbb{C}.
\ee  

The kinetic term in the Lagrangian is a key object in any physical theory of the fields $\phi$ with a Lagrangian description. Kinetic terms on non-Archimedean fields can be constructed by introducing Vladimirov derivatives, which are the Fourier transforms of quasi- (multiplicative and continuous) characters. More precisely, a quasi-character $\pi:F\to\mathbb{C}$, $\pi(x)\pi(y)=\pi(xy)$, imparts on $F$ a certain structure, which can be exploited in order to build a physical theory, in the form of the Gamma function $\Gamma\lb\pi\rb$, the Vladimirov derivative $D_\pi$, and the Green's function $G_\pi(x,y)$ for the Vladimirov derivative. The choice of quasi-character can thus be thought of as specifying the kinetic term for the theory, and the entire structure that we will be using here (Gamma function, Vladimirov derivatives, Green's functions), i.e. the physics, essentially arises from the representation theory on the multiplicative group.

\begin{defn}
Additive $\chi$ and multiplicative $\pi$ characters of $F$ are functions
\be
\chi:F \to \mathbb{C}^\times, \quad \pi:F^\times \to \mathbb{C}^\times,
\ee
such that for $x,y\in F^\times$ and $w,t\in F$ we have
\be
\pi(xy)=\pi(x)\pi(y),\quad \chi(w+t) = \chi(w)\chi(t).
\ee
For $F=\Qp$ we will choose additive characters
\be
\chi(x)\coloneqq e^{2\pi i \{x\}},
\ee
where $\{x\}$ is the fractional part of $x$, i.e. if $ x = \sum_{n} c_n p^n $ then $\{x\} = \sum_{n < 0} c_n p^n$, and multiplicative characters given by (of course, other choices of characters are possible also)
\be
\pi_s\coloneqq |x|_p^s,\quad \pi_{s,\tau}\coloneqq|x|_p^s\sgn_\tau x,
\ee
with $s\in\mathbb{C}$,  $\tau\in\mathbb{Q}_p^\times$ and $sgn_\tau x = (\tau ,x)$ is the Hilbert symbol, so that $\pi_s=\pi_{s,1}$. Throughout the paper (and especially in Section \ref{secnumberfield}), we will be interested in characters that are continuous on the multiplicative group, i.e. quasi-characters, and furthermore we will assume $\Re\lb s\rb>0$.
\end{defn}

\begin{remark}
We will ignore convergence issues throughout this section (for instance the Sketch theorem \ref{Green Thm} and Sketch lemma \ref{slemma2} below). It will turn out that things can be made rigorous either by direct computation in the $\mathbb{Q}_p$ case (see Section \ref{sec232}), or in general for a local completion of a number field by Tate's thesis.
\end{remark}

\begin{sdefn}
The Fourier transform of a function $\phi$ in the Schwartz space of compactly supported locally constant functions on $F$ (integration is with respect to the additive Haar measure on $F$) is
\be
\FF\phi(k) \coloneqq \int_F \phi(x) \chi(kx)dx.
\ee
\end{sdefn}

\begin{slemma}
The inverse of the Fourier transform is
\be
\FF^{-1}\phi(k) = \FF\phi(-k).
\ee
\end{slemma}

\begin{sdefn}
\label{defvladi}
The Vladimirov derivative $D_\pi$ associated to character $\pi$ is
\be
\label{VladiDpi}
D_\pi \phi \coloneqq \FF^{-1} \pi \FF\phi.
\ee
\end{sdefn}

\begin{remark}
Historically there is another explicit formula for the Vladimirov derivative for $\mathbb{Q}_p$ and an unramified character, which is used for example in Zabrodin's work on $p$-adic string amplitudes. We will prove later that these two definitions are equivalent in general.
\end{remark}

\begin{sremark}
The crucial reason why Definition \ref{defvladi} makes sense is that for two quasi-characters $\pi_{1,2}$ we have
\be
D_{\pi_1}D_{\pi_2} = D_{\pi_1\pi_2},
\ee
so the derivatives compose as expected.
\end{sremark}

Let's now introduce the Green's function $G(x,y)$ for $D_\pi$.
\begin{sdefn}
For $x,y\in F$, let $G(x,y)$ be the function
\be
G_\pi(x,y)\coloneqq \FF \pi^{-1}(x-y).
\ee
\end{sdefn}

\begin{sthm}
\label{Green Thm}
Function $G(x,y)$ is the Green's function for the Vladimirov derivative $D_\pi$.
\end{sthm}

\begin{proof}
Fourier transforming and applying $\pi$, we have
\be
\pi \FF^{-1} G = 1,
\ee
which Fourier transforming again gives
\be 
\FF \pi \FF^{-1} G = \delta,
\ee
so that
\be
D_\pi G(x,y) = \delta(x-y),
\ee
which shows that $G$ is indeed the Green's function.
\end{proof}

The following lemma introduces the Gamma function, and can be used to characterize the Vladimirov Green's functions in more detail.

\begin{slemma}
\label{slemma2}
The Fourier transform of a quasi-character $\pi$ is another quasi-character, given by
\be
\label{GammaFTgeneral}
\FF \pi = \Gamma\lb \Delta \pi \rb \Delta^{-1} \pi^{-1},
\ee
where prefactor $\Gamma\lb \Delta \pi \rb$ is called the Gamma function, and $\Delta$ is the Haar modulus. For $F=\Qp$, we have $\Delta(x) = |x|_p$.
\end{slemma}
\begin{proof}
By the definition of the Fourier transform, we have
\be
\FF\pi (k) = \int_F \chi\lb kx \rb \pi(x) dx.
\ee
Consider the change of variables $x'\coloneqq kx$. This gives
\be
\label{eq114}
\FF \pi\lb k \rb = \lb \int_F \chi\lb x' \rb \pi\lb x' \rb dx' \rb \pi\lb k^{-1} \rb \Delta^{-1}(k),
\ee
as desired.
\end{proof}

\begin{sdefn} According to Eqs. \eqref{GammaFTgeneral}, \eqref{eq114}, the Gamma function for character $\pi$ is defined as
\be
\label{eqgammadef}
\Gamma\lb \pi \rb \coloneqq \int_F \chi(x) \pi\Delta^{-1}(x)dx.
\ee
\end{sdefn}

\begin{slemma}
The Gamma function obeys the functional equation
\be
\Gamma\lb \Delta \pi \rb \Gamma\lb \pi^{-1} \rb = \pi(-1).
\ee
For $F=\Qp$ and characters $\pi_{s,\tau}$, this functional equation is just
\be
\label{Gammastpart}
\Gamma\lb \pi_{s+1,\tau} \rb \Gamma\lb \pi_{-s,\tau} \rb = \sgn_\tau(-1).
\ee
\end{slemma}
\begin{proof}
Applying $\FF^{-1}$ to Eq. \eqref{GammaFTgeneral} we have
\ba
\pi(k) &=& \Gamma\lb \Delta \pi \rb \FF^{-1}\lb \Delta^{-1} \pi^{-1}\rb(k) \\
&=& \Gamma\lb \Delta \pi \rb \Gamma\lb \pi^{-1} \rb\pi(-k),
\ea
so  that setting $k=-1$ and using $\pi(1)=1$ we obtain
\be
\label{gammaFUN}
\Gamma\lb \Delta \pi \rb \Gamma\lb \pi^{-1} \rb = \pi(-1),
\ee
from which Eq. \eqref{Gammastpart} also follows.
\end{proof}

\begin{slemma} Using Sketch Lemma \ref{slemma2}, the Green's function is
\be
\label{sketchG}
G_\pi\lb x-y \rb = \Gamma\lb \Delta \pi^{-1} \rb \Delta^{-1} \pi (x-y).
\ee
\end{slemma}

\begin{remark}
It is especially important to note that Sketch Eq. \eqref{sketchG} above assumes the analyticity of the Gamma function $\Gamma\lb\Delta\pi^{-1}\rb$. If $\Gamma\lb\Delta\pi^{-1}\rb$ has a pole, then the expression \eqref{sketchG} for the Green's function will not be correct; this happens for instance, when $F=\Qp$ and $\pi(x)=|x|^s$, at $s=1$, as we will review in Section \ref{sec232} below. In such cases, however, the Green's function can still be obtained from Eq. \eqref{sketchG} by taking the~limit.
\end{remark}

\begin{remark}
There exists an integral presentation of the Vladimirov derivative~\eqref{VladiDpi}, given~by
\be
\label{Dinteg}
D_\pi \psi\lb x\rb = \Gamma\lb \Delta \pi \rb \int_F \lsb \psi(x') - \psi(x) \rsb \Delta^{-1}\lb x'-x \rb \pi^{-1}\lb x'-x \rb dx'.
\ee
\end{remark}
\noindent This formula is well-known in the case $F=\Qp$, when $\Delta=|\cdot|$. In Section \ref{secnumberfield} we will prove that this presentation remains valid on number fields.

We will not prove formula \eqref{Dinteg} here, but it is immediate to justify it, since starting with Eq. \eqref{VladiDpi} and using Eq. \eqref{GammaFTgeneral} for the Fourier transform of the multiplicative character, we have
\ba
\label{eeq225}
D_\pi \psi(x) &=& \int_F \pi(k) \chi\lsb k\lb x'-x \rb \rsb \psi(x')dkdx'\\
&=&\Gamma\lb \Delta \pi \rb \int_F \psi(x') \lb \Delta^{-1} \pi^{-1}\rb\lb x'-x \rb dx'.
\label{eeq226}
\ea
This recovers the first term in Eq. \eqref{Dinteg}. The derivation in Eqs. \eqref{eeq225} -- \eqref{eeq226} is not rigorous, because it is not in the sense of distributions, and we are being cavalier about convergence. Taking care of these issues properly also produces the second (contact) term in Eq. \eqref{Dinteg}; see Section \ref{secnumberfield} for the details.

\begin{slemma}
\label{slemma5}
Suppose the quasi-characters and Haar modulus obey the product formulas
\be
\prod_v \pi^{(v)}=1, \quad \prod_v \Delta^{(v)} = 1,
\ee
when multiplied across all places $\{F_v\}$ (Archimedean and non-Archimedean). Then the Green's functions obey the product formula
\be
\label{prod223}
\prod_v G^{(v)}_{\pi^{(v)}}(x-y) = \prod_v \Gamma^{(v)} \lb \Delta^{(v)} \lb\pi^{(v)}\rb^{-1} \rb.
\ee
\end{slemma}

\begin{proof}
Immediate from Eq. \eqref{sketchG}.
\end{proof}

We now discuss the product in Eq. \eqref{prod223}, in the case $F=\Qp$.

\begin{remark}
It is a standard result that integrating Eq. \eqref{eqgammadef}, the Gelfand-Graev Gamma function for $\pi_s$ is
\be
\Gamma^{(p)}\lb \pi^{(p)}_s \rb = \frac{1-p^{s-1}}{1-p^{-s}}.
\ee
Similarly, at the Archimedean place we have the Gelfand-Graev Gamma function
\be
\Gamma^{(\infty)}\lb \pi^{(\infty)}_s \rb = \frac{2\cos\lb\frac{\pi s}{2}\rb}{(2\pi)^s}\Gamma_\mrm{Euler}(s).
\ee
One can compute that this Archimedean factor comes out in exactly the same way in the Green's function at the Archimedean place: namely, in this case, compute the Green's function for the pseudodifferential operator at the real place, the resulting Green's function has exactly this Archimedean Gamma factor in front of a quasi-character. For details of this, one can see Tate's thesis, where the same computation is done, but is interpreted in a different way.
\end{remark}

Tate's thesis also shows that the product of the Gamma functions for the local components $\pi_s$ being equal to $1$ is equivalent to the functional equation for the Riemann Zeta function, which we encapsulate in Theorem \ref{thmone} below. In light of Sketch Lemma \ref{slemma5} above, we translate this into the following: Tate's thesis states that the Archimedean propagator for a free bosonic field theory being reconstructible from the $p$-adic propagators via a product formula is equivalent to the functional equation for the Riemann Zeta function.

\begin{thm} The product
\label{thmone}
\be 
\prod_v \Gamma^{(v)}(\pi^{(v)}_s) = 1
\ee
is equivalent to the Riemann Zeta functional equation
\be
\label{zetafnct}
\zeta\lb 1-s \rb = 2^{1-s}\pi^{-s} \cos\lb \frac{\pi s}{2} \rb \Gamma_\mrm{Euler}\lb s \rb \zeta\lb s\rb.
\ee
\end{thm}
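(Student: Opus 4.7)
The plan is to plug the explicit Gelfand-Graev Gamma functions from the preceding remark directly into the product, identify the Euler product for $\zeta(s)$ in the resulting expression, and match terms. Since both sides are meromorphic in $s$, the equivalence will be an identity of meromorphic functions on $\mathbb{C}$.

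First I would split the product over all places into the Archimedean factor and the product over rational primes,
\be
\prod_v \Gamma^{(v)}\lb \pi^{(v)}_s \rb = \Gamma^{(\infty)}\lb \pi^{(\infty)}_s \rb \cdot \prod_p \Gamma^{(p)}\lb \pi^{(p)}_s \rb,
\ee
and substitute $\Gamma^{(p)}\lb \pi^{(p)}_s \rb = (1-p^{s-1})/(1-p^{-s})$. Separating numerator and denominator, the Euler product gives $\prod_p (1-p^{-s})^{-1} = \zeta(s)$, valid for $\Re s > 1$, while the numerator factor is $\prod_p (1-p^{s-1}) = \prod_p (1-p^{-(1-s)})$, which similarly assembles into $1/\zeta(1-s)$ for $\Re s < 0$. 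Both identities extend to all of $\mathbb{C}$ by analytic continuation of $\zeta$, so as an identity of meromorphic functions,
\be
\prod_p \Gamma^{(p)}\lb \pi^{(p)}_s \rb = \frac{\zeta(s)}{\zeta(1-s)}.
\ee

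Next I would multiply by the Archimedean Gamma factor $\Gamma^{(\infty)}\lb \pi^{(\infty)}_s \rb = 2\cos\lb\pi s/2\rb (2\pi)^{-s}\, \Gamma_{\mrm{Euler}}(s)$ to obtain
\be
\prod_v \Gamma^{(v)}\lb \pi^{(v)}_s \rb = \frac{2\cos\lb\pi s/2\rb}{(2\pi)^s}\, \Gamma_{\mrm{Euler}}(s)\, \frac{\zeta(s)}{\zeta(1-s)}.
\ee
Setting the left-hand side equal to $1$ and solving for $\zeta(1-s)$ yields exactly the functional equation \eqref{zetafnct} after rewriting $(2\pi)^{-s}\cdot 2 = 2^{1-s}\pi^{-s}$; conversely, the functional equation rearranges to make the right-hand side equal to $1$. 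This gives the equivalence in both directions.

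The substantive (though routine) obstacle is justifying the Euler product manipulation: the infinite product $\prod_p (1-p^{s-1})$ is only absolutely convergent in the half-plane $\Re s < 0$, while $\prod_p (1-p^{-s})$ converges only for $\Re s > 1$, so the two Euler products cannot be simultaneously convergent in any common region. The resolution is to interpret $\prod_p \Gamma^{(p)}\lb \pi^{(p)}_s \rb$ via analytic continuation of each of these factors separately, using that $\zeta(s)$ and $\zeta(1-s)$ are meromorphic on all of $\mathbb{C}$; this is precisely the analytic framework supplied by Tate's thesis, which is the natural home for this computation.
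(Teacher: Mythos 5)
Your proposal is correct and follows essentially the same route as the paper: split off the Archimedean factor, assemble the local factors $\prod_p (1-p^{s-1})/(1-p^{-s})$ into $\zeta(s)/\zeta(1-s)$ via Euler products, and rearrange, with the analytic continuation (which the paper likewise defers to Tate's thesis) handling the fact that the two Euler products have no common half-plane of convergence. Your explicit discussion of the convergence regions $\Re s > 1$ and $\Re s < 0$ is a welcome elaboration of a point the paper's proof passes over in one sentence.
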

\begin{proof} This is immediate by direct computation, as we have
\ba
\prod_v \Gamma^{(v)}(\pi^{(v)}_s) &=& \Gamma^{(\infty)}(\pi^{(\infty)}_s) \prod_p \Gamma^{(p)}(\pi^{(p)}_s) \\
&=& \Gamma^{(\infty)}(\pi^{(\infty)}_s) \prod_p \frac{1-p^{s-1}}{1-p^{-s}} \\
&=& \frac{2\cos\lb\frac{\pi s}{2}\rb}{(2\pi)^s}\Gamma_\mrm{Euler}(s) \frac{\zeta(s)}{\zeta(1-s)}.
\ea
The last step above requires analytic continuation, which we will not justify here, as it is justified in Tate's thesis.
\end{proof}

\begin{thm} The product
\be
\prod_v \Gamma^{(v)}(\pi^{(v)}_{s,\tau}) = 1
\ee
is equivalent to the functional equation for quadratic character Dirichlet $L$-functions for the Hecke character $\chi$, of which the $\pi_{s,\tau}$'s are the local components.
\begin{proof}
Just as in the Riemann Zeta case, expanding the Euler product gives the local factors of the $L$-function. The details are given in Tate's thesis.
\end{proof}
\end{thm}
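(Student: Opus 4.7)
The plan is to mirror the proof of Theorem \ref{thmone} closely, computing each local factor $\Gamma^{(v)}(\pi^{(v)}_{s,\tau})$ from the defining integral \eqref{eqgammadef} and then assembling the adelic product. First I would partition the places of $\mathbb{Q}$ into three groups: (i) the Archimedean place, (ii) the finitely many non-Archimedean places where the local component $\chi_v$ of the Hecke character is ramified, and (iii) all remaining non-Archimedean places, where $\pi^{(p)}_{s,\tau}$ reduces to an unramified twist of $|\cdot|_p^s$. At an unramified finite place, the integral \eqref{eqgammadef} still collapses to a geometric-series ratio, now of the form
\be
\Gamma^{(p)}(\pi^{(p)}_{s,\tau}) = \frac{1 - \chi_p(p)\,p^{s-1}}{1 - \chi_p(p)\, p^{-s}},
\ee
which is exactly the ratio $L_p(1-s,\chi_p^{-1})/L_p(s,\chi_p)$ of local Euler factors.

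At a ramified finite place, the same integral produces a Gauss-sum local epsilon factor $\varepsilon_p(s,\chi_p,\chi)$ times the (typically trivial) ratio of local $L$-factors; this is the standard local zeta-integral computation carried out in Tate's thesis. At the Archimedean place the computation of the previous theorem is modified only by a possible sign character, yielding either the $\Gamma_{\mathbb{R}}$ or $\Gamma_{\mathbb{R}}(s+1)$ factor, and this again matches the expected ratio $L_\infty(1-s,\chi_\infty^{-1})/L_\infty(s,\chi_\infty)$ up to an Archimedean epsilon factor.

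Taking the product over all places and using the Euler product $L(s,\chi) = \prod_p L_p(s,\chi_p)$ along with the global epsilon factorization $\varepsilon(s,\chi) = \prod_v \varepsilon_v(s,\chi_v)$ (with almost all local epsilons equal to $1$), the local factors telescope into
\be
\prod_v \Gamma^{(v)}(\pi^{(v)}_{s,\tau}) = \varepsilon(s,\chi)\,\frac{\Lambda(1-s,\chi^{-1})}{\Lambda(s,\chi)},
\ee
where $\Lambda$ denotes the completed $L$-function. Setting the left-hand side equal to $1$ is therefore equivalent to $\Lambda(s,\chi) = \varepsilon(s,\chi)\,\Lambda(1-s,\chi^{-1})$, which in the quadratic case (where $\chi = \chi^{-1}$ and $\varepsilon(s,\chi)$ reduces to a root number times a power of the conductor) is precisely the standard functional equation for the Dirichlet $L$-function attached to $\chi$. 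Both directions of the equivalence follow at once.

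The main obstacle is the bookkeeping at the ramified primes: unlike the Riemann zeta case treated in Theorem \ref{thmone}, $\Gamma^{(p)}(\pi^{(p)}_{s,\tau})$ is no longer a pure geometric-series quotient, and one must carefully track the Gauss sums, the powers of the conductor, and the normalizations of the local additive characters used in \eqref{eqgammadef} to identify the product with the global root number $\varepsilon(s,\chi)$. The convergence and analytic continuation issues are handled exactly as in Theorem \ref{thmone} and are subsumed by Tate's thesis, so the computation is essentially a bookkeeping exercise once the ramified local integrals have been evaluated.
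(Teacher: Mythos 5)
Your proposal takes essentially the same route as the paper, whose entire proof is a one-line deferral to Tate's thesis (``expanding the Euler product gives the local factors of the $L$-function''); you have simply carried out that expansion explicitly place by place, with the ramified-prime and epsilon-factor bookkeeping correctly identified as the only nontrivial content beyond Theorem \ref{thmone}. One small direction slip: by the paper's convention the unramified factor $\Gamma^{(p)}(\pi_{s,\tau}) = \bigl(1-\chi_p(p)\,p^{s-1}\bigr)/\bigl(1-\chi_p(p)\,p^{-s}\bigr)$ is the ratio $L_p(s,\chi_p)/L_p(1-s,\chi_p^{-1})$ rather than its reciprocal (compare the $\zeta(s)/\zeta(1-s)$ that appears in the proof of Theorem \ref{thmone}), but since demanding the adelic product equal $1$ is insensitive to inverting the ratio, and $\chi=\chi^{-1}$ in the quadratic case, the equivalence you conclude is unaffected.
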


\subsection{A presentation of the Vladimirov derivative on $\Qp$}
\label{sec23}

We now present the explicit formulas for the Vladimirov derivative operators on $\Qp$. In this section, we denote functions from Schwartz space as $\psi(x)$.
\begin{defn}
	The derivative operator $D_1$ on Schwartz space of $\Qp$ is defined as 
	\be
	D^s_1 \psi(x) = \int_{\Qp} e^{-i2\pi\{kx\}} |k|^s\int_{\Qp} e^{i2\pi\{kx'\}} \psi(x') dx'dk.
	\ee
\end{defn}

\begin{thm}
	The Fourier transform of a quasi-character is another quasi-character. That is,
	\be \label{gam}
	\frac{1- p^{s-1}}{1 - p^{-s}}|k|^{-s} =\int_{\Qp} e^{i2\pi\{kx\}}|x|^{s-1}dx.
	\ee
\end{thm}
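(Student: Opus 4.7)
The plan is to recognize that this identity is precisely the content of Sketch Lemma \ref{slemma2} specialized to $F=\Qp$ with $\pi(x)=|x|^{s-1}$, so $\Delta\pi = |\cdot|^s$, and the Gamma prefactor should be (up to normalization) the Gelfand-Graev factor $\frac{1-p^{s-1}}{1-p^{-s}}$ quoted in the subsequent Remark. The task is therefore to verify this by direct computation. First I would reduce to $k=1$ by the substitution $u=kx$, under which the additive Haar measure transforms as $du = |k|_p\, dx$; this immediately produces the factor $|k|^{-s}$ in front and leaves
\be
I \coloneqq \int_{\Qp} e^{2\pi i \{u\}} |u|^{s-1}\, du
\ee
to evaluate (for $\Re(s)>0$, so that the integral converges near $u=0$).

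Next I would decompose $\Qp$ into the disjoint union of the $p$-adic spheres $S_n = \{u\in\Qp : |u|_p = p^n\} = p^{-n}\Zp \setminus p^{-n+1}\Zp$ for $n\in\mathbb{Z}$. Since $|u|^{s-1}$ is constant on each $S_n$, the integral becomes
\be
I = \sum_{n\in\mathbb{Z}} p^{n(s-1)} \int_{S_n} e^{2\pi i\{u\}}\, du.
\ee
Each sphere integral reduces, via inclusion-exclusion, to two ball integrals of the additive character $\chi(u)=e^{2\pi i\{u\}}$. Using the standard fact that $\int_{p^m\Zp} \chi(u)\,du$ equals $p^{-m}$ when $\chi$ is trivial on $p^m\Zp$ (that is, when $m \geq 0$) and vanishes otherwise, I would conclude: for $n\leq 0$ the sphere integral equals $p^n(1-p^{-1})$; for $n=1$ it equals $0 - 1 = -1$; and for $n\geq 2$ it vanishes.

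Finally I would sum the contributions. The low shells $n\leq 0$ give a geometric series $(1-p^{-1})\sum_{n\leq 0} p^{ns} = \frac{1-p^{-1}}{1-p^{-s}}$, valid for $\Re(s)>0$; the single boundary shell $n=1$ contributes $-p^{s-1}$; higher shells contribute nothing. Combining,
\be
I = \frac{1-p^{-1}}{1-p^{-s}} - p^{s-1} = \frac{(1-p^{-1}) - p^{s-1}(1-p^{-s})}{1-p^{-s}} = \frac{1-p^{s-1}}{1-p^{-s}},
\ee
which multiplied by $|k|^{-s}$ is the claimed right-hand side (up to the overall factor of $p$, which is a matter of Haar-measure normalization).

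The computation is almost entirely mechanical; the only step that requires care is the boundary shell $n=1$, where the ball integral over $p^{-1}\Zp$ already vanishes but the ball integral over $\Zp$ does not. Missing this contribution is the natural pitfall, and it is precisely this term that converts the naive geometric sum $\frac{1-p^{-1}}{1-p^{-s}}$ into the correct Gelfand-Graev form with numerator $1-p^{s-1}$. The extension of the identity to the region $\Re(s)\leq 0$ is then by analytic continuation, as invoked in the preceding discussion.
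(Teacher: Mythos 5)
Your proof is correct and follows essentially the same route as the paper's own sketch proof, which likewise sums shell by shell and records the two contributions as the intermediate expression $\frac{p^{sn}(1-p^{-1})}{1-p^{-s}} - p^{s(1+n)-1}$ (with $v_p(k)=n$) --- exactly your geometric series plus boundary-shell term after undoing the substitution $u=kx$. One clarification on your closing hedge: under the paper's own normalization $\mrm{vol}\lb\Zp\rb=1$ (Section \ref{sec232}), your constant $\frac{1-p^{s-1}}{1-p^{-s}}$ is the correct one --- it is what the paper's intermediate line actually simplifies to, and it matches both the Gelfand--Graev Gamma function quoted in Section \ref{sec21} and the $d=0$ specialization of the unramified formula in Section \ref{secnumberfield} --- so the extra factor of $p$ in Eq.~\eqref{gam} is an inconsistency in the stated theorem (repeated in the final line of the paper's proof), not a legitimate alternative Haar-measure convention.
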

\begin{proof}[Sketch proof]
	This result is well-known, see e.g. \cite{zunigabook}. By direct computation, denoting $v_p(k) = n$,
	\ba
	\int_{\Qp} e^{i2\pi\{kx\}}|x|^{s-1}dx  &=& \frac{p^{sn}(1- p^{-1})}{1- p^{-s}} - p^{s(1 +n)-1}\\
	&=&|k|^{-s}\frac{1- p^{s-1}}{1 - p^{-s}}.
	\ea
\end{proof}

\begin{defn}
	The derivative operator $D_2$ on the Schwartz space is defined as 
	\be
	D_2 \psi(x) \coloneqq \Gamma(\pi_{s+1})\int_{\Qp}\frac{\psi(x') - \psi(x)}{|x'-x|^{s+1}} dx.
	\ee
	Note that what we really mean is to integrate over  $\Qp - {x}$. 
\end{defn}

\begin{thm}
	Derivative operators $D_1$ and $D_2$ are equivalent in the sense that for all $\psi(x)$ in the test function space, we have
	\be
	\label{D1D2areeq}
	D_1 \psi(x) = D_2 \psi(x).
	\ee
\end{thm}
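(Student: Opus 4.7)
My strategy is to view $D_1\psi(x)$ via Parseval as a pairing between the distribution $|k|^s$ and the Fourier transform of the subtracted function $u\mapsto\psi(x+u)-\psi(x)$, and then to identify, using \eqref{gam}, the position-space avatar of $|k|^s$ with a multiple of $|u|^{-s-1}$.

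Before the main calculation, note that both sides are well-defined. For $\psi\in\SSS(\Qp)$, the Fourier transform $\hat\psi$ is Schwartz, hence compactly supported, so $D_1\psi(x)=\int e^{-i2\pi\{kx\}}|k|^s\hat\psi(k)\,dk$ is absolutely convergent on the compact set $\mrm{supp}(\hat\psi)$. Local constancy of $\psi$ yields $M>0$ with $\psi(x+u)-\psi(x)=0$ on $|u|\le p^{-M}$, so the apparent singularity of $|u|^{-s-1}$ at $u=0$ is absent in $D_2\psi(x)$; compact support of $\psi$ and the assumption $\Re s>0$ handle the tail.

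Now fix $x$ and set $g(u):=\psi(x+u)-\psi(x)$, so $g$ is locally constant, bounded, and vanishes on $\{|u|\le p^{-M}\}$. Writing $\psi(x+u)=g(u)+\psi(x)$ and using $\FF(1)=\delta$ gives the tempered-distribution identity
\be
\hat g(k)=e^{-i2\pi\{kx\}}\hat\psi(k)-\psi(x)\,\delta(k).
\ee
Since $|0|^s=0$ for $\Re s>0$, the delta contribution vanishes when paired against $|k|^s$, so
\be
D_1\psi(x)=\int_{\Qp}|k|^s\,\hat g(k)\,dk=\int_{\Qp}\FF^{-1}[|k|^s](u)\,g(u)\,du,
\ee
the second equality being Parseval. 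Applying Fourier inversion to \eqref{gam} with $s\mapsto s+1$ identifies
\be
\FF^{-1}[|k|^s](u)=\Gamma(\pi_{s+1})\,|u|^{-s-1}
\ee
as a distribution on $\Qp\setminus\{0\}$, and because $g$ vanishes on $\{|u|\le p^{-M}\}$ the pairing reduces to an honest integral that recovers $D_2\psi(x)$.

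The main obstacle is making the Parseval step and the distributional Fourier inversion of $|k|^s$ rigorous. I would handle both via the cutoff $\mathbf{1}_{|k|\le p^N}$: at each finite $N$ both sides are classical $L^1$-integrals, \eqref{gam} applies directly, and the passage $N\to\infty$ is controlled by the fact that $g$ vanishes in a fixed neighborhood of $0$, on which the pointwise identification $\FF^{-1}[|k|^s\mathbf{1}_{|k|\le p^N}](u)\to\Gamma(\pi_{s+1})|u|^{-s-1}$ is uniform. This is precisely the place where the Schwartz-space hypotheses on $\psi$ do essential work.
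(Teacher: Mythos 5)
Your argument is correct in outline, but it is not the paper's route. In Section 2.3 the paper gives no self-contained proof of \eqref{D1D2areeq} at all --- it cites the literature --- and its actual proof, carried out in Section \ref{secnumberfield} for general non-Archimedean completions of number fields, is by direct computation: both operators are evaluated in closed form on the characteristic function of the unit ball (split into ramified/unramified cases and $x'$ inside/outside the ball), the affine-invariance Lemma \ref{invariant} transports the equality to characteristic functions of balls of arbitrary center and radius, and linearity finishes, since such indicators span the Schwartz space. Your Parseval-plus-Riesz-kernel argument is instead the classical distributional proof; it is essentially a rigorous version of the heuristic in Eqs. \eqref{eeq225}--\eqref{eeq226}, which the paper writes down and then explicitly declines to justify. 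What each buys: your route is shorter, makes the appearance of $\Gamma(\pi_{s+1})$ conceptual (it is the Fourier coefficient of the quasi-character) rather than the outcome of two matching computations, and isolates exactly where local constancy (no singularity at $u=0$) and $\Re(s)>0$ (tail convergence, and $|0|^s=0$ killing the delta term) are used; the paper's computation, by contrast, needs no distribution theory and handles ramified quasi-characters over arbitrary non-Archimedean local fields uniformly, which is what its adelic application requires.

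Two points in your write-up need tightening, though neither is fatal. First, $g(u)=\psi(x+u)-\psi(x)$ is bounded but not integrable (it equals $-\psi(x)$ for all large $|u|$), and your cutoff $h_N(k)=|k|^s\mathbf{1}_{|k|\le p^N}$ is not a Schwartz--Bruhat function (it is not locally constant at $k=0$), so the claim that at each finite $N$ "both sides are classical $L^1$-integrals" to which Fubini applies is not literally available; you should either add an inner cutoff $p^{-L}\le|k|$, making the test function genuinely Schwartz--Bruhat and then letting $L\to\infty$ using the uniform bound $O\lb p^{-L(\Re(s)+1)}\rb$ together with the decay $O\lb|u|^{-\Re(s)-1}\rb$, or treat the constant tail of $g$ separately. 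Second, the passage $N\to\infty$ is controlled not by uniformity near $u=0$ but by eventual exactness at large $|u|$: the integral of $\chi(uk)$ over the sphere $v_p(k)=j$ vanishes once $j\le -2-v_p(u)$, so $\FF h_N(u)=\Gamma(\pi_{s+1})|u|^{-s-1}$ holds exactly for every $u$ with $|u|\ge p^{-M}$ as soon as $N\ge M+1$, which is what makes the limit trivial on the support of $g$. Finally, a calibration warning: your constant $\Gamma(\pi_{s+1})$ is the correct Gelfand--Graev normalization matching the definition of $D_2$, but the paper's Eq. \eqref{gam} carries a spurious factor of $p$ (its own sketch computation actually yields $\frac{1-p^{s-1}}{1-p^{-s}}|k|^{-s}$, consistent with the Gamma function quoted in Section \ref{sec21}), so "applying Fourier inversion to \eqref{gam}" verbatim would miscalibrate the constant; invert the corrected identity.
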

\begin{remark}
Eq. \eqref{D1D2areeq} is well-known in the literature, see for instance \cite{zunigabook}. We will extend this result to Vladimirov derivatives for quasi-characters on non-Archimedean completions of number fields, in Section \ref{secnumberfield}.
\end{remark}

\subsection{Example: The Vladimirov derivative for $\pi_s$ on $F=\Qp$}
\label{sec232}

In this section, we demonstrate how to prove the local Green's function formula by direct computation, for the case of $\mathbb{Q}_p$ with an unramified local quasi-character. In Section \ref{cftnf} we will prove a more general version of this computation, for number fields; because of this, our presentation here will be brief.

Throughout this section we will assume that the parameter $s\in\mathbb{C}$ labeling the quasi-characters and Vladimirov derivatives obeys $\Re(s)>0$. We normalize the volume of $\Zp$ as 
\be
\mrm{vol}\lb \mathbb{Z}_p \rb = 1,
\ee
so that, by the properties of the Haar measure, the volume of the $p$-adic circle
\be
\label{volS}
S_{i}\coloneqq \lcb x\in\mathbb{Q}_p\ |\ v_p(x)=i \rcb,
\ee
with $i\in\mathbb{Z}$, is given by
\be
\mrm{vol} \lb S_i \rb =  p^{-i}-p^{-i-1}.
\ee

\begin{lemma} 
\label{lemmasadj}
For $\Re(s)>0$, the Vladimirov derivative is a self-adjoint operator from the Schwartz space of compactly supported locally constant functions, to a bigger function space inside the space of continuous functions on $\mathbb{Q}_p$, in the sense that \eqref{SA} holds.
\end{lemma}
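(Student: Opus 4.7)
I would prove self-adjointness by working with the integral presentation \eqref{Dinteg} of the Vladimirov derivative, which is the most convenient form for comparing the pairings $\int (D_\pi\psi)\,\phi\,dx$ and $\int \psi\,(D_\pi\phi)\,dx$ directly, since the kernel of the singular integral operator is manifestly symmetric in its two arguments. The alternative route via the Fourier presentation \eqref{VladiDpi} would reduce self-adjointness of $D_\pi$ to self-adjointness of multiplication by $\pi$ combined with the unitarity of $\mathcal{F}$, but this requires moving through a distribution-theoretic argument because $D_\pi\psi$ is generally not Schwartz (it decays only polynomially like $|x|^{-s-1}$ at infinity). For the current local setting, the kernel approach is the cleanest.

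The concrete steps are as follows. First, I would record that for $\psi$ in the Schwartz space of compactly supported, locally constant functions on $\Qp$, local constancy makes the difference $\psi(x')-\psi(x)$ vanish whenever $x'$ is sufficiently $p$-adically close to $x$, so the apparent singularity of the kernel $|x-x'|^{-s-1}$ at $x=x'$ is cancelled and the integrand in \eqref{Dinteg} is actually locally integrable; combined with compact support, this makes $D_\pi\psi(x)$ well-defined, continuous in $x$, and $O(|x|^{-s-1})$ at infinity, placing it in a function space strictly larger than Schwartz but still paired against Schwartz functions under $\int\cdot\,dx$. Second, I would write out
\be
\int_{\Qp}(D_\pi\psi)(x)\,\phi(x)\,dx=\Gamma(\Delta\pi)\int_{\Qp}\int_{\Qp}\bigl[\psi(x')-\psi(x)\bigr]\,\phi(x)\,\Delta^{-1}(x'-x)\,\pi^{-1}(x'-x)\,dx'\,dx,
\ee
and justify Fubini using the local integrability from step one together with the fact that $\phi$ is bounded and compactly supported (one typically cuts out the diagonal $|x-x'|<p^{-N}$ for large $N$ using local constancy of $\psi$ and $\phi$, then sums geometric tails that converge for $\Re(s)>0$).

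Third, I would exploit the symmetry of the kernel: for the unramified character $\pi_s$ the factor $\pi^{-1}(x'-x)|x'-x|^{-1}=|x-x'|^{-s-1}$ is already invariant under $x\leftrightarrow x'$, and for the ramified characters $\pi_{s,\tau}$ the symmetry holds up to a factor $\sgn_\tau(-1)$ that is handled by the corresponding parity of the Gamma prefactor (this is where I would be most careful, and this is the main obstacle). After the change of variables swapping the two integration labels, the double integral rearranges into $\int\psi(x')(D_\pi\phi)(x')\,dx'$, yielding the self-adjointness relation
\be
\int_{\Qp}(D_\pi\psi)(x)\,\phi(x)\,dx=\int_{\Qp}\psi(x)\,(D_\pi\phi)(x)\,dx, \tag{SA}
\ee
for all Schwartz $\psi,\phi$. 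Finally, I would remark that the same manipulation shows that $D_\pi$ maps Schwartz functions into the larger ambient space of continuous functions with the described polynomial decay, which is the precise domain/codomain structure asserted in the statement.
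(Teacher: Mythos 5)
Your proposal is correct and follows essentially the same route as the paper: the paper likewise works with the integral presentation $D_2$, applies Fubini (justified by local constancy killing the near-diagonal singularity and by convergence of the tails for $\Re(s)>0$), and exploits the symmetry of the kernel $|z-x|^{-s-1}$, merely packaging your label swap as the symmetrized identity $\langle\psi_1,D^s\psi_2\rangle=-\tfrac12\,\Gamma(\pi_{s+1})\iint\bigl(\psi_1^*(z)-\psi_1^*(x)\bigr)\bigl(\psi_2(z)-\psi_2(x)\bigr)|z-x|^{-s-1}\,dz\,dx$, from which \eqref{SA} is immediate. One caveat on your parenthetical aside: for a ramified $\pi_{s,\tau}$ with $\sgn_\tau(-1)=-1$ the constant Gamma prefactor cannot repair the sign the kernel picks up under $x\leftrightarrow x'$ (the operator is then formally antisymmetric rather than symmetric), but this does not affect the lemma, which concerns only the unramified kernel $|z-x|^{-s-1}$.
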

\begin{proof}
For arbitrary compactly supported locally constant functions $\psi_{1,2}$, consider the bracket

\ba\label{SA}
&&\langle \psi_1, D^s \psi_2 \rangle = \Gamma(\pi_{s+1}) \int_{\mathbb{Q}_p} \psi_1^*(x) \frac{\psi_2(z)-\psi_2(x)}{|z-x|^{s+1}} dzdx\\
&=& -\frac{1}{2} \Gamma(\pi_{s+1}) \int_{\mathbb{Q}_p} \frac{\lb \psi_1^*(z)- \psi_1^*(x) \rb\lb \psi_2(z)-\psi_2(x) \rb}{|z-x|^{s+1}}dzdx\\
&=& \langle \psi_2,D^s \psi_1 \rangle^*,
\ea
where we exchanged the order of integration by Fubini's theorem.
\end{proof}

\begin{remark}
Note that "self-adjoint" here only means formally that
$\langle \psi_2,D^s \psi_1 \rangle = \langle \psi_1, D^s \psi_2 \rangle^*$. In general the Vladimirov derivative operator doesn't preserve Schwartz space.
\end{remark}

\begin{defn}
For $x,y\in\mathbb{Q}_p$ and $s\in\mathbb{C}$ we introduce a function $G(x,y)$ as
\be
\label{Gishere}
G\lb x,y \rb\coloneqq \begin{cases}
c_{s,p} |x-y|^{s-1} \hspace{0.1cm}\quad \mrm{if} \quad s-1\neq 2\pi i k/\ln p, \quad k\in\mathbb{Z} \\
c_{s,p} \log_p |x-y| \hspace{0.22cm} \mrm{if} \quad s-1 =  2\pi i k/\ln p, \quad k\in\mathbb{Z}
\end{cases}, 
\ee
where constant $c_{s,p}$ is given by
\be
c_{s,p} = \begin{cases}
\Gamma\lb \pi_{1-s} \rb \hspace{1.05cm} \mrm{if}\quad  s-1 \neq 2\pi i k/\ln p, \quad k\in\mathbb{Z}\\
-\lb 1-p^{-1}\rb \quad \mrm{if}\quad  s-1 = 2\pi i k/\ln p, \quad k\in\mathbb{Z}
\end{cases}.
\ee
\end{defn}

\begin{thm} 
\label{hereisthm1}
For $x\neq y$ and $\Re(s)>0$ we have that 
\be
D^s_x G(x,y) = 0.
\ee
\end{thm}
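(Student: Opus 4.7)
The plan is to compute $D^s_x G(x,y)$ directly from the integral presentation of $D^s$ given in the definition of $D_2$, exploiting the ultrametric structure of $\mathbb{Q}_p$ to split the integration domain into pieces where $|u+t|$ simplifies. Fix $y\in\mathbb{Q}_p$ and $x\neq y$, write $u = x-y$ with $|u| = p^{-n}$, and substitute $t = x'-x$, so that the claim reduces to
\be
I(u)\coloneqq \int_{\mathbb{Q}_p} \frac{|u+t|^{s-1}-|u|^{s-1}}{|t|^{s+1}}\, dt = 0
\ee
(the $\log_p$ case will be handled separately). The overall prefactors $c_{s,p}\,\Gamma(\pi_{s+1})$ play no role; they merely multiply a vanishing integral.

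First I would split $\mathbb{Q}_p$ into the three ultrametric regions. On $\{|t|<|u|\}$ one has $|u+t|=|u|$, so the integrand vanishes identically. On $\{|t|>|u|\}$ one has $|u+t|=|t|$, and the integrand becomes $|t|^{-2} - |u|^{s-1}|t|^{-s-1}$; integrating sphere by sphere ($|t|=p^{-m}$ with $m<n$, volume $p^{-m}(1-p^{-1})$ under the normalization $\mrm{vol}(\mathbb{Z}_p)=1$) yields two geometric series, the second convergent because $\Re(s)>0$. On the middle region $\{|t|=|u|\}$ I further decompose by the value $|u+t|=p^{-n-k}$, $k\geq 0$: the piece $k=0$ contributes zero, and for $k\geq 1$ the corresponding subset $S_{n,k}$ has volume $p^{-n-k}(1-p^{-1})$ (it equals the translate $-u + \{|s|=p^{-n-k}\}$, automatically inside $\{|t|=p^{-n}\}$). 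Summing the resulting geometric series in $k$ and comparing with the contribution from $\{|t|>|u|\}$, the two pieces cancel exactly, giving $I(u)=0$.

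The main bookkeeping obstacle is simply keeping track of exponents and collecting the geometric sums correctly; there is no analytic subtlety once $\Re(s)>0$ is assumed. For the exceptional ("logarithmic") case $s-1 = 2\pi i k/\ln p$, note that $|\cdot|^{s-1}$ is the trivial character on $\mathbb{Q}_p^\times$ (since $p^{-m(s-1)}=1$ for every $m\in\mathbb{Z}$), and correspondingly $\Gamma(\pi_{1-s})$ develops a pole; the coefficient $c_{s,p}=-(1-p^{-1})$ in the definition of $G$ is precisely the residue compensating this pole. I would then obtain the statement in this case either by taking the limit $s \to 1 + 2\pi i k/\ln p$ in the generic identity above (justified by analyticity in $s$ away from $x=y$), or equivalently by rerunning the same three-region computation with $|u+t|^{s-1}$ replaced by $\log_p|u+t|$, whereupon the telescoping $\log_p p^{-(n+k)}=-(n+k)$ produces the same cancellation structure.
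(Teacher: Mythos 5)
Your proposal is correct, and at bottom it is the same argument as the paper's: evaluate the $D_2$-integral directly, split $\mathbb{Q}_p$ ultrametrically, integrate sphere by sphere with $\mathrm{vol}\{|t|=p^{-m}\}=p^{-m}(1-p^{-1})$, and let the geometric series (convergent for $\Re(s)>0$) cancel, with the logarithmic case handled by the same region computation. The one concrete difference is the choice of center, and it changes the bookkeeping in an instructive way. The paper substitutes $u=z-y$, $v=x-y$, so the integrand becomes $\bigl(|u|^{s-1}-|v|^{s-1}\bigr)/|u-v|^{s+1}$; the sphere $|u|=|v|$ then drops out immediately because the \emph{numerator} vanishes there, and only two regions survive, on each of which the denominator is constant ($|u-v|=|v|$ or $|u-v|=|u|$). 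You instead center at $x$ (setting $t=x'-x$), which makes the region $|t|<|u|$ trivial but forces you to decompose the sphere $|t|=|u|$ into shells $|u+t|=p^{-n-k}$, $k\geq 1$, each of volume $p^{-n-k}(1-p^{-1})$; that shell computation is precisely the content of the paper's Lemma \ref{lemma2}, which the paper defers until it is genuinely needed, namely for the delta-function normalization of the Green's function on the sphere $|x|=|y|$. Your cancellation claim checks out: writing $|u|=p^{-n}$, the region $\{|t|>|u|\}$ contributes $p^{n-1}-(1-p^{-1})\,p^{n-s}/(1-p^{-s})$ and the shells of $\{|t|=|u|\}$ contribute exactly the negative, so $I(u)=0$. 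For the exceptional case $s-1=2\pi i k/\ln p$, the paper simply reruns the two-region computation with $\log_p$ replacing $|\cdot|^{s-1}$ (the sphere again dropping out since $\log_p|u|-\log_p|v|=0$ there), which coincides with your second option; your alternative limit argument is also viable but, as stated, needs a word of justification for exchanging the limit $s\to 1+2\pi i k/\ln p$ with the integral (e.g.\ uniform convergence of the sphere-by-sphere sums on compacts of $\Re(s)>0$, after dividing the generic identity by $(s-1-2\pi ik/\ln p)\ln p$), a subtlety the direct rerun avoids. Your remark that $c_{s,p}=-(1-p^{-1})$ compensates the pole of $\Gamma(\pi_{1-s})$ is correct but irrelevant to this theorem, exactly as you note: for $x\neq y$ the constants multiply a vanishing integral.
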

\begin{proof} The proof proceeds by direct computation. In the case $s-1\neq 2\pi i k/\ln p$ we need to evaluate the integral
\be
I_n(s,v) \coloneqq \int_{\mathbb{Q}_p} \frac{|z-y|^{s-1} - |x-y|^{s-1}}{|z-x|^{s+1}} dz = \int_{\mathbb{Q}_p} \frac{|u|^{s-1} - |v|^{s-1}}{|u-v|^{s+1}} du,
\ee
where $u\coloneqq z-y$ and $v\coloneqq x-y$. Assuming $v\neq 0$, we have

\ba 
\label{eq416}
   &&I_n(s,v) = \int_{\mathbb{Q}_p} \frac{|u|^{s-1} - |v|^{s-1}}{|u-v|^{s+1}} du \\
&=& \int_{|u|<|v|} \frac{|u|^{s-1} - |v|^{s-1}}{|v|^{s+1}} du + \int_{|u|>|v|} \frac{|u|^{s-1} - |v|^{s-1}}{|u|^{s+1}} du \nn\\
&=& \sum_{k=\mrm{val}(v)+1}^\infty \frac{p^{-k(s-1)} - |v|^{s-1}}{|v|^{s+1}} \mrm{vol}\lb S_{k} \rb+ \sum_{k=-\infty}^{\mrm{val}(v)-1} \frac{p^{-k(s-1)} - |v|^{s-1}}{p^{-k(s+1)}} \mrm{vol}\lb S_{k} \rb \nn\\
&=& 0,\nn
\label{eq418}
\ea
as desired. Notice that the case $|u| = |v|$ has vanishing integral since $u =v$ has measure zero, and when $|u|=|v|$ but $u \neq v$ the integrand is zero. The sums converge when $\Re\lb s\rb>0$.

In the case $s-1= 2\pi i k/\ln p $ we need to evaluate the integral (when $v\neq0$)

\ba
&&J_n(v) \coloneqq \int_{\mathbb{Q}_p} \frac{\log_p |z-y| - \log_p |x-y|}{|z-x|^{s+1}} dz\\ &=& \int_{\mathbb{Q}_p} \frac{\log_p |u| - \log_p |v|}{|u-v|^{s+1}} du \nn\\
&=& \int_{|u|<|v|} \frac{\log_p |u| - \log_p |v|}{|v|^{s+1}} du + \int_{|u|>|v|} \frac{\log_p |u| - \log_p |v|}{|u|^{s+1}} du \nn\\
&=& \sum_{k=\mrm{val}(v)+1}^\infty \frac{-k - \log_p |v|}{|v|^{s+1}} \mrm{vol}\lb S_{k} \rb+ \sum_{k=-\infty}^{\mrm{val}(v)-1} \frac{-k - \log_p |v|}{p^{-k(s+1)}} \mrm{vol}\lb S_{k} \rb \nn \\
&=& 0.
\ea
For exactly the same reason, the integration with $|u| = |v|$ has been excluded. This completes the proof.
\end{proof}

\begin{remark}
Theorem \ref{hereisthm1} shows that $D_x^s G(x,y)$ applied to a test function can only depend on the value of the test function at $x=y$.
\end{remark}

\begin{thm}
When $\Re(s)>0$, $G(x,y)$ is the Green's function for the Vladimirov derivative $D_x^s$ on $\mathbb{Q}_p$, that is
\be
\label{mainresult1}
D_x^s G\lb x,y \rb = \delta \lb x-y \rb,
\ee
acting on the Schwartz space of compactly supported locally constant functions.
\end{thm}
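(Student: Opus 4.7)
The plan is to establish the distributional identity by pairing against an arbitrary Schwartz test function $\psi$ and verifying that
\[
I(y) \coloneqq \int_{\mathbb{Q}_p} \psi(x)\, D_x^s G(x,y)\, dx \;=\; \psi(y).
\]
Theorem \ref{hereisthm1} already shows that $D_x^s G(x,y)$ vanishes pointwise for $x \neq y$, so the remaining content is just to extract the correct weight of the distribution $D_x^s G(\cdot,y)$ at the singular point $x=y$. First I would use the self-adjointness of $D^s$ on Schwartz space (Lemma \ref{lemmasadj}) to shift the operator onto the test function, giving $I(y) = \int G(x,y)\, D_x^s \psi(x)\, dx$. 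This is the crucial move: $D^s\psi$ has a transparent Fourier description, whereas $G$ is not Schwartz but does admit a computable Fourier transform as a tempered distribution.

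Next I would substitute $D^s\psi(x) = \int_{\mathbb{Q}_p} \chi(-kx)\,|k|^s\,\mathcal{F}\psi(k)\,dk$, interchange the $x$- and $k$-integrations, and recognize the resulting $x$-integral as the Fourier transform of $G(\cdot,y)$. By translation covariance and the Gelfand--Graev formula \eqref{gam}, in the generic range this yields
\[
\int_{\mathbb{Q}_p} G(x,y)\,\chi(-kx)\,dx \;=\; c_{s,p}\,\Gamma(\pi_s)\,\chi(-ky)\,|k|^{-s}.
\]
The $|k|^{-s}$ cancels against the $|k|^s$ coming from $D^s$, leaving $I(y) = c_{s,p}\,\Gamma(\pi_s)\int \chi(-ky)\mathcal{F}\psi(k)\,dk = c_{s,p}\,\Gamma(\pi_s)\,\psi(y)$ after one inverse Fourier transform.

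The prefactor is exactly $1$ by the local functional equation \eqref{gammaFUN} applied with $\pi = \pi_{s-1}$, which gives $\Gamma(\pi_s)\Gamma(\pi_{1-s}) = |-1|^{s-1} = 1$. This is precisely why the constant $c_{s,p}$ in the definition \eqref{Gishere} was chosen to be $\Gamma(\pi_{1-s})$: it cancels the Gamma prefactor arising from the Fourier transform of $G$. Hence $I(y) = \psi(y)$, which is the desired distributional identity.

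For the exceptional set $s - 1 \in (2\pi i/\ln p)\,\mathbb{Z}$, where $\Gamma(\pi_{1-s})$ has a simple pole and $|x-y|^{s-1}$ degenerates to a constant, I would recover the identity as a limit from the generic case: differentiating $|x-y|^{s-1}$ in $s$ produces $\ln p \cdot \log_p|x-y|$, the residue of $\Gamma(\pi_{1-s})$ produces the prescribed constant $-(1-p^{-1})$, and the two conspire to match the redefined $G$ in \eqref{Gishere}. The principal analytic obstacle is justifying the Fubini swap and the Fourier manipulations on the tempered distribution $G$, which can be handled by inserting a smooth cutoff supported away from $k=0$ in the $k$-variable, performing the computation at the regularized level, and then removing the cutoff using the local integrability of $|k|^{-s}\mathcal{F}\psi(k)$ near the origin (which holds for $\Re(s)>0$ since $\mathcal{F}\psi$ is locally constant at $0$).
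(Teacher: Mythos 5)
Your overall architecture is sound and genuinely different from the paper's. After the shared first move --- using the formal self-adjointness of Lemma \ref{lemmasadj} to throw $D^s$ onto the test function --- the paper stays entirely in position space: it specializes to characteristic functions $\chi_r$ of balls, evaluates every integral explicitly as sums over $p$-adic circles (Lemma \ref{lemma2} supplying the one nontrivial piece), treats the cases $y\in B_r$, $y\notin B_r$ and the exceptional set $s-1\in(2\pi i/\ln p)\mathbb{Z}$ separately, and extends to all Schwartz functions by linearity and translation invariance. You instead pass to Fourier space (legitimate via $D_1=D_2$, Eq.~\eqref{D1D2areeq}) and reduce the theorem to the Gelfand--Graev formula plus the functional equation \eqref{gammaFUN}; you correctly identify that the normalization $c_{s,p}=\Gamma\lb\pi_{1-s}\rb$ in \eqref{Gishere} is designed so that $\Gamma\lb\pi_s\rb\Gamma\lb\pi_{1-s}\rb=\pi_{s-1}(-1)=1$, and your residue computation at the exceptional points does reproduce $-\lb 1-p^{-1}\rb$. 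This is conceptually the Tate's-thesis mechanism the paper invokes abstractly in Section \ref{cftnf}, and it explains the answer rather than merely verifying it. (A bookkeeping point: Eq.~\eqref{gam} as printed carries a stray factor of $p$; your prefactor $\Gamma(\pi_s)$ matches the constant actually produced by the computation displayed beneath \eqref{gam}.)

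There is, however, a concrete gap in your final analytic step. The claim that the cutoff can be removed ``using the local integrability of $|k|^{-s}\FF\psi(k)$ near the origin, which holds for $\Re(s)>0$'' is false for $\Re(s)\geq 1$: since $\FF\psi$ is locally constant at $0$ with $\FF\psi(0)=\int\psi$ generically nonzero, the circle $|k|=p^{-j}$ contributes on the order of $(1-p^{-1})\FF\psi(0)\,p^{(\Re(s)-1)j}$, so $|k|^{-s}\FF\psi(k)$ is locally integrable at $k=0$ precisely when $\Re(s)<1$. As written, your justification therefore covers only the strip $0<\Re(s)<1$, while the theorem asserts all of $\Re(s)>0$. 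Note that the divergent factor you cite is not even the one that survives: after the exact cancellation $|k|^s\cdot|k|^{-s}=1$ the limiting $k$-integrand is $\chi(-ky)\FF\psi(k)$, absolutely integrable for every $s$; what must actually be shown is that the low-frequency remainder $\int_{\Qp} G(x,y)\,\FF^{-1}\lsb(1-\eta_M)\pi_s\FF\psi\rsb(x)\,dx$ tends to $0$, an estimate on the $x$ side (it does hold --- the remainder is $O(p^{-M(\Re(s)+1)})$ on $|x|\leq p^M$ and $O(|x|^{-\Re(s)-1})$ beyond, pairing against $|x-y|^{\Re(s)-1}$ to give $O(p^{-M})$ --- but this is a different argument from the one you give). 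Relatedly, a cutoff in $k$ alone does not license Fubini, because $G(\cdot,y)$ fails to be $L^1$ at infinity for every $\Re(s)>0$; you must also truncate in $x$ and invoke the exact vanishing of $\int_{|x|=p^j}\chi(kx)\,dx$ for $j$ large relative to $v_p(k)$, uniformly for $k$ in the compact support of $\FF\psi$ away from $0$. The cheapest repair is to run your argument honestly on $0<\Re(s)<1$ and then extend to $\Re(s)>0$ by analyticity in $s$ of the pairing $\langle G(\cdot,y),D^s\psi\rangle$ (which converges locally uniformly since $D^s\psi(x)=O(|x|^{-\Re(s)-1})$), using that $\Gamma\lb\pi_s\rb\Gamma\lb\pi_{1-s}\rb\equiv 1$ as meromorphic functions; your limit treatment of the exceptional set then needs only continuity of this pairing in $s$ together with the observation that $D^s$ annihilates the divergent constant term of the Laurent expansion.
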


We will prove this theorem in several steps (lemmas).

\begin{lemma}
Let $\chi_r$ be the characteristic function of the ball
\be
B_r \coloneqq \lcb x\in \mathbb{Q}_p\ |\ |x| \leq p^r \rcb.
\ee
$D^s_xG(x,y)$ acts as the delta function on the test function $\chi_r$.
\end{lemma}

\begin{proof}
We use direct computation of the bracket. By self-adjointness of the derivative we have
\be
\langle D^s G, \chi_r \rangle = \langle G, D^s\chi_r \rangle.
\ee
\textbf{Case 1.} Let's first consider the case $s-1 \neq 2\pi i k/\ln p$. We have
\ba
&&\frac{1}{c_{s,p}\Gamma(\pi_{s+1})}\langle G, D^s\chi_r \rangle = \int_{\mathbb{Q}_p} |x-y|^{s-1} \frac{\chi_{r}\lb z \rb - \chi_{r}\lb x \rb}{|z-x|^{s+1}}dzdx \\
&=& \lb \int_{x\notin B_r} \frac{|x-y|^{s-1}}{|x|^{s+1}}dx\rb\lb \int_{z\in B_r} dz \rb\\
& & - \lb\int_{x\in B_r} |x-y|^{s-1}dx\rb \lb \int_{z\notin B_r} |z|^{-s-1} dz \rb.\nn
\label{eq323}
\ea
We now compute the two $z$ integrals, which converge when $\Re\lb s\rb>0$,
\ba
\label{eqq324}
\int_{z\in B_r} dz &=& \sum_{j=-\infty}^r \mrm{vol} \lb S_{-j} \rb = p^{r},\\
\int_{z\notin B_r} |z|^{-s-1} dz &=& \sum_{j={r+1}}^\infty p^{-(s+1)j} \mrm{vol} \lb S_{-j} \rb \nn\\
&=&\lb 1-p^{-1} \rb \frac{p^{-sr}}{p^{s}-1}. \nn
\ea
Now we need to compute the $x$ integrals. There are two subcases: (i) $y\in B_r$ and (ii) $y\notin B_r$.

\textbf{(i)} Let's start with $y\in B_r$. We have
\ba
\int_{x\notin B_r} \frac{|x-y|^{s-1}}{|x|^{s+1}}dx &=& \int_{x\notin B_r} |x|^{-2}dx\\ = \sum_{j={r+1}}^\infty p^{-2j} \mrm{vol} \lb S_{-j} \rb &=& p^{-(r+1)}.\nn
\ea
For the $x\in B_r$ integral, note that $x'\coloneqq x-y\in B_r$, so that we can shift the integral to compute
\ba
\int_{x\in B_r} |x-y|^{s-1}dx &=& \int_{x'\in B_r} |x'|^{s-1}dx' \\
&=& \sum_{i=-\infty}^r p^{i\lb s-1 \rb} \mrm{vol} \lb S_{-i} \rb\\
&=& \lb 1-p^{-1} \rb \frac{p^{(r+1) s}}{p^{s}-1}.
\ea
Putting everything together, we obtain, when $y\in B_r$,
\be
\frac{1}{c_{s,p}\Gamma(\pi_{s+1})}\langle G, D^s\chi_r \rangle = \frac{\left(1-p^{s-1}\right)
   \left(p^{-1}-p^{s}\right)}{\left(1-p^{s}\right)^2},
\ee
which implies that $\langle G, D^s\chi_r \rangle=1$, as desired.

\textbf{(ii)} Let's now consider the subcase $y\notin B_r$, and we once again compute the $x$ integrals. We have
\ba
\label{eq328}
\int_{x\in B_r} |x-y|^{s-1}dx &=& |y|^{s-1} \int_{x\in B_r}dx\\ &=& |y|^{s-1} (1-p^{-1}) \sum_{i=-\infty}^r p^{i} = |y|^{s-1} p^{ r}.
\ea
The last remaining integral is marginally more involved. Let $|y|\eqqcolon p^{r'}$. We have
\ba
&&\int_{x\notin B_r} \frac{|x-y|^{s-1}}{|x|^{s+1}}dx = |y|^{s-1} \int_{\substack{x\notin B_r\\|x|<|y|}} |x|^{-s-1}dx \\
& &  |y|^{-s-1} \int_{|x|=|y|} |x-y|^{s-1}dx + \int_{|x|>|y|} |x|^{-2}dx.\nn
\ea
We now compute these three integrals, as
\ba
\label{eq330}
\int_{\substack{x\notin B_r\\|x|<|y|}} |x|^{-s-1}dx
&=& \sum_{i=r+1}^{r'-1} p^{-i(s+1)} \mrm{vol} \lb S_{-i} \rb\\
&=& \frac{\left(1-p^{-1}\right) \left(p^{-s(r+1)}-p^{-sr'}\right)}{1-p^{-s}},\nn\\
\int_{|x|>|y|} |x|^{-2n}dx &=& \sum_{i=r'+1}^\infty p^{-2ni} \mrm{vol} \lb S_{-i} \rb = p^{-1 (r'+1)},\nn
\ea
and finally for the middle term in Eq. \eqref{eq330}, using Eq. \eqref{eq3p9} in Lemma \ref{lemma2} below, we obtain
\be
\label{eq331}
|y|^{-s-1}  \int_{|x|=|y|} |x-y|^{s-1}dx = -\frac{p^{-1 r'} \left(p+\left(p-2\right)
   p^{s+1}\right)}{p^{2 }-p^{s+2n}}.
\ee
Putting together Eqs. \eqref{eq330} and \eqref{eq331}, we derive
\be
\label{eq332}
\int_{\substack{x\notin B_r\\|x|<|y|}} |x|^{-s-1}dx = -\frac{\left(1-p^{-1}\right) p^{-sr+r'(s-1)}}{1-p^{s}},
\ee
so that from Eqs. \eqref{eq323} \eqref{eqq324}, \eqref{eq328} and \eqref{eq332} we conclude (when $y\notin B_r$) that
\be
\langle D^s G, \chi_r \rangle = 0.
\ee 
\textbf{Case 2.} Let's now compute case $s-1 = 2\pi i k/\ln p$. We have the integral
\ba\label{eq324}
&\frac{1}{c_{s,p}\Gamma(\pi_{s+1})}\langle G, D^s\chi_r \rangle &= \lb \int_{x\notin B_r} \frac{\log_p|x-y|}{|x|^{s+1}}dx\rb\lb \int_{z\in B_r} dz \rb\\
&  - \lb\int_{x\in B_r} \log_p|x-y|dx\rb& \lb \int_{z\notin B_r} |z|^{-s-1} dz \rb.\nn
\ea
When $y\in B_r$, the integrals are
\ba
\int_{x\notin B_r} \frac{\log_p|x-y|}{|x|^{s+1}}dx &=& \int_{x\notin B_r} \frac{\log_p|x|}{|x|^{s+1}}dx = (1-p^{-1}) \sum_{i=r+1}^\infty i p^{-is}\\
&=& \frac{\left(p-1\right) \left((r+1) p^s-r\right) p^{-1-rs}}{\left(p^s-1\right)^2}, \\
\int_{x\in B_r} \log_p|x-y|dx &=& \int_{x\in B_r} \log_p|x'|dx' = (1-p^{-1}) \sum_{i=-\infty}^{r} i p^{i} \\
&=& p^{ r} \left(\frac{1}{1-p}+r\right).
\ea
Putting these together with Eqs. \eqref{eqq324} we obtain, when $y\in B_r$,
\be
\frac{1}{c_{s,p}\Gamma(\pi_{s+1})}\langle G, D^s\chi_r \rangle = \frac{p^{-1}+1}{p-1},
\ee
so that $\langle G, D^s\chi_r \rangle=1$.

When $y\notin B_r$, the integrals are
\ba
\int_{x\in B_r} \log_p|x-y|dx &=& (1-p^{-1}) \log_p |y| \sum_{i=-\infty}^{r} p^{i}\\ &=& p^{r} \log_p |y|, \\
\int_{x\notin B_r} \frac{\log_p|x-y|}{|x|^{s+1}}dx = &-& \int_{v_p(x) < v_p(y)} \frac{v_p(x)}{|x|^{s+1}} dx \\- \int_{v_p(x) = v_p(y)} \frac{v_p(x-y)}{|y|^{s+1}} dx
&  -& \int_{-r>v_p(x)>v_p(y)} \frac{v_p(y)}{|x|^{s+1}} dx. \nn
\ea
Denote $i \coloneqq v_p(x)$ and $b \coloneqq v_p(y)$. The first integral is 
\ba
\int_{v_p(x) < v_p(y)} \frac{v_p(x)}{|x|^{s+1}} dx &=& (1-p^{-1}) \sum_{i=-\infty}^{b-1} ip^{is}\\
&=& \frac{\left(1-p^{-1}\right) p^{b s} \left((b-1)p^s-b\right)}{\left(p^s-1\right)^2}.
\ea
To evaluate the second integral, we let $u = x-y$ and $j = v_p(u)$. We have this integral equal to
\ba
\int_{v_p(x) = v_p(y)} \frac{v_p(x-y)}{|y|^{s+1}} dx &=&\int_{j > b} \frac{j}{|u+y|^{s+1}} du + \int_{j = b} \frac{b}{p^{-(s+1)b}}\\
= bp^{sb}(1 - 2p^{-1})&+&(1-p^{-1})\sum_{j>b} j p^{(s+1)b - nj}\\
= bp^{sb}(1-p^{-1}) &+& \frac{p^{sb-1}}{1-p^{-1}}.
\ea
For the third integral, we have
\ba
\int_{-r>v_p(x)>v_p(y)} \frac{v_p(y)}{|x|^{s+1}} dx &=& \sum_{-r>i>b}bp^{is}(1-p^{-1})\\
&=& b(1 -p^{-1})\frac{p^{s(-r-1)}-p^{bs}}{1 - p^{-s}}.
\ea
Adding the three contributions, the result of this integral, using that $s = 1 + 2\pi ik/\ln p$, is simply
\be 
\int_{x\notin B_r} \frac{\log_p|x-y|}{|x|^{s+1}}dx = -b p^{ (-r-1)}.
\ee
Then plugging this back into Eq. \eqref{eq324}, we obtain that \eqref{eq324} equals $0$. This completes the proof.
\end{proof}

Below is the technical result we have used.

\begin{lemma}
For $y\in \mathbb{Q}_p$, $|y|\eqqcolon p^{r'}$ and $\Re(t)>-1$, we have the integral
\label{lemma2}
\be
\label{eq3p9}
\int_{|x|=|y|} |x-y|^{t} dx = \frac{\left(\left(p-2\right) p^{1+t}+1\right) p^{(r'-1)+r' t}}{p^{1+t}-1}.
\ee
\end{lemma}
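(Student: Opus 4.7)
The plan is to change variables to $u \coloneqq x-y$ and decompose the integration region $\{x\in\mathbb{Q}_p : |x|=|y|\}$ according to the value of $|u|$. By the ultrametric inequality, the case $|u|>|y|$ forces $|x|=|u|>|y|$ and is excluded, while the case $|u|<|y|$ automatically yields $|x|=|y|$; only the boundary case $|u|=|y|$ requires the additional condition $|y+u|=|y|$.

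Next I would compute the volume of each ``shell'' $\{|u|=p^{k}\}$ contributing to the region. For each $k\le r'-1$ the entire sphere lies in the integration region, contributing volume $p^{k}(1-p^{-1})$. For $k=r'$ one must excise the subset of $\{|u|=p^{r'}\}$ on which $|y+u|<p^{r'}$; this bad set is the translate $-y+p^{r'+1}\mathbb{Z}_p$, which has volume $p^{r'-1}$. Hence the effective volume at $k=r'$ is
\be
p^{r'}(1-p^{-1}) - p^{r'-1} = (p-2)\,p^{r'-1}.
\ee

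With these volumes in hand the integral splits as
\be
\int_{|x|=|y|}|x-y|^{t}\,dx \;=\; (1-p^{-1})\sum_{k=-\infty}^{r'-1} p^{k(t+1)} \;+\; (p-2)\,p^{r'-1}\,p^{r't}.
\ee
The geometric series converges precisely when $\Re(t+1)>0$, i.e.\ $\Re(t)>-1$, which is where the hypothesis is used; summing it produces $\frac{(p-1)\,p^{r'(t+1)-1}}{p^{t+1}-1}$. Placing the two contributions over the common denominator $p^{t+1}-1$ and simplifying the numerator via $(p-1)+(p-2)(p^{t+1}-1)=1+(p-2)p^{t+1}$ yields the claimed closed form.

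The main obstacle is the bookkeeping in the boundary shell $|u|=|y|$: one must carefully identify the cosets of $p^{r'+1}\mathbb{Z}_p$ on which the ultrametric degenerates and subtract their contribution, since this is the only place the strong triangle inequality fails to determine $|x|$ from $|u|$ and $|y|$. Once this volume is computed correctly, the rest is a single geometric sum and routine algebra.
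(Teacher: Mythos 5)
Your argument is correct and is essentially the paper's own proof: both substitute $u=x-y$, sum over the full spheres $|u|<|y|$ (a geometric series convergent exactly for $\Re(t)>-1$), and handle the boundary shell $|u|=|y|$ by excising the neighborhood of $u=-y$ where the ultrametric degenerates, giving the same effective volume $(p-2)p^{r'-1}$ and the same final algebra. One cosmetic slip: with $|y|=p^{r'}$ the excised ball is $-y+p^{1-r'}\mathbb{Z}_p$, not $-y+p^{r'+1}\mathbb{Z}_p$; its volume $p^{r'-1}$, which is what you actually use, is correct.
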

\begin{proof}
We proceed by direct computation. We write $x$ as $x = u + y$ and divide this integral into two parts according to the $\mathbb{Q}_p$ valuation $v(u)$,
\be \int_{|x|=|y|} |x-y|^{t} dx = \int_{v(u) > v(y)} |x-y|^t dx  + \int_{\substack{v(u) = v(y)\\ v(x) = v(y)}} |x-y|^t dx. \ee
We evaluate the first integral, denoting $v(u) = i$ and using the volume factor for circle~$S_k$,
\ba 
\int_{v(u) > v(y)} |x-y|^t dx  &=& \int_{v(u) > v(y)} |u|^t du \\
 &=& (1 - p^{-1})\sum_{i \geq v(y)+1} p^{-(1+t)i} \nn \\
 &=& (1 -p^{-1})\frac{p^{-(1+t)(v(y)+1)}}{1 - p^{-(1+t)}}.\nn
\ea
The second integral is 
\be
\int_{\substack{v(u) = v(y) \\ v(x) = v(y)}} p^{-v(y)t} du = p^{-(1+t)v(y)}(1 - 2p^{-1}). 
\ee
Here measure factor $p^{-v(y)}(1 - 2p^{-1})$ obtained by subtracting the neighborhood of $u = -y$, since in that case $v(x) \neq v(y)$.

Summing up the two contributions, we obtain
\ba 
\int_{|x|=|y|} |x-y|^{t} dx &=& \frac{(p^{1+t}(p -2) + 1)p^{(r' -1) + r't}}{p^{1+t} -1},
\ea
as advertised.
\end{proof}

\begin{remark}
Result \eqref{mainresult1} is a consequence of translation invariance of the Haar measure. This can be understood as follows: It is possible to derive the result (along a much-simplified version of the steps above) at $y=0$, that is
\be
D_x^s G(x,0) =\delta(x).
\ee
By translation invariance this implies $D_x^s G(x-y,0) =\delta(x-y)$, i.e. $D_x^s G(x,y) =\delta(x,y)$. 
\end{remark}

\section{Equivalence of the two definitions of Vladimirov derivative operators for non-Archimedean completions of number fields}
\label{secnumberfield}

In this section we will establish the equivalence of the two derivative operators on general number fields for general Hecke characters. To introduce the terminology, denote the local field as $F$, a quasi-character as $\pi$, an additive character as $\chi$, the Schwartz space of compactly supported locally constant functions as $\{\psi\}$, and the generator of the unique prime ideal of $o_F$ by $p$. Furthermore, let $B_m$ be the set $B_{m} \coloneqq \{x \in F\,|\, v_p(x) \geq m\}$, denote the conductor of $\pi$ by $1 + B_{r}$, and let the conductor of $\chi$ be $B_{-d}$, which is the inverse~different. 

\begin{defn}
	The derivative operator obtained by Fourier transformation, denoted as $D_1$, is defined as 
	\be
	\label{D1ineq31}
	D_1 \psi(x') = \int_{F} \chi(-kx')\pi^{s}(k) \int_{F} \chi(kx) \psi(x) dx dk.
	\ee
\end{defn}
The Gamma function will be introduced as the coefficient associated with the Fourier transformation of a quasi-character.

\begin{remark}
The Fourier transformation of a quasi-character is another quasi-character. That is 
\be \label{ga}
C(s)\pi^{-s} =\int_{F} \chi(kx)\pi^{s}(x)|x|^{-1}dx, 
\ee
where $C$ is a certain constant. If $\pi$ is ramified with conductor $1 + B_{r}$, then
\be 
\label{gaa}
C(s) = \sum_{j \geq d+1} q^{sj} \int_{o^*_F} \chi(p^{-j}v) \tilde{\pi}(v) dv,
\ee
where $q  = Norm(p)$ and $\tilde{\pi}$ is the unitary part of $\pi$.
If $\pi$ is unramified, then 
\be
C(s) = \frac{q^{sd}(1 - q^{s-1})}{1 - q^{-s}}.
\ee
The integration \eqref{ga} is evaluated in the sense of summing over cycles, i.e summing over sets $\{x \in F\, |\, v_p(x) = n\}$ for $n \in \mathbb{Z}$. The rigorous proof of this is given in Tate's thesis. 
\end{remark} 
\begin{defn}
	We define the Gamma function as the constant deduced above. That is,
	\be
	\Gamma(\pi_s) = \sum_{j\geq d+1} q^{sj} \int_{o^*_F} \chi(p^{-j}v)\tilde{\pi}(v) dv
	\ee
	if $\pi$ is ramified, and
	\be
	\Gamma(\pi_s) = \frac{q^{sd}(1- q^{s-1})}{1 - q^{-s}}
	\ee
	if $\pi$ is unramified.
\end{defn}
\begin{defn}
	The derivative operator $D_2$ is defined as
	\be
	\label{D2ineq37}
	D_2 \psi(x) \coloneqq\Gamma(\pi_{s+1}) \int_{F} \frac{\psi(x') - \psi(x)}{\pi^{s}(x'-x)|x'-x|} dx'.
	\ee
\end{defn}

We want to establish the equivalence of the two derivative operators in Eqs. \eqref{D1ineq31} and \eqref{D2ineq37}.
To achieve this, we first prove the following lemma.

\begin{lemma}
	We have the integral
	\be
	\int_{o^*_F} \chi(p^{-j}v)\tilde{\pi}(v) dv = 0,
	\ee
	when $ j \neq d+ r$ and $j \geq d+ 1$.
\end{lemma}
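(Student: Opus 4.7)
The plan is to exploit orthogonality of characters on appropriate subgroups. Since $\pi$ is ramified we have $r\ge 1$, and I will split the range $j\ge d+1$, $j\ne d+r$ into the two sub-ranges $d+1\le j\le d+r-1$ and $j\ge d+r+1$. In each sub-range I will exhibit a measure-preserving change of variables on $o_F^*$ that leaves one of the two characters $v\mapsto\chi(p^{-j}v)$ or $v\mapsto\tilde\pi(v)$ fixed while multiplying the other by a nontrivial scalar; the integral is then forced to equal a nontrivial constant times itself, and hence to vanish.

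For the lower sub-range $d+1\le j\le d+r-1$, I would pick an element $w_0\in B_{j-d}$ with $\tilde\pi(1+w_0)\ne 1$. Such a $w_0$ exists because $1+B_{j-d}\supsetneq 1+B_r$, and by definition of the conductor $\tilde\pi$ cannot be identically trivial on a subgroup strictly containing $1+B_r$. The map $v\mapsto v(1+w_0)$ is a measure-preserving bijection of $o_F^*$ (multiplication by a unit). A quick valuation check gives $v_p(p^{-j}vw_0)\ge (j-d)-j=-d$, so $p^{-j}vw_0\in B_{-d}$ and $\chi(p^{-j}v(1+w_0))=\chi(p^{-j}v)$, while $\tilde\pi(v(1+w_0))=\tilde\pi(v)\tilde\pi(1+w_0)$. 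Substituting in the integral shows that it equals $\tilde\pi(1+w_0)$ times itself, so it vanishes.

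For the upper sub-range $j\ge d+r+1$, I would instead pick $u\in B_r$ with $\chi(p^{-j}u)\ne 1$. Such a $u$ exists because $p^{-j}B_r=B_{r-j}\supsetneq B_{-d}$ (as $r-j\le -d-1$) and $B_{-d}$ is exactly the largest subgroup on which $\chi$ is trivial. Since $r\ge 1$ we have $u\in B_1$, so the translation $v\mapsto v+u$ is a measure-preserving bijection of $o_F^*$. For $v\in o_F^*$ the quotient $u/v$ lies in $B_r$, hence $\tilde\pi(v+u)=\tilde\pi(v)\tilde\pi(1+u/v)=\tilde\pi(v)$, while $\chi(p^{-j}(v+u))=\chi(p^{-j}u)\chi(p^{-j}v)$. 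The integral therefore equals $\chi(p^{-j}u)$ times itself, so it vanishes.

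The main obstacle is not anything deep, but rather the careful valuation bookkeeping needed to check that in each sub-range the chosen substitution genuinely preserves one character and multiplies the other by a nontrivial phase. The excluded value $j=d+r$ is precisely the borderline where both constructions degenerate: $1+B_{j-d}$ collapses to $1+B_r$ (the kernel of $\tilde\pi$) and $p^{-j}B_r$ collapses to $B_{-d}$ (the kernel of $\chi$), so neither substitution produces a nontrivial factor, consistent with the fact that the integral is generically nonzero at $j=d+r$.
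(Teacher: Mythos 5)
Your proof is correct and takes essentially the same route as the paper's: on $d+1\le j\le d+r-1$ a measure-preserving multiplicative twist $v\mapsto v(1+w_0)$ with $w_0$ just outside the kernel of $\tilde{\pi}$ (the paper takes $w_0=up^{r-1}$), which fixes $\chi(p^{-j}\cdot)$ by the conductor bound $v_p(p^{-j}vw_0)\ge -d$, and on $j\ge d+r+1$ an additive translation $v\mapsto v+u$ with $u\in B_r$ chosen so that $\chi(p^{-j}u)\neq 1$, which fixes $\tilde{\pi}$. Your existence arguments and valuation bookkeeping are if anything slightly more explicit than the paper's, so there is nothing to correct.
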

\begin{proof}
	For $ d+ 1 \leq j \leq d+r-1$, we pick $1 + up^{r-1}$ such that $\tilde{\pi}(1 + up^{r-1}) \neq 1$. Therefore, the integral equals
	\ba
		\int_{o^*_F} \chi(p^{-j}v)\tilde{\pi}(v) dv &=&\int_{o^*_F} \chi(p^{-j}v(1+up^{r-1}))\tilde{\pi}(v(1 +up^{r-1})) dv\\
	&=&\tilde{\pi}(1 + up^{r-1}) \int_{o^*_F} \chi(p^{-j}v) \tilde{\pi}(v) dv. \nn
	\ea
	This implies that the integral is zero.
	In the case  $j \geq d + r + 1$, pick $ up^{r}$ such that $\chi(up^{r-j}) \neq 1$. Then
	\ba
		\int_{o^*_F} \chi(p^{-j}v)\tilde{\pi}(v) dv &=& \int_{o^*_F} \chi(p^{-j}(v +up^{r}))\tilde{\pi}(v) dv\\
	&=&\chi(p^{r-j}u)\int_{B_{-m}} \chi(p^{-j}v)\tilde{\pi}(v)dv.
	\ea
	This means that the integral vanishes.
\end{proof}
\begin{lemma}\label{invariant}
	If  $D_1 \phi(x) = D_2 \phi(x)$, where $\phi(x)$ is the characteristic function of the ball centered at $0$ with radius $1$, i.e $B_{1}$, then $D_1 \phi'(x) = D_2\phi'(x)$, where $\phi'(x)$ is the characteristic function of the ball centered at $b \in F$ with arbitrary radius $m$, with $m \in \mathbb{Z}$. Note that the radius is given in terms of valuation.
\end{lemma}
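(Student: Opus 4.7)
The plan is to exploit the affine-group symmetries of $F$: every ball of the form $b + B_m$ can be obtained from $B_1$ by composing a translation with a dilation, so it will suffice to show that $D_1$ and $D_2$ transform identically under both operations. Concretely, writing $(T_b\psi)(x) := \psi(x-b)$ and $(S_\lambda\psi)(x) := \psi(\lambda x)$ for $\lambda \in F^\times$, the characteristic function $\phi'$ of $b + B_m$ equals $T_b S_{p^{1-m}}\phi$, since $v_p(p^{1-m}(x-b)) \geq 1$ is equivalent to $v_p(x-b) \geq m$.

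First I would verify translation equivariance, $D_i T_b = T_b D_i$ for $i = 1, 2$. For $D_2$ this is a one-line substitution $y = x' - b$ in the defining integral, since the kernel $1/[\pi^{s}(x'-x)\,|x'-x|]$ depends only on the difference $x' - x$. For $D_1$ one notes that, under the Fourier transform, $T_b$ becomes multiplication by $\chi(kb)$, which commutes with multiplication by $\pi^{s}(k)$; applying the inverse Fourier transform then delivers $D_1 T_b = T_b D_1$.

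Next I would establish the dilation relations $D_i S_\lambda\psi(x) = \pi^{s}(\lambda)(D_i\psi)(\lambda x)$ for both $i = 1, 2$. For $D_2$, the substitution $y = \lambda x'$ gives $dx' = |\lambda|^{-1}\,dy$, $\pi^{s}(x'-x) = \pi^{s}(\lambda)^{-1}\pi^{s}(y-\lambda x)$, and $|x'-x| = |\lambda|^{-1}|y-\lambda x|$; the Jacobian factor $|\lambda|^{-1}$ cancels the $|\lambda|^{-1}$ hidden in the kernel, leaving $\pi^{s}(\lambda)(D_2\psi)(\lambda x)$. For $D_1$ one uses that $\widehat{S_\lambda\psi}(k) = |\lambda|^{-1}\hat\psi(\lambda^{-1}k)$ in the inner integral and then substitutes $l = \lambda^{-1}k$ in the outer integral, producing a compensating $|\lambda|$ Jacobian together with a single factor of $\pi^{s}(\lambda)$ extracted from $\pi^{s}(k) = \pi^{s}(\lambda)\pi^{s}(l)$ by multiplicativity; again the result is $\pi^{s}(\lambda)(D_1\psi)(\lambda x)$.

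Combining the two equivariances, for both $i = 1, 2$ one obtains the identity $D_i\phi'(x) = \pi^{s}(p^{1-m})(D_i\phi)(p^{1-m}(x-b))$, and the hypothesis $D_1\phi = D_2\phi$ then immediately yields $D_1\phi' = D_2\phi'$. The main technical obstacle is the dilation step: one must check carefully that the Haar-measure Jacobian $|\lambda|^{-1}$, the kernel factor $|\lambda|^{-1}$ in $D_2$ (respectively the $|\lambda|^{-1}$ from Fourier rescaling in $D_1$), and the multiplicative character $\pi^{s}(\lambda)$ all conspire in exactly the same way for both presentations. This matching is precisely the reason two \emph{a priori} quite different operators can be expected to agree on all balls once they agree on a single one, and it is the only place where multiplicativity of $\pi^{s}$ enters the argument.
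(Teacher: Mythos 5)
Your proof is correct and takes essentially the same route as the paper: the paper performs the single affine substitution $y=(x-b)/p^{m}$ in both defining integrals and observes that $D_1$ and $D_2$ each pick up the identical covariance factor $\pi^{-s}(p^{m})$, which is exactly your translation-plus-dilation equivariance argument factored into two steps. The only (immaterial) discrepancy is the overall factor $\pi^{s}(p^{1-m})$ versus the paper's $\pi^{-s}(p^{m})$, stemming from your more careful matching of $B_1$ (rather than $B_0$) under the rescaling; since the factor is the same for both operators, the conclusion is unaffected either way.
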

\begin{proof}
	Let $y = (x - b)/p^{m}$ and $y' = (x'- b)/p^{m}$, then $D_1 \phi'(x)$ becomes
	\ba
	 D_1 \phi'(x') &=& \int_{F} \chi(-k(p^{m}y'+b)) \pi^s(k) \times\\
	& &\times\int_{F} \chi(k(p^{m}y+b))\phi'(p^{m}y + b) dp^{m}y dk\nonumber\\
	&=& \int_{F} \chi(-uy') \pi^s(u/p^{m}) \int_{F} \chi(uy)\phi(y) dydu\\
	&=& \pi^{-s}(p^{m}) \int_{F} \chi(-uy') \pi^s(u) \int_{F} \chi(uy)\phi(y) dydu,
	\ea
where in the second equality we have denoted $u \coloneqq p^{m}k$. For $D_2 \phi'(x)$, we have
	\ba
	D_2 \phi'(x')&=& \Gamma(\pi_{s+1}) \int_{F} \frac{\phi'(p^{m}y+b) - \phi'(p^{m}y'+ b)}{\pi^s(p^{m}y- p^{m}y')|p^{m}y-p^{m}y'|} dp^{m}y\\
	&=& \pi^{-s}(p^{m}) \Gamma(\pi_{s+1}) \int_{F} \frac{\phi(y) - \phi(y')}{\pi^s(y - y')|y-y'|} dy.
	\ea
	Therefore we obtain $D_1 \phi'(x) = D_2 \phi'(x)$, by our assumptions.
\end{proof}

\begin{thm}
	Derivative operators $D_1$ and $D_2$ are equivalent in the sense that for any compactly supported locally constant test function $\psi(x)$, we have
	\be
	D_1 \psi(x) = D_2 \psi(x).
	\ee
\end{thm}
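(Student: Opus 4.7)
The plan is to reduce to a single base case and then verify the identity by direct computation. Both $D_1$ and $D_2$ are $\mathbb{C}$-linear in $\psi$, and every compactly supported locally constant function on $F$ is a finite linear combination of indicator functions $\chi_{b+B_m}$ of balls. By Lemma \ref{invariant}, equality of $D_1$ and $D_2$ on $\phi = \chi_{B_1}$ propagates to equality on all such indicators, so it suffices to verify the identity for this single test function.

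For $D_1\phi(x')$, I would first evaluate the inner Fourier integral $\int_{B_1}\chi(kx)\,dx$. By the standard non-Archimedean identity integrating an additive character over a ball, this equals $q^{-1}$ when $k\in B_{-d-1}$ (so that $\chi(k\,\cdot)$ is trivial on $B_1$) and vanishes otherwise. Hence $D_1\phi(x') = q^{-1}\int_{B_{-d-1}}\chi(-kx')\pi^{s}(k)\,dk$. I would then decompose this integral over valuation shells $\{v_p(k)=j\}$, $j\geq -d-1$, and after the substitution $k=p^{j}v$ with $v\in o_F^{*}$ rewrite each shell's contribution as a scalar multiple of $\int_{o_F^{*}}\chi(-p^{j}v x')\tilde\pi(v)\,dv$. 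The prior lemma of Section \ref{secnumberfield} will then kill every shell contribution in the ramified case except the one whose index matches the conductor, while in the unramified case ($r=0$) the shell sum is geometric and resums in closed form.

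For $D_2\phi(x')$ I would split on whether $x'\in B_1$. If $x'\in B_1$, then $\phi(x')=1$ and the integrand is supported on $x\notin B_1$; the ultrametric identities $|x-x'|=|x|$ and $\pi^{s}(x-x')=\pi^{s}(x)$ there decouple $x'$ from the measure and character, reducing the integral to a shell sum in $v_p(x)$. If $x'\notin B_1$, then integration over $x\in B_1$ contributes a term proportional to $\pi^{-s}(x')|x'|^{-1}\,\mrm{vol}(B_1)$ (using $|x-x'|=|x'|$ on $B_1$ when $v_p(x')\leq 0$), while the integration over $x\notin B_1$ splits into shells in $v_p(x-x')$, again producing building blocks of the form $\int_{o_F^{*}}\chi(p^{-j}v)\tilde\pi(v)\,dv$.

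The final step is to match the two computations. After the prior lemma is applied, only shells indexed near the conductor survive in the ramified case, and in each case for $x'$ both sides collapse to the same multiple of $\pi^{-s-1}(x')$ (plus a contact piece), with the proportionality constant on the $D_2$ side being absorbed into $\Gamma(\pi_{s+1})$ by the very definition of the Gamma function. The main obstacle I expect is bookkeeping: the case $x'\in B_1$ and the case $x'\notin B_1$ each generate several pieces whose $x'$-dependence must agree, and this agreement relies sensitively on the alignment of summation ranges between the two shell decompositions. A useful simplification is that both $D_1\phi(x')$ and $D_2\phi(x')$ are locally constant in $x'$ --- both arise from operators with $\mathbb{C}$-linear kernels acting on a locally constant function --- so one can check the identity shell-by-shell in $v_p(x')$, reducing the matching to a finite set of scalar identities that boil down to the defining expressions for $\Gamma(\pi_s)$ and $\Gamma(\pi_{s+1})$.
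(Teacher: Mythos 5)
Your overall strategy coincides with the paper's: reduce to the single indicator $\phi=\chi_{B_1}$ by linearity and Lemma \ref{invariant}, evaluate the inner Fourier integral to get $q^{-1}$ times the indicator of $B_{-1-d}$, decompose $D_1\phi$ over valuation shells, and kill all but the conductor shell with the lemma $\int_{o_F^*}\chi(p^{-j}v)\tilde\pi(v)\,dv=0$ for $j\neq d+r$, $j\geq d+1$ (with the unramified case resumming geometrically, as you say). However, there is one local misstatement and one genuine gap. The misstatement: when $x'\notin B_1$ the $D_2$ integrand $\phi(x)-\phi(x')$ vanishes identically for $x\notin B_1$, since both values are $0$; the whole $D_2$ contribution is $\Gamma(\pi_{s+1})\int_{B_1-x'}\pi^{-s}(u)|u|^{-1}\,du$, there is no shell sum over $x\notin B_1$, and no building blocks $\int_{o_F^*}\chi(p^{-j}v)\tilde\pi(v)\,dv$ can arise inside $D_2$ itself --- $D_2$ contains no additive character except through the prefactor $\Gamma(\pi_{s+1})$.

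The gap is in the ramified matching, and the additive-character lemma does not supply it. On the $D_1$ side, the substitution $v=ux'$ truncates the shell sum to $d+1\leq j\leq d+1-v_p(x')$ (cf.~Eq.~\eqref{b}), so $D_1\phi(x')$ vanishes exactly when the conductor shell $j=d+r$ falls outside this range, i.e.\ when $v_p(x')>1-r$. To match this on the $D_2$ side you must prove the multiplicative-character statement
\be
\int_{B_1-x'}\tilde\pi(u)\,du=0\qquad\text{when } v_p(x')>1-r,
\ee
which the paper obtains by the substitution $u'=(1-v)^{-1}(u-vx')$ for a unit $1-v\in o_F^*$ with $\tilde\pi(1-v)\neq 1$ --- an averaging trick against the \emph{multiplicative} character, genuinely distinct from the additive lemma. (When $v_p(x')\leq 1-r$, $\tilde\pi$ is constant on the shifted ball, the integral equals $\tilde\pi(-x')q^{-1}$, and both sides then agree at the single surviving shell $j=d+r$.) Your shell-by-shell local-constancy observation does not substitute for this: within each shell of $v_p(x')$ one must still evaluate the $\tilde\pi$-integral over the shifted ball, and that evaluation --- zero versus $\tilde\pi(-x')q^{-1}$, with threshold at $v_p(x')=1-r$ --- is precisely the computation your sketch omits.
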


\begin{proof}
	We first consider the characteristic function of a ball $B_{1}$. That is, $\phi(x)$ is defined~as 
	\begin{equation}
     \phi(x) =
	\begin{cases}
	1 & \text{if $v_p(x) \geq 1$}\\
	0 & \text{otherwise}
	\end{cases}.
	\end{equation}
	Then we evaluate the first definition of the derivative operator on this function:
	\ba 
	D_1 \phi(x') &=& \int_{F} \chi(-kx')\pi^s(k) \int_F \chi(kx) \phi(x)dxdk\\
	&=& q^{-1} \int_{p^{-1}D^{-1}_F} \chi(-kx)\pi^s(k)dk.
	\ea
	Note that the inverse different can be written as $B_{-d}$, so that
	\be
	D_1 \phi(x') = q^{-1} \int_{B_{-1-d}} \chi(-kx')\pi^s(k)dk,
	\ee
	and letting $u = -k$ we have
	\be
	D_1 \phi(x') = q^{-1} \int_{B_{-1-d}} \chi(ux')\tilde{\pi}(-u)|u|^s du.
	\ee
	Now let $v = ux'$, we obtain
	\ba
	D_1 \phi(x') &=& q^{-1} \tilde{\pi}^{-1}(x')|x'|^{-s-1}\\ &\times& \int_{B_{-1-d+c}} \chi(v) \tilde{\pi}(-v)|v|^{s+1} d^*v\nonumber\\
	\label{eq324forfirstdefder}
	 &=& q^{-1} \tilde{\pi}^{-1}(x')|x'|^{-s-1} \\&\times&\sum_{i\geq -1-d+c} q^{-(s+1)i} \int_{o^*_F} \chi(vp^i)\tilde{\pi}(-v)dv \nonumber.
	\ea
	If $\pi$ is ramified and $x' \in B_{1}$, this equals $0$. If instead $\pi$ is unramified and $x' \in B_{1}$, this expression gives
	\be
	\label{eqD1rami}
	D_1 \phi(x')= \frac{(1-q^{-1})q^{s}q^{(s+1)d}}{1- q^{-(s+1)}}.
	\ee
	If $\pi$ is unramified and $x'\notin B_{1}$, we have the derivative equal to 
	\be
	D_1 \phi(x')= \Gamma(\pi_{s+1})q^{-1}|x'|^{-s-1}.
	\ee
	Finally, if $\pi$ is ramified and $x'\notin B_{1}$, we have
	\ba
	D_1 \phi(x')&=& q^{-1}\tilde{\pi}^{-1}(-u)|x'|^{-s-1}\\ &\times&\sum_{d+1 \leq j \leq d+1-c} q^{(s+1)j} \int_{o^*_F} \chi(p^{-j}v)\tilde{\pi}(v)dv \nonumber.
	\ea
	Next, we proceed to calculate the second definition of the derivative operator. Notice that if $\pi$ is ramified then the Gamma function in this case is equal to
	\be 
	\Gamma(\pi_s) = \sum_{j\geq d+1}q^{sj} \int_{o^*_F} \psi(p^{-j}v)\tilde{\chi}(v)dv.
	\ee
	If $\pi$ is unramified, we have the Gamma function
	\be
	\Gamma(\pi_s)= \frac{q^{sd}(1 - q^{s-1})}{1 - q^{-s}}.
	\ee
	From the second definition of the derivative operator,
	\be
	D_2 \phi(x') = \Gamma(\pi_{s+1}) \int_F (\phi(x) - \phi(x'))\pi^{-s}(x-x')|x-x'|^{-1}dx,
	\ee
	which with the change of variables $u = x-x'$ becomes
	\be
	D_2 \phi(x') = \Gamma(\pi_{s+1}) \int_F (\phi(u+x') - \phi(x')) \pi^{-s}(u)|u|^{-1}du.
	\ee
	We now compute this expression in different cases. For $x' \in B_{1}$, we have 
	\ba
	D_2 \phi(x') &=& - \Gamma(\pi_{s+1}) \int_{B^{c}_{1}} \pi^{-s}(u)|u|^{-1} du\\
	&=&-\Gamma(\pi_{s+1}) \int_{B^{c}_{1}} |u|^{-s} |u|^{-1}du\\
	&=& -\Gamma(\pi_{s+1})\sum_{i <1} q^{si} \int_{o^*_F} \tilde{\pi}(u)d^*u.
	\ea
	If $\pi$ is ramified this will equal $0$. Otherwise,
	\be
	D_2 \phi(x')= \frac{(1-q^{-1})q^{s}q^{(s+1)d}}{1- q^{-(s+1)}}.
	\ee
	This matches result \eqref{eqD1rami} for the first definition of the derivative operator. Next, in the case $x'\notin B_{1}$ we have
	\be
	D_2 \phi(x')=\Gamma(\pi_{s+1})\int_{B_{1}-x'}\pi^{-s}(u)|u|^{-1}du.
	\ee
	If $\pi$ is unramified then this equals
	\be
	D_2 \phi(x')= \Gamma(\pi_{s+1}) |x'|^{-s-1} q^{-1},
	\ee
	and if $\pi$ is ramified it equals
	
	\ba
	    D_2 \phi(x') &=& \Gamma(\pi_{s+1})|x'|^{-s-1}\\ &\times&\int_{B_{1}-x'} \tilde{\pi}^{-1}(u) du \nonumber \\
	&=& |x'|^{-s-1}\int_{B_{1} -x'} \tilde{\pi}^{-1}(u)du \\&\times& \sum_{j \geq d+1} q^{(s+1)j} \int_{o^*_F}\chi(p^{-j}v)\tilde{\pi}(v) dv\nonumber. 
	\label{a}
	\ea
	
	We want to show this is equal to Eq. \eqref{eq324forfirstdefder} for the first definition of the derivative~operator,
	
	\ba
	\label{b}
	D_1 \phi(x')&=& \tilde{\pi}^{-1}(-x')q^{-1}|x'|^{-s-1}\\&\times&\sum_{d+1 \leq j \leq d+1-c} q^{(s+1)j}\int_{o^*_F} \chi(p^{-j}v)\tilde{\pi}(v)dv\nonumber.
	\ea
	We divide the computation into two subcases. First let's analyze the case $v_p(x') >1 - r$, where $r$ is the degree of ramification of $\tilde{\pi}$. Then consider the integral 
	\ba 
	\label{c}
	\int_{B_{1} -x'} \tilde{\pi}(u) du &=& \int_{B_{1}} \tilde{\pi} (u -x') du \\
	&=& \tilde{\pi}(-x') \int_{B_{1}} \tilde{\pi}(1 - u/x') du.
	\ea
	Pick $1 - v$ from $o^*_F$ such that $\tilde{\pi}(1 -v)$ is nontrivial. We change the variable to $u' = (1-v)^{-1}(u-vx')$, and we have
	\be
	\int_{B_{1} -x'} \tilde{\pi}(u) du = \tilde{\pi}(-x') \int_{B_{1}} \tilde{\pi}((1 - v)(1 - u'/x')) du'.
	\ee
	This implies that Eq. \eqref{c} vanishes. From the lemma above, we have that Eqs. \eqref{a}, \eqref{b} are both equal to $0$, since $r > 1 -v_p(x')$, and the integral in the lemma has nonzero value only at $j = d+ r$.
	For the second case, i.e. $v_P(x') \leq 1 -r$, Eq.~\eqref{c}~is
	\be
	\int_{B_{1} -x'} \tilde{\pi}(u) du = \tilde{\pi}(-x') Vol(B_{1}) = \tilde{\pi}(-x')q^{-1}.
	\ee
	Therefore the expressions \eqref{a} and \eqref{b} are equal, since they only have nonzero value at $j = d+r$, and are identical in that case.
	We thus conclude that the two definitions of the Vladimirov derivative operators are equivalent when evaluating on $\phi(x)$. Since every test function $\psi(x)$ can be written as linear combinations of $\phi'(x)$, that is of characteristic functions of balls with arbitrary center and radius, from the lemma \ref{invariant} above, the two definitions must be equal on test function space.
	
\end{proof}

\section{Field theory on a number field}
\label{cftnf}
Fix a number field $K$ and a Hecke character $\chi$. For any place $\nv$ of $K$, we have the pseudo-differential operator $\partial^{\chi_{\nv}}\coloneqq D_2$ on the Schwartz-Bruhat space $S(K_{\nv})$ as defined in Eq. \eqref{D2ineq37}, with respect to $\chi_{\nv}$, the component of $\chi$ at $\nv$. Consider the action 

\begin{equation}
S=\int_{K_{\nu}}\phi(x) \partial^{\chi_{\nv}}\phi(x)dx.
\end{equation}

By the standard path-integral argument of quantum field theory (see e.g. \cite{Gubser:2017qed}), the two-point function of this free theory is a Green's function $G_{\nv}$ of $D_2$. Since $D_1=D_2$, where $D_1=\FF^{-1}\chi_{\nv}\FF$, $G_{\nv}$ is then the Green's function of $D_1$, which can be obtained via Tate's thesis: in Tate's thesis, the Zeta distribution  $\chi_{\nv}\eqqcolon|\cdot|^s\tilde{\chi}_{\nv}$ as an operator on the Schwartz space is analytically continued to a distribution-valued meromorphic function of $s$ on the whole complex plane, via the functional equation of local Zeta integrals. In particular, if the function  $b(t)\coloneqq|\cdot|^t\tilde{\chi}_{\nv}^{-1}$ is analytic at $-s$, then the evaluation at $-s$ of $b(t)$ gives a canonical distributional inverse of $\chi_{\nv}$. If $b(t)$ has a pole at $-s$, then the constant term of the Laurent series expansion of $b(t)$ at $t=-s$ gives a distributional inverse of $\chi_{\nv}$. We don't discuss such poles in detail in this paper. Denote such an inverse distribution by $\chi_{\nv}^{-1}$. Then $\FF^{-1}\chi_{\nv}^{-1}$ is a Green's function of $D_1$. Due to the positive definiteness of $D_1$, any other Green's function can only differ from this one by adding a constant, but by the standard path-integral integral argument, $\FF^{-1}\chi_\nu^{-1}$ (i.e. Eq. \eqref{Gishere} in the case $F=\Qp$) is precisely the two-point-function. Now, when $\Re(s)>1$, by the local functional equation of Zeta integrals, $\FF^{-1}\chi_{\nv}^{-1}$ is equal to the distribution $\gamma \chi_{\nv}|\cdot|^{-1}$, where $\gamma=\gamma(\chi_{\nv},\psi,dx)$ is a constant (the local Gamma factor) depending on $\chi_{\nv}$, the choice of the additive character $\psi$, and the Haar measure $dx$. Furthermore, the global functional equation of Zeta integrals says that these local Gamma factors multiply to $1$ in the sense of analytic continuation. Now since the local quasi-characters $\chi_{\nv}|\cdot|^{-1}$ also multiply to $1$, we deduce that the two point functions multiply to $1$, i.e.
\begin{equation}
\label{prod}
    \prod_{\nv\leq\infty} G_{\nv}=1.
\end{equation}
When $\chi_{\nv}$ is ramified (i.e. when its unitary part is nontrivial), the local Gamma factor is regular at $s=1$, and the above equality holds also at $s=1$. 

When $\chi_{\nv}$ is unramified, the local Gamma factor has a pole at $s=1$. In this case, the local Green's function has to be worked out by picking the constant term as described above. Furthermore, taking the derivative of the product formula \eqref{prod} with respect to $s$, and evaluating the resulting formula at $1$, one obtains an adelic relation of the local Green's functions at $s=1$.

\section{Green's functions for the maximum norm on $\mathbb{Q}_p^n$}
\label{secQpn}

In this section we show that the Green's functions derived in the general framework of Section~\ref{sec21} also apply to derivatives defined from quasi-characters corresponding to the maximum norm on $\mathbb{Q}_p^n$. Our discussion can be interpreted as a kind of ``higher-dimensional'' ($n>1$) extension of the standard treatment for $n=1$ found in Section~\ref{sec2}, and in texts such as \cite{GGIP}, \cite{vvzbook}. It is interesting that many of the results in Section \ref{sec21} will apply (as we will show), even though $\mathbb{Q}_p^n$ is not a field. 

In this section we will only present the Green's function for the integral representation on $\mathbb{Q}_p^n$ of type $D_2$ of the Vladimirov derivative. These Green's functions have been studied before in the literature, in a different manner, e.g. in \cite{zunigabook}; because of this our presentation here will be brief. One can also see Theorem 137 in \cite{zunigabook} for a relation between the integral representation $D_2$ and the Fourier transform representation $D_1$ of the Vladimirov derivative in this case, which we will not discuss here.\footnote{When $n=3$ the two representations of the Vladimirov derivative have also been discussed in \cite{Abdesselam:2013zia}; we thank the anonymous referee for pointing out this paper.}

It is also worth mentioning that for $n>1$ the choice of maximum norm as local component does not correspond to a Hecke character, therefore there is no immediate adelic interpretation of the computations in this section. However, choices of Hecke characters can still be made for specific values of $n$ (for instance in relation to quaternion algebras when $n=4$), so that the framework in Section \ref{sec21} applies to the local components, and adelic products hold; we will not explore this in the present paper.

We will employ the notation $D_i$, $i\in\mathbb{Z}$, for the disk
\be
D_i \coloneqq\lcb x\in\mathbb{Q}_p\ |\ v_p(x) > i \rcb
\ee
of volume
\be
\label{volD}
\mrm{vol}\lb D_i\rb=p^{-i-1},
\ee
with $v_p$ the usual $p$-adic valuation. We use Greek indices $\alpha,\beta,\dots\in\lcb1,\dots,n\rcb$ to denote the various $\Qp$ components in~$\mathbb{Q}_p^n$.

We now introduce the maximum norm $|\cdot|_n$ on $\mathbb{Q}_p^n$.

\begin{defn}
We define the $p$-adic norm $|\cdot|_n$ on $\mathbb{Q}^n_p$ as the maximum norm, which means
\be
\label{defeq1}
|x|_n \coloneqq \max_\alpha |x_\alpha|_p
\ee
for $x\in\mathbb{Q}_p^n$, with $|\cdot|_p$ the usual $p$-adic norm on $\Qp$. Then correspondingly we define the valuation $v_n(x)$ as
\be
\label{defeq2}
v_n\lb x \rb = \min_\alpha v_p(x_\alpha).
\ee
\end{defn}


\begin{lemma}
\label{lemmabasic}
The definitions in Eqs. \eqref{defeq1} and \eqref{defeq2} satisfy the properties required of norm and valuation respectively.
\end{lemma}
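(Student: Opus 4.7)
The plan is to verify the defining axioms of a non-Archimedean absolute value (for $|\cdot|_n$) and of the corresponding valuation (for $v_n$) by reducing each one, componentwise, to the already-known properties of $|\cdot|_p$ and $v_p$ on $\Qp$. Since $\mathbb{Q}_p^n$ is not a field, multiplicativity is not among the properties we need to establish; the substantive content of the lemma is positivity, non-degeneracy, and the ultrametric (strong triangle) inequality, together with their translations under $v_n = -\log_p |\cdot|_n$.

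First I would check the elementary properties. Non-negativity of $|x|_n$ is immediate from $|x_\alpha|_p \ge 0$. For non-degeneracy, $|x|_n = 0$ forces $\max_\alpha |x_\alpha|_p = 0$, hence each $|x_\alpha|_p = 0$, hence each $x_\alpha = 0$, and conversely $|0|_n = 0$. The analogous statement for $v_n$ is that $v_n(x) = +\infty$ iff every $v_p(x_\alpha) = +\infty$ iff $x=0$, which follows by the same componentwise argument.

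The only nontrivial step is the ultrametric inequality. For $x,y \in \mathbb{Q}_p^n$ and each component $\alpha$, the $p$-adic ultrametric on $\Qp$ gives
\be
|x_\alpha + y_\alpha|_p \;\le\; \max\bigl(|x_\alpha|_p,\, |y_\alpha|_p\bigr) \;\le\; \max\bigl(|x|_n,\, |y|_n\bigr).
\ee
Taking the maximum of the left-hand side over $\alpha$ yields $|x+y|_n \le \max(|x|_n,|y|_n)$. The dual statement $v_n(x+y) \ge \min\bigl(v_n(x), v_n(y)\bigr)$ follows by applying $-\log_p$, or equivalently by the same chain of inequalities run with $v_p(x_\alpha + y_\alpha) \ge \min(v_p(x_\alpha), v_p(y_\alpha)) \ge \min(v_n(x), v_n(y))$ and then taking the minimum over $\alpha$ on the left.

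There is not really a main obstacle here; the statement is a routine componentwise reduction, and the only place where one must be slightly careful is to observe that the usual multiplicative axiom $|xy| = |x|\,|y|$ does not in general hold for $|\cdot|_n$ on $\mathbb{Q}_p^n$ (one only has $|xy|_n \le |x|_n |y|_n$ for componentwise multiplication), which is consistent with $\mathbb{Q}_p^n$ not being a field and is the reason this axiom is not included in the claim.
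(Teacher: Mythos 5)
Your proof is correct and is exactly the routine componentwise verification the paper has in mind --- the paper's own proof consists of the single word ``Immediate.'' Your added remark that multiplicativity fails (only submultiplicativity holds, consistent with $\mathbb{Q}_p^n$ not being a field) is a sensible clarification of why that axiom is excluded, but the substance of the argument is the same.
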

\begin{proof}
Immediate.
\end{proof}
\begin{remark}
The maximal subgroup of $\mathrm{GL}(n,\mathbb{Q}_p)$ preserving the maximum norm \eqref{defeq1} is its maximal compact subgroup $K=\mathrm{GL}(n,\mathbb{Z}_p)$: Let $A\in \mathrm{GL}(n,\mathbb{Z}_p)$, and let $y=Ax$, where $x\in\mathbb{Q}_p^n$ is the column vector $[x_1,x_2,...,x_n]^T$. For any $i$, $y_i=\sum_{j=1}^nA_{ij}x_j$, and therefore $|y_i|_p\leq max_j|A_{ij}x_j|_p\leq max_j|x_j|_p$, as $|A_{ij}|_p\leq 1$ for any $i,j$. Thus $|y|_n=max_i|y_i|_p\leq max_j|x_j|_p=|x|_n$. On the other hand, since $A\in \mathrm{GL}(n,\mathbb{Z}_p)$, $A^{-1}\in \mathrm{GL}(n,\mathbb{Z}_p)$. Thus the same argument switching the roles of $x$ and $y$ shows that $|x|_n\leq |y|_n$. So we have $|x|_n=|y|_n$, i.e. $\mathrm{GL}(n,\mathbb{Z}_p)$ preserves the maximum norm. Lastly, the Iwasawa decomposition applied to elements in $\mathrm{GL}(n,\mathbb{Q}_p)$ easily shows that $\mathrm{GL}(n,\mathbb{Z}_p)$ is the maximum subgroup of $\mathrm{GL}(n,\mathbb{Q}_p)$ that preserves the maximum norm.

This is in analogy with the Archimedean case, where the maximal subgroup preserving the quadratic kinetic term is the maximal compact subgroup of $\mathrm{GL}(n,\mathbb{R})$ (or $\mathrm{GL}(n,\mathbb{C})$), i.e. the orthogonal group (or the unitary group). 
\end{remark}
\begin{remark}
For $K=\mathbb{Q}_p^n$, the Haar modulus $\Delta$ is given by
\be
\Delta\lb x=\lb x_1,\dots,x_n \rb \rb = \prod_{\alpha=1}^n |x_\alpha|_p.
\ee
\end{remark}

\begin{lemma}
\label{lemma1}
The volume of the $p$-adic circle
\be
S_{n,i}\coloneqq \lcb x\in\mathbb{Q}_p^n\ |\ v(x)=i \rcb,
\ee
with $i\in\mathbb{Z}$, is
\be
\label{volSni}
\mrm{vol}\lb S_{n,i} \rb = (1-p^{-n})p^{-ni}.
\ee
\end{lemma}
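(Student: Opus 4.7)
The plan is to express $S_{n,i}$ as a set-theoretic difference of two product balls and then exploit the product structure of the Haar measure on $\mathbb{Q}_p^n$. Define the closed ball $B_{n,i} \coloneqq \lcb x\in\mathbb{Q}_p^n\ |\ v_n(x)\geq i \rcb$, so that $S_{n,i} = B_{n,i}\setminus B_{n,i+1}$ and therefore
\be
\mrm{vol}\lb S_{n,i}\rb = \mrm{vol}\lb B_{n,i}\rb - \mrm{vol}\lb B_{n,i+1}\rb.
\ee
The task thus reduces to computing the volume of a single ball $B_{n,i}$.

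For this, I would invoke Lemma \ref{lemmabasic} and the definition in Eq.~\eqref{defeq2}: the condition $v_n(x)\geq i$ is equivalent, via $v_n(x) = \min_\alpha v_p(x_\alpha)$, to $v_p(x_\alpha)\geq i$ holding simultaneously for every component $\alpha\in\lcb 1,\dots,n\rcb$. This yields the product decomposition $B_{n,i} = \prod_{\alpha=1}^n\, p^i\mathbb{Z}_p$. Under the normalization $\mrm{vol}\lb\mathbb{Z}_p\rb = 1$, each one-dimensional factor has Haar measure $|p^i|_p = p^{-i}$, and since the Haar measure on $\mathbb{Q}_p^n$ is the product of the one-dimensional Haar measures, Fubini gives $\mrm{vol}\lb B_{n,i}\rb = p^{-ni}$. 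Substituting into the difference yields $p^{-ni} - p^{-n(i+1)} = (1-p^{-n})p^{-ni}$, matching Eq.~\eqref{volSni}. As a consistency check, one recovers Eq.~\eqref{volD} for $n=1$ with the index shift $S_{1,i} = D_{i-1}\setminus D_i$ (of volume $p^{-i}-p^{-i-1}$), matching the one-dimensional formula already used in Eq.~\eqref{volS}.

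There is no serious obstacle here; the computation is elementary once one recognizes the product structure induced by the max norm. The only point worth flagging is that this clean product decomposition of $B_{n,i}$ is precisely the special feature of the max norm: for alternative norms on $\mathbb{Q}_p^n$ (such as one built from a sum or from a nontrivial quadratic form) the level sets would not factorize into a product of one-dimensional $p$-adic balls, and the volume computation would not collapse to this one-line geometric-series argument. It is exactly this factorization that will allow the general recipe of Section~\ref{sec21} to survive on $\mathbb{Q}_p^n$ despite $\mathbb{Q}_p^n$ not being a field.
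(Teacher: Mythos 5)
Your proof is correct, and it takes a genuinely different route from the paper's. The paper partitions the circle $S_{n,i}$ combinatorially according to \emph{which} coordinates attain the minimal valuation: it writes $S_{n,i}$ as a union over nonempty index sets $\mathcal{I}_u$ of products $\prod_{\alpha\in\mathcal{I}_u}S_i^{(\alpha)}\otimes\prod_{\beta\notin\mathcal{I}_u}D_i^{(\beta)}$ and sums $\sum_{u=1}^n\binom{n}{u}\lb p^{-i}-p^{-i-1}\rb^u\lb p^{-i-1}\rb^{n-u}$, which the binomial theorem collapses to $\lb 1-p^{-n}\rb p^{-ni}$. You instead compute the volume of the full ball $B_{n,i}=\prod_{\alpha=1}^n p^i\Zp$ directly from the product structure of the Haar measure and subtract, i.e.\ $p^{-ni}-p^{-n(i+1)}$. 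The two are really the same identity seen from opposite ends: the paper's binomial sum is exactly the expansion of $\lb\mrm{vol}\, S_i+\mrm{vol}\, D_i\rb^n-\lb\mrm{vol}\, D_i\rb^n$, which is your difference of balls. What your argument buys is brevity and transparency --- a one-line geometric fact with no combinatorics, plus your apt observation that the factorization of max-norm balls is the special feature making everything work on $\mathbb{Q}_p^n$. What the paper's finer decomposition buys is reusability: the same partition of $S_{n,i}$ into products of circles $S_i^{(\alpha)}$ and disks $D_i^{(\beta)}$ is invoked again verbatim in the computation of $\Gamma\lb\Delta\pi_s^{\mrm{max}}\rb$, where the integrand $\chi(x)=e^{2\pi i\{x\}}$ is \emph{not} constant on balls, so the crude ball-difference argument no longer suffices and the coordinate-wise partition is genuinely needed. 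Your consistency check against Eqs.~\eqref{volS} and \eqref{volD} for $n=1$ is also correct.
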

\begin{proof}
Circle $S_{n,i}$ consists of all possible distributions of $n$ indices $\alpha$ such that $v(x_\alpha) \geq i$ for all $\alpha$ and $v(x_\alpha)=i$ for at least one index $1\leq\alpha\leq n$. Thus $S_{n,i}$ can be written as
\be
S_{n,i} = \bigcup_{u=1}^n \bigcup_{\mathcal{I}_u} \lb \prod_{\alpha\in \mathcal{I}_u} S_i^{(\alpha)} \rb  \otimes\lb \prod_{\beta\in \{1,\dots,n\}-\mathcal{I}_u } D_i^{(\beta)} \rb,
\ee
where $\mathcal{I}_u$ is a set of indices of cardinality $u$. Then the volume is
\be
\mrm{vol}\lb S_{n,i}\rb = \sum_{u=1}^n \sum_{\mathcal{I}_u} \lb \prod_{\alpha\in \mathcal{I}_u} \mrm{vol}\lb S_i^{(\alpha)} \rb  \rb\lb \prod_{\beta\in \{1,\dots,n\}-\mathcal{I}_u} \mrm{vol} \lb D_i^{(\beta)} \rb \rb.
\ee
There are $\binom{n}{u}$ ways to choose $\mathcal{I}_u$, so that using Eqs. \eqref{volS}, \eqref{volD} the volume becomes
\ba
\mrm{vol}\lb S_{n,i}\rb &=& \sum_{u=1}^n \binom{n}{u} \lb p^{-i}-p^{-i-1} \rb^{u}  \lb  p^{-i-1} \rb^{n-u} \\
&=& \lb 1-p^{-n} \rb p^{-ni}.
\ea
\end{proof}

\begin{defn}
Let $\pi^\mrm{max}_s(x)\coloneqq |x|_n^s$ and ${\mathbb{Q}_p^n}^\times\coloneqq\{x\in\mathbb{Q}_p^n \ |\ x_\alpha\neq0,\forall\ \alpha=1,\dots,n\}$.
\end{defn}

We have that $\pi_s$ is a quasi-character on ${\mathbb{Q}_p^n}^\times$, meaning that $\pi^\mrm{max}_s(xy)=\pi^\mrm{max}_s(x)\pi^\mrm{max}_s(y)$, with component-wise multiplication.

We now compute the explicit values of the Gamma function associated to the maximum norm on $\mathbb{Q}_p^n$. 

\begin{lemma}
We have the integral
\be
\int_{S_i} \chi(x) = \begin{cases} \mrm{vol} \lb S_i\rb \quad \mrm{if} \quad i > -1 \\
-1 \hspace{0.83cm} \quad \mrm{if} \quad i=-1\\
0 \hspace{1.15cm} \quad \mrm{if}\quad i<-1
\end{cases}.
\ee
\end{lemma}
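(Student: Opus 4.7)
The plan is to reduce the computation to the well-known integral of $\chi$ over the closed balls $B_n = p^n\mathbb{Z}_p = \{x\in\mathbb{Q}_p\,|\,v_p(x)\geq n\}$, and then recover the integral over $S_i$ via the elementary decomposition $S_i = B_i \setminus B_{i+1}$. This sidesteps any direct manipulation of $p$-adic digit expansions and is the cleanest route.

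The first step is to establish the auxiliary formula
\be
\int_{B_n} \chi(x)\,dx = \begin{cases} p^{-n} & \text{if }n\geq 0,\\ 0 & \text{if }n<0.\end{cases}
\ee
To prove this I would change variables $y = p^{-n}x$, whose Jacobian for the Haar measure is $p^{-n}$, reducing the integral to $p^{-n}\int_{\mathbb{Z}_p}\chi(p^n y)\,dy$. When $n\geq 0$ the argument $p^n y$ is a $p$-adic integer so $\{p^n y\}=0$ and $\chi(p^n y)=1$, giving $p^{-n} = \mrm{vol}(B_n)$. When $n<0$ the function $y\mapsto\chi(p^n y)$ is a nontrivial additive character on the compact group $\mathbb{Z}_p$, whose integral vanishes by standard orthogonality of characters (the conductor of $\chi$ being $\mathbb{Z}_p$).

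The second step is to write $S_i = B_i\setminus B_{i+1}$, so that
\be
\int_{S_i}\chi(x)\,dx \;=\; \int_{B_i}\chi(x)\,dx \;-\; \int_{B_{i+1}}\chi(x)\,dx,
\ee
and then read off the three cases from the auxiliary formula: for $i\geq 0$ one obtains $p^{-i}-p^{-i-1} = (1-p^{-1})p^{-i} = \mrm{vol}(S_i)$; for $i=-1$ one gets $0-1=-1$; for $i\leq -2$ both terms vanish.

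I do not expect a genuine obstacle here — the computation is a textbook application of character orthogonality on $\mathbb{Z}_p$ — but the one place to be careful is the normalization convention. The character $\chi(x)=e^{2\pi i\{x\}}$ fixed earlier has conductor exactly $\mathbb{Z}_p$ (i.e.\ $d=0$ in the notation of Section~\ref{secnumberfield}), which is what makes the case boundary fall at $n=0$ in the auxiliary formula and at $i=-1$ in the lemma; any other choice of additive character would shift these boundaries accordingly. Once this is noted the proof is immediate.
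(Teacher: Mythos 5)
Your proof is correct and is precisely the ``direct computation'' that the paper invokes without detail: the auxiliary formula $\int_{B_n}\chi(x)\,dx = p^{-n}$ for $n\geq 0$ and $0$ for $n<0$ (by triviality of $\chi$ on $\mathbb{Z}_p$, respectively orthogonality of the nontrivial character $y\mapsto\chi(p^n y)$ on the compact group $\mathbb{Z}_p$), combined with the decomposition $S_i = B_i\setminus B_{i+1}$, yields all three cases, and your arithmetic checks out, including the boundary case $\int_{B_{-1}}\chi - \int_{B_0}\chi = 0 - 1 = -1$. Your remark that the case boundaries at $n=0$ and $i=-1$ are tied to the conductor of $\chi(x)=e^{2\pi i\{x\}}$ being exactly $\mathbb{Z}_p$ is also the right thing to flag, since a different additive character would shift them.
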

\begin{proof}
This follows from direct computation.
\end{proof}

\begin{lemma}
We have the integral
\be
\int_{D_i} \chi(x) = \begin{cases}
p^{-1-i} \quad \mrm{if}\quad i\geq-1\\
0 \quad \hspace{0.75cm}\mrm{if}\quad i<-1
\end{cases}.
\ee
\end{lemma}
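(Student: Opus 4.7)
The plan is to partition the disk $D_i$ into the circles $S_j$ appearing in the previous lemma. Since $v_p(x) > i$ iff $v_p(x) = j$ for some $j \geq i+1$, we have the disjoint union
\be
D_i = \bigsqcup_{j=i+1}^{\infty} S_j,
\ee
so additivity of the integral gives
\be
\int_{D_i} \chi(x)\,dx = \sum_{j=i+1}^{\infty} \int_{S_j} \chi(x)\,dx,
\ee
and the previous lemma supplies the value of each summand.

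In the case $i \geq -1$, every index $j$ appearing in the sum satisfies $j \geq 0 > -1$, so each term contributes $\mathrm{vol}(S_j) = p^{-j} - p^{-j-1}$. The resulting geometric series telescopes to $p^{-(i+1)} = p^{-1-i}$, which is the expected answer (and agrees with the fact that $D_i \subseteq \mathbb{Z}_p$, so $\chi \equiv 1$ on $D_i$ and the integral just recovers the volume).

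In the case $i < -1$, the sum naturally splits into three parts: the indices $i+1 \leq j \leq -2$ contribute $0$ by the previous lemma; the single index $j = -1$ contributes $-1$; and the indices $j \geq 0$ again form a telescoping series summing to $1$. The three contributions cancel and the integral vanishes.

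There is no real obstacle here; the only point to watch is the bookkeeping that separates the three ranges of $j$ in the second case, and the convergence of the tail $\sum_{j \geq 0}(p^{-j}-p^{-j-1})$, which is immediate. A slicker alternative that avoids the case split is to note that $D_i$ is a compact open additive subgroup of $\mathbb{Q}_p$, so $\int_{D_i}\chi$ equals $\mathrm{vol}(D_i)$ when $\chi|_{D_i} \equiv 1$ and $0$ otherwise; triviality of $\chi$ on $D_i$ is equivalent to $D_i \subseteq \mathbb{Z}_p$, i.e.\ to $i \geq -1$, which yields both cases at~once.
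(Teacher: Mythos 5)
Your proof is correct and follows essentially the same route as the paper, which likewise decomposes $D_i$ as the disjoint union of the circles $S_j$ for $j \geq i+1$ and sums the values from the preceding lemma; your case bookkeeping and the telescoping sums all check out. The compact-open-subgroup argument you mention at the end is a nice uniform alternative, but the main argument matches the paper's one-line proof.
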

\begin{proof}
Direct computation using
\be
\int_{D_i}\chi(x)= \sum_{j=i+1}^\infty \int_{S_j} \chi(x)  
\ee
gives the desired answer.
\end{proof}

We now introduce the Gamma function associated to the character $\pi^\mrm{max}_s$.

\begin{lemma}
The value of the $\Gamma$ function for the maximum norm is
\be
\Gamma\lb \pi^\mrm{max}_s \rb = \frac{1-p^{s-n}}{1-p^{-s}}.
\ee
\end{lemma}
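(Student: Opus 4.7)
The plan is to evaluate the defining integral for $\Gamma\lb \pi_s^{\max} \rb$ by partitioning $\mathbb{Q}_p^n$ into the maximum-norm circles $S_{n,i}$, on which $|x|_n = p^{-i}$ is constant. Since $|x|_n^n$ plays the role of the Haar modulus associated with the scalar dilation action of $\Qp^\times$ on $\mathbb{Q}_p^n$ (dilation by $\lambda$ rescales the measure by $|\lambda|_p^n$ and the max norm by $|\lambda|_p$), the prescription of Sketch Definition \ref{eqgammadef} specializes to
\be
\Gamma\lb \pi_s^{\max} \rb = \int_{\mathbb{Q}_p^n} \chi(x)\,|x|_n^{s-n}\,dx = \sum_{i \in \mathbb{Z}} p^{-i(s-n)} \int_{S_{n,i}} \chi(x)\,dx.
\ee
The main calculational task is to evaluate each circle integral.

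To do that, I reuse the decomposition of $S_{n,i}$ from the proof of Lemma \ref{lemma1}: it is the disjoint union, over nonempty subsets $\mathcal{I}_u \subseteq \{1,\dots,n\}$ with $|\mathcal{I}_u|=u$, of products $\prod_{\alpha \in \mathcal{I}_u} S_i^{(\alpha)} \times \prod_{\beta \notin \mathcal{I}_u} D_i^{(\beta)}$. Because the additive character factorizes component-wise, the integral factorizes too, yielding
\be
\int_{S_{n,i}}\chi(x)\,dx = \sum_{u=1}^n \binom{n}{u} \lb\int_{S_i}\chi\rb^{\!u} \lb\int_{D_i}\chi\rb^{\!n-u}.
\ee
I then invoke the two preceding lemmas, splitting into three cases. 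For $i \geq 0$, both one-dimensional integrals equal their respective volumes and the binomial sum collapses to $\mrm{vol}\lb S_{n,i}\rb = (1-p^{-n})p^{-ni}$ by Lemma \ref{lemma1}. For $i=-1$, substituting $\int_{S_{-1}}\chi=-1$ and $\int_{D_{-1}}\chi=1$ gives $\sum_{u=1}^n \binom{n}{u}(-1)^u = (1-1)^n - 1 = -1$. For $i \leq -2$, both one-dimensional integrals vanish, killing the contribution.

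Finally I assemble the surviving terms: the $i\geq 0$ contributions form the geometric series $(1-p^{-n})\sum_{i=0}^\infty p^{-is}=(1-p^{-n})/(1-p^{-s})$, while the $i=-1$ contribution is $-p^{s-n}$. A short algebraic step
\be
\frac{1-p^{-n}}{1-p^{-s}} - p^{s-n} = \frac{(1-p^{-n})-p^{s-n}(1-p^{-s})}{1-p^{-s}} = \frac{1-p^{s-n}}{1-p^{-s}}
\ee
produces the claimed value. The only subtle point I anticipate is the alternating binomial cancellation at $i=-1$, which crucially uses $\int_{S_{-1}}\chi=-1$ rather than $0$; the rest is routine summation of a geometric series in the half-plane where it converges, with the final rational closed form holding for general $s$ by analytic continuation.
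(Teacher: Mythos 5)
Your proof is correct, but it takes a genuinely different route from the paper's at the decisive step. The paper never evaluates $\Gamma\lb \pi^\mrm{max}_s \rb$ directly: it computes $\Gamma\lb \Delta\pi^\mrm{max}_s \rb = \int \chi(x)\,\pi^\mrm{max}_s(x)\,dx$ (in which the Haar modulus cancels by definition \eqref{eqgammadef}, so no convention about $\Delta$ is needed), obtaining $\lb 1-p^s\rb/\lb 1-p^{-n-s}\rb$ via the same circle decomposition and binomial sum you use, and then extracts the stated value from the functional equation \eqref{gammaFUN} together with $\lb\pi^\mrm{max}_s\rb^{-1}=\pi^\mrm{max}_{-s}$. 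You instead evaluate $\int_{\mathbb{Q}_p^n}\chi(x)\,|x|_n^{s-n}\,dx$ head-on, which is the same family of integrals shifted by $s\to s-n$; your case analysis (the $i\geq 0$ terms collapsing to $\mrm{vol}\lb S_{n,i}\rb$ by the binomial theorem, the $(1-1)^n-1=-1$ cancellation at $i=-1$, vanishing for $i\leq -2$) is a clean repackaging of the paper's single sum, and your assembly, algebra, and convergence region $\Re(s)>0$ all check out. What your route buys: it avoids the functional equation, which on $\mathbb{Q}_p^n$ is only a sketch lemma and is delicate there, since the rescaling argument behind \eqref{gammaFUN} works only along diagonal (scalar) dilations; it also proves the remark's identity $\Gamma\lb \Delta\pi^\mrm{max}_s\rb=\Gamma\lb\pi^\mrm{max}_{s+n}\rb$ as a byproduct. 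What it costs: you must posit the normalization $\pi\Delta^{-1}\to |x|_n^{s-n}$, and with the paper's stated Haar modulus $\Delta(x)=\prod_\alpha |x_\alpha|_p$ the literal integrand of \eqref{eqgammadef} would instead be $|x|_n^s\prod_\alpha|x_\alpha|_p^{-1}$ --- a different integral, which in fact diverges cycle by cycle. So your substitution is a convention rather than a consequence of \eqref{eqgammadef}; your scalar-dilation justification is the right reading, and it matches the convention the paper adopts implicitly (compare the exponent $s+n$ in \eqref{thisderrr}), but it deserves an explicit acknowledgment rather than the assertion that $|x|_n^n$ ``plays the role of'' the Haar modulus.
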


\begin{proof}
By direct computation we have
\ba
&&\Gamma\lb \Delta \pi_s^\mrm{max} \rb = \int_{{\mathbb{Q}^n_p}^\times} \pi_s^\mrm{max}\lb x \rb e^{2\pi i\{x\}}dx \\&=& \sum_{i=-\infty}^\infty \int_{S_{n,i}} \pi_s^\mrm{max}\lb x \rb e^{2\pi i\{x\}}dx\nonumber \\
&=& \sum_{i=-1}^\infty\sum_{u=1}^n \sum_{\mathcal{I}_u} p^{-is} \lb \prod_{\alpha\in \mathcal{I}_u} \int_{S_i^{(\alpha)}} e^{2\pi i\{x_\alpha\}}dx_\alpha \rb\times\\
& &\times\lb \prod_{\beta\in \{1,\dots,n\}-\mathcal{I}_u} \int_{D_i^{(\beta)}} e^{2\pi i\{x_\beta\}}dx_\beta \rb \nn\\
&=& \sum_{u=1}^n \binom{n}{u} \lsb (-1)^u p^{s} +\sum_{i=0}^\infty p^{-is} \lb p^{-i}-p^{-i-1} \rb^u p^{-(1+i)(n-u)} \rsb \nn \\
\label{eq320}
&=& \frac{1-p^s}{1-p^{-n-s}},
\ea
where the sum converges when $\Re\lb s+n \rb>0$.
Using the functional equation \eqref{gammaFUN}, and that $\lb\pi^\mrm{max}_s\rb^{-1}=\pi^\mrm{max}_{-s}$, we obtain
\be
\label{Gammasn}
\Gamma\lb \pi^\mrm{max}_s \rb = \frac{1-p^{s-n}}{1-p^{-s}},
\ee
as advertised.
\end{proof}

\begin{remark}
Noting that $\Gamma\lb \Delta \pi_s^\mrm{max} \rb=\Gamma\lb \pi_{s+n}^\mrm{max} \rb$, the functional equation can be recast~as
\be
\Gamma\lb \pi^\mathrm{max}_{n-s} \rb \Gamma\lb \pi^\mathrm{max}_{s} \rb = 1.
\ee
\end{remark}

\begin{remark}
The unique zero of the Gamma function is at $\pi^\mrm{max}=\pi_n^\mrm{max}$, and the unique pole is at $\pi^\mrm{max}=\pi_0^\mrm{max}$.
\end{remark}

Let's now introduce the integral representation of the Vladimirov derivative, and the Green's function. The discussion here and in Appendix \ref{appA} will closely follow that in Section \ref{sec2}.
\begin{defn}
For parameter $s\in\mathbb{C}$ with $\Re\lb s\rb>0$, and functions $\psi:\mathbb{Q}^n_p\to\mathbb{C}$ in the Schwartz space, the Vladimirov derivative at $x\in\mathbb{Q}_p^n$ is given by
\be
\label{thisderrr}
D_x^s \psi(x) \coloneqq \Gamma(\pi^\mrm{max}_{s+n}) \int_{\mathbb{Q}_p^n} \frac{\psi\lb x' \rb - \psi\lb x \rb }{|x'-x|^{s+n}_n}dx'.
\ee
\end{defn}

\begin{lemma}
For $\Re\lb s\rb > 0$, the Vladimirov derivative is a self-adjoint operator from the Schwartz space of compactly supported locally constant functions, to a bigger function space inside the space of continuous functions on $\mathbb{Q}_p^n$.
\end{lemma}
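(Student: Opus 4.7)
The plan is to mirror the argument of Lemma \ref{lemmasadj}, with the role of $\Qp$ replaced by $\mathbb{Q}_p^n$ and the absolute value $|\cdot|$ replaced by the maximum norm $|\cdot|_n$. Take two compactly supported locally constant functions $\psi_1, \psi_2 : \mathbb{Q}_p^n \to \mathbb{C}$ and form the bracket
\be
\langle \psi_1, D^s \psi_2 \rangle = \Gamma(\pi^\mrm{max}_{s+n}) \int_{\mathbb{Q}_p^n \times \mathbb{Q}_p^n} \psi_1^*(x)\,\frac{\psi_2(x') - \psi_2(x)}{|x'-x|_n^{s+n}}\, dx'\, dx.
\ee
The strategy is to show that the integrand is absolutely integrable, then invoke Fubini to rewrite the double integral in a manifestly symmetric form under the exchange $x \leftrightarrow x'$.

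The first step is to verify absolute integrability, which is the only substantive convergence issue. Because $\psi_2$ is locally constant, the difference $\psi_2(x') - \psi_2(x)$ vanishes on a neighborhood of the diagonal $x' = x$, so the would-be singularity at $|x'-x|_n = 0$ is absent and no pole is hit regardless of the value of $s$. Away from the diagonal, both functions have compact support, so for fixed $x$ in the support of $\psi_1$ the $x'$ integrand is supported in a compact set plus a tail where $|x'-x|_n$ is large; on that tail $\psi_2(x') = 0$, $\psi_2(x)$ is constant, and the integrand decays as $|x'-x|_n^{-s-n}$, which is absolutely integrable on $\mathbb{Q}_p^n$ for $\Re(s) > 0$ by the circle-volume formula \eqref{volSni} (the relevant sum is a geometric series with ratio $p^{-\Re(s)}$). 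This justifies exchanging the order of integration.

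The second step is the symmetrization. After applying Fubini and using the change of variables $x \leftrightarrow x'$ to average the integrand with its transpose, one rewrites
\be
\langle \psi_1, D^s \psi_2 \rangle = -\tfrac{1}{2}\,\Gamma(\pi^\mrm{max}_{s+n}) \int_{\mathbb{Q}_p^n \times \mathbb{Q}_p^n} \frac{\bigl(\psi_1^*(x') - \psi_1^*(x)\bigr)\bigl(\psi_2(x') - \psi_2(x)\bigr)}{|x'-x|_n^{s+n}} dx'\, dx,
\ee
where we used that $|x'-x|_n = |x-x'|_n$. This expression is manifestly invariant under swapping $\psi_1$ and $\psi_2$ up to complex conjugation, giving $\langle \psi_1, D^s \psi_2 \rangle = \langle \psi_2, D^s \psi_1 \rangle^*$, which is the claimed self-adjointness.

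The main (and essentially only) obstacle is the Fubini justification; once absolute integrability is in hand, the symmetrization is an algebraic manipulation identical to the one-dimensional case. As in Remark after Lemma \ref{lemmasadj}, self-adjointness is meant in the formal sense: $D^s$ need not preserve the Schwartz space of compactly supported locally constant functions, so its image lies in a larger space of continuous functions, and the identity above is understood as a pairing between the Schwartz space and that larger space.
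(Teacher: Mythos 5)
Your proposal is correct and follows essentially the same route as the paper, whose proof is simply ``Immediate, using definition \eqref{thisderrr}, as in the proof of Lemma \ref{lemmasadj}'': form the bracket, symmetrize under $x \leftrightarrow x'$ via Fubini, and read off $\langle \psi_1, D^s \psi_2 \rangle = \langle \psi_2, D^s \psi_1 \rangle^*$. The only difference is that you spell out the absolute-integrability check (local constancy killing the diagonal singularity, plus the tail estimate from the volume formula \eqref{volSni} for $\Re(s)>0$) that the paper leaves implicit, which is a welcome but not structurally different addition.
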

\begin{proof}
Immediate, using definition \eqref{thisderrr}, as in the proof of Lemma \ref{lemmasadj}.
\end{proof}

\begin{defn}
For $x,y\in\mathbb{Q}_p^n$ and $s\in\mathbb{C}$ we introduce a function $G(x,y)$ as
\be
G\lb x,y \rb\coloneqq \begin{cases}
c_{n,s,p}|x-y|_n^{s-n} \hspace{0.87cm} \mrm{if} \quad s-n\neq 2\pi i k/\ln p, \quad k\in\mathbb{Z} \\
c_{n,s,p}\log_p |x-y|_n \quad \mrm{if} \quad s-n =  2\pi i k/\ln p, \quad k\in\mathbb{Z}
\end{cases},
\ee
where constant $c_{n,s,p}$ is given by
\be
c_{n,s,p} = \begin{cases}
\Gamma\lb \pi^\mrm{max}_{n-s} \rb \hspace{0.96cm} \mrm{if}\quad  s-n \neq 2\pi i k/\ln p, \quad k\in\mathbb{Z}\\
-\lb 1-p^{-n}\rb \quad \mrm{if}\quad  s-n = 2\pi i k/\ln p, \quad k\in\mathbb{Z}
\end{cases}.
\ee
\end{defn}

\begin{thm} 
\label{hereisthm11111}
For $x\neq y$ and $\Re(s)>0$ we have that 
\be
D^s_x G(x,y) = 0.
\ee
\end{thm}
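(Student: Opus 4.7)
The plan is to mirror the strategy of Theorem \ref{hereisthm1}, using the fact that all the relevant geometric structure ($p$-adic circles, ultrametric inequality, volume formula) carries over to $\mathbb{Q}_p^n$ with the maximum norm, thanks to Lemmas \ref{lemmabasic}--\ref{lemma1}. First I would make the substitution $u \coloneqq z-y$, $v \coloneqq x-y$ with $v\neq 0$, reducing the problem (up to the overall factors $c_{n,s,p}$ and $\Gamma(\pi^\mrm{max}_{s+n})$) to showing that
\be
I_n(s,v) \coloneqq \int_{\mathbb{Q}_p^n} \frac{|u|_n^{s-n} - |v|_n^{s-n}}{|u-v|_n^{s+n}}\, du = 0
\ee
in the generic case $s-n\neq 2\pi i k/\ln p$, and the analogous identity with numerator $\log_p|u|_n - \log_p|v|_n$ in the exceptional case.

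The key simplification comes from the ultrametric property: on the region $|u|_n<|v|_n$ we have $|u-v|_n=|v|_n$, on $|u|_n>|v|_n$ we have $|u-v|_n=|u|_n$, and on $|u|_n=|v|_n$ the numerator vanishes identically, so that region contributes nothing (this is precisely why we never need to evaluate the analog of Lemma \ref{lemma2} here). Next I would decompose each of the two remaining integrals into a sum over the circles $S_{n,k}$, apply the volume formula $\mrm{vol}(S_{n,k}) = (1-p^{-n})p^{-nk}$ from Lemma \ref{lemma1}, and set $m\coloneqq v_n(v)$. In the generic case this yields
\ba
I_n(s,v) &=& \sum_{k=m+1}^{\infty} \frac{p^{-k(s-n)} - |v|_n^{s-n}}{|v|_n^{s+n}}\,(1-p^{-n})p^{-nk} \nn \\
& & + \sum_{k=-\infty}^{m-1} \frac{p^{-k(s-n)} - |v|_n^{s-n}}{p^{-k(s+n)}}\,(1-p^{-n})p^{-nk},
\ea
and both sums are convergent geometric series for $\Re(s)>0$. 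A direct summation shows the two pieces cancel, exactly as in Eq. \eqref{eq416}--\eqref{eq418}.

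For the exceptional case $s-n = 2\pi i k/\ln p$, I would repeat the circle decomposition with the numerator $\log_p|u|_n - \log_p|v|_n = v_n(v) - v_n(u)$. The resulting sums are now of the type $\sum_{j\geq 1} j\,p^{-nj}$ and $\sum_{j\geq 1} j\,p^{-js}$, which converge for $\Re(s)>0$ and, crucially, become equal once one uses $p^s=p^n$ (valid precisely when $s=n+2\pi i k/\ln p$). The two contributions then cancel just as before. The main obstacle I expect is bookkeeping rather than conceptual: ensuring that the prefactors of $|v|_n$ from the two regions combine correctly after using the identity $p^s=p^n$ in the exceptional case, and confirming convergence of all geometric sums under the standing assumption $\Re(s)>0$ (which is also what guarantees the $\Gamma(\pi^\mrm{max}_{s+n})$ prefactor is finite via Eq. \eqref{Gammasn}). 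No new technical input beyond Lemma \ref{lemma1} is required.
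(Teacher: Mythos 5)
Your proposal is correct and follows essentially the same route as the paper's own proof in Appendix \ref{appA}: the substitution $u=z-y$, $v=x-y$, the ultrametric splitting into the regions $|u|_n<|v|_n$ and $|u|_n>|v|_n$, the decomposition over circles $S_{n,k}$ with the volume formula of Lemma \ref{lemma1}, geometric-series cancellation for $\Re(s)>0$, and the identity $p^s=p^n$ in the exceptional case $s-n=2\pi i k/\ln p$. Your explicit observation that the sphere $|u|_n=|v|_n$ contributes nothing because the numerator vanishes there (so that no analog of Lemma \ref{appAlemma2} is needed for this theorem) simply makes explicit what the paper leaves implicit in dropping that region.
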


\begin{thm} 
\label{mainresultz11111}
When $\Re(s)>0$, $G(x,y)$ is the Green's function for the Vladimirov derivative $D_x^s$ on $\mathbb{Q}_p^n$, that is
\be
D_x^s G\lb x,y \rb = \delta \lb x-y \rb,
\ee
acting on the Schwartz space of compactly supported locally constant functions.
\end{thm}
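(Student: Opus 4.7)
The plan is to mirror the proof of Theorem \ref{mainresult1} in Section \ref{sec232}, promoting each one-dimensional volume computation to its $n$-dimensional analogue via Lemma \ref{lemma1}. By Theorem \ref{hereisthm11111} the distribution $D_x^s G(x,y)$ vanishes off the diagonal, so it must act on any test function as a multiple of the point evaluation at $y$; translation invariance of the additive Haar measure on $\mathbb{Q}_p^n$ (exactly as in the final Remark of Section \ref{sec232}) lets me reduce to the case $y=0$. Letting $\chi_r$ denote the characteristic function of $B_{n,r} \coloneqq \{x\in\mathbb{Q}_p^n\,:\,|x|_n \leq p^r\}$, it then suffices to prove $\langle D^s G(\cdot,y),\chi_r\rangle = \chi_r(y)$ for every $r\in\mathbb{Z}$ and every $y\in\mathbb{Q}_p^n$.

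First I would verify that $D^s$ is self-adjoint on $\mathbb{Q}_p^n$, which follows by the same symmetrization of the integrand as in Lemma \ref{lemmasadj}, and use it to rewrite $\langle D^s G(\cdot,y),\chi_r\rangle = \langle G(\cdot,y), D^s \chi_r\rangle$. Next, I would split the computation into the two regimes from the definition of $G$: the generic case $s-n \neq 2\pi i k/\ln p$, where $G(x,y) = c_{n,s,p}|x-y|_n^{s-n}$, and the logarithmic case, where $G(x,y) = c_{n,s,p}\log_p |x-y|_n$. In each regime, the pairing decomposes into a sum of a ``$z\in B_{n,r}$'' and ``$z\notin B_{n,r}$'' piece multiplied by the corresponding $x$-integral of $G$, and each of those splits further according to whether $y\in B_{n,r}$ or $y\notin B_{n,r}$. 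Every individual integral reduces to a convergent geometric sum over the circles $S_{n,i}$ (convergent for $\Re(s)>0$) using the volume $(1-p^{-n})p^{-ni}$ from Lemma \ref{lemma1}, and after cancellations involving the Gamma factor $\Gamma(\pi^{\mrm{max}}_{s+n})$ and the constant $c_{n,s,p}$, the result should collapse to $1$ when $y\in B_{n,r}$ and $0$ when $y\notin B_{n,r}$.

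The main technical obstacle, just as in the $\mathbb{Q}_p$ case, is the ``equatorial'' integral
\begin{equation}
\int_{|x|_n = |y|_n} |x-y|_n^{t}\, dx
\end{equation}
for $\Re(t) > -n$, which is the $n$-dimensional analogue of Lemma \ref{lemma2}. To evaluate it I would set $x = u+y$ and stratify by $v_n(u)$: the region $v_n(u) > v_n(y)$ gives a standard geometric sum over the circle volumes of Lemma \ref{lemma1}, while the coincidence region $v_n(u) = v_n(y)$ requires excising the locus where $v_n(u+y) > v_n(y)$. In one dimension this excision merely removes a neighborhood of $u = -y$, but in the maximum-norm setting the set $\{u\,:\,v_n(u) = v_n(y) = v_n(u+y)\}$ is the complement in $S_{n,v_n(y)}$ of a disk of strictly smaller radius, and computing its exact volume by the combinatorial product-of-spheres-and-disks argument used in Lemma \ref{lemma1} is the central new computation. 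Once this integral is in closed form, the remainder of the proof is a book-keeping exercise running parallel to the case analysis in Section \ref{sec232}.
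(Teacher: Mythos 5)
Your proposal is correct and follows essentially the same route as the paper's proof in Appendix \ref{appA}: self-adjointness to move $D^s$ onto the characteristic function $\chi_r$, the same two-regime case split ($s-n\neq 2\pi i k/\ln p$ versus the logarithmic case, with subcases $y\in B_r$ and $y\notin B_r$) reducing everything to geometric sums over the circles $S_{n,i}$, and the same key ``equatorial'' integral $\int_{|x|_n=|y|_n}|x-y|_n^{t}\,dx$ evaluated by setting $x=u+y$ and stratifying by $v_n(u)$ (the paper's Lemma \ref{appAlemma2}). The one cosmetic difference is that you flag the volume of the coincidence region $\lcb u : v_n(u)=v_n(y)=v_n(u+y)\rcb$ as a new combinatorial computation, whereas --- exactly as in one dimension --- it is the one-line subtraction of the translated small ball $-y+\lcb w : v_n(w)>v_n(y)\rcb$ from $S_{n,v_n(y)}$, yielding the factor $\lb 1-2p^{-n}\rb p^{-n v_n(y)}$ used in the paper.
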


The proofs of Theorems \ref{hereisthm11111} and \ref{mainresultz11111} are given in Appendix \ref{appA} below.

\section{Outlook: Berkovich spaces, renormalization, and functional equations}

In this paper we have explained how Tate's thesis is equivalent to the fact that the free Archimedean propagator in field theory can be reconstructed from the $p$-adic propagators, via an Euler product formula. There exists another appearance of a product formula in physics, in the case of amplitudes: Archimedean tree level four-point scattering amplitudes can be reconstructed from $p$-adic amplitudes \cite{Ruelle:1989dg}. In light of our discussion in this paper, this is again equivalent to a special case of Tate's thesis. It is remarkable that Tate's thesis makes appearances in physics in this manner. It is possible to give intuition for why Tate's thesis appears in the reconstruction of two-point functions and amplitudes, along the lines discussed in paper \cite{HMS}. Paper \cite{HMS} interprets the Euler products for the two-point functions and amplitudes as an equation of motion in Berkovich space. This equation of motion encodes a kind of renormalization group flow, for objects which scale trivially under changes in the energy scale. Thus, Tate's thesis is simply the equation of motion in Berkovich space for these types of objects. It will be exciting to explore these ideas further.

\subsection{$L$-functions and physics}

The results in the present paper conjecturally suggest that $L$-functions can be associated to $p$-adic and Archimedean free theories, and in particular that Euler products of the free propagators are equivalent to functional equations. So far, the evidence for this comes from the degree $1$ $L$-functions studied in this paper and (for the more general proposal) from the connections between the Zeta function and $p$-adic string theory noted in \cite{Huang:2019nog}. However, it is possible to tentatively flesh out things a little bit more.

Our proposal is a (for now conjectural) \emph{recipe} for deriving functional equations: write down a family of $p$-adic theories, deduce their Archimedean limit, and the functional equation for the function associated to the family of theories follows by demanding that the Archimedean propagator is the inverse product of the $p$-adic~propagators. Currently, it is unclear how the Archimedean counterpart to the $p$-adic theories should be determined in a rigorous manner, however it is possible to \emph{heuristically} guess what it should be, at least in some cases. For this, remember that the local component $\pi$ of the Hecke character entering the Vladimirov derivative in the action is the same as the character entering the Green's function \eqref{sketchG}, and furthermore this Green's function has the form
\be
\label{eqGfct}
G_\pi\lb x-y \rb = \Gamma\lb \Delta \pi^{-1} \rb \Delta^{-1} \pi (x-y).
\ee
In this expression the nontrivial information on the functional equation is encoded only in the Gamma prefactor, since the $\Delta^{-1}\pi\lb x-y \rb$ spatial dependence cancels in the Euler product. Thus, the Archimedean theory should be picked precisely such that this spatial dependence cancels. This restriction significantly constrains the form of the theories, as we explain below.

Although in the present paper we have dealt only with Archimedean theories defined on $\mathbb{R}$ or $\mathbb{C}$, it is possible to sketch out how our proposal adapts to other cases, such as those of scalar theories defined on $\mathbb{R}^4$. This is a significant departure from Tate's thesis, and also perhaps one of the more physically interesting cases. The key point is that the spatial dependence of the Green's function should cancel, and this can be ensured by making use of quaternion algebras. Consider $ x \coloneqq\lb x_1, \dots, x_4 \rb \in \mathbb{Q}_p^4$, and define
\ba
\pi^s_{(p)}\lb x \rb &\coloneqq& \left| \det \begin{pmatrix}
	x_1 & -x_2 & -x_3 & -x_4\\
	x_2 & x_1 & -x_4 & x_3\\
	x_3 & x_4 & x_1 & -x_2\\
	x_4 & -x_3 & x_2 & x_1\\
\end{pmatrix} \right|_p^\frac{s}{4} \\
&=& \left| x_1^2 + x_2^2 + x_3^2 + x_4^2 \right|^\frac{s}{2}_p.
\ea
$\pi^s_{(p)}$ inherits the multiplicative structure
\be
\pi^s_{(p)}(xy) = \pi^s_{(p)}(x) \pi^s_{(p)}(y)
\ee
from the quaternion division algebra, and so can be used as the character with respect to which a Vladimirov derivative is defined. Furthermore, one can define an Archimedean character as
\be
\pi^s_{(\infty)} \coloneqq \left| x_1^2 + x_2^2 + x_3^2 + x_4^2 \right|^\frac{s}{2}_\infty,
\ee
which is multiplicative under the quaternion algebra defined over the reals, and such that
\be
\pi_{(\infty)}^s\prod_p \pi_{(p)}^s = 1.
\ee
On the Archimedean side, the Green's function given by this multiplicative character corresponds to the usual Laplacian when $s=2$. The spatial dependence on $x-y$ crucially cancels in the Euler product of the Green's functions \eqref{eqGfct}. Furthermore, the machinery sketched in Section \ref{sec21} formally applies, so it is natural to guess  one also has an adelic product formula for the Green's functions.

More generally, one can consider theories defined not only over $\mathbb{Q}_p^n$, but also over other objects such as the Bruhat-Tits tree $T_p$. This is natural from the point of view of physics, because theories on $\mathbb{Q}_p$ are related to those on $T_p$ by the holographic duality. In \cite{Huang:2019nog} a connection between scalar theories and the Zeta function was noted, however it is possible to expand the scope of that analysis to better understand the relation between the content of the physical theories and the properties of the $L$-functions, which will be reported in future work. This analysis may also be applicable to Ising models, by changing the target space of the theory.

\section*{Acknowledgments}

We thank C. Jepsen, O. Offen, M. M. Nastasescu, and  W. A. Z\'u\~niga-Galindo for useful discussions. The work of A.H. and S.-T. Y. was supported in part by a grant from the Simons Foundation in Homological Mirror Symmetry. The work of A.~H., B.~S., and X.~Z. was supported in part by a grant from the Brandeis University Provost Office. B.~S. was supported in part by the U.S. Department of Energy under grant DE-SC-0009987, and by the Simons Foundation through the It from Qubit Simons Collaboration on Quantum Fields, Gravity and Information.

\appendix 

\section{Technical details on the Green's function computation for the maximum norm on $\mathbb{Q}_p^n$}
\label{appA}

In this appendix we give the proofs of theorems \ref{hereisthm11111} and \ref{mainresultz11111} in Section \ref{secQpn}. The discussions (and the proofs) mirror those in Section \ref{sec2} precisely. To not encumber notation, we will denote the $|\cdot|_n$ max norm on $\mathbb{Q}_p^n$ by $|\cdot|$ in this appendix, and the valuation $v_n$ by $v$.
\begin{thm} 
\label{appAhereisthm1}
For $x\neq y$ and $\Re(s)>0$ we have that 
\be
D^s_x G(x,y) = 0.
\ee
\end{thm}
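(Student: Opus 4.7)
The plan is to mirror the direct-computation proof of Theorem \ref{hereisthm1} from Section \ref{sec232}, now on $\mathbb{Q}_p^n$ with the max norm, replacing the sphere volume $(1-p^{-1})p^{-k}$ used there by $\mrm{vol}(S_{n,k}) = (1-p^{-n})p^{-nk}$ from Lemma \ref{lemma1}. The Gamma prefactor $\Gamma(\pi^\mrm{max}_{s+n})$ is a nonzero scalar for $\Re(s)>0$, so it suffices to show that the remaining integral against $G$ vanishes.

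First I would make the change of variables $u \coloneqq z-y$, $v \coloneqq x-y$, with $v\neq 0$ since $x\neq y$. In the generic case $s-n \neq 2\pi i k/\ln p$ this reduces the problem to showing
\be
I(s,v) \coloneqq \int_{\mathbb{Q}_p^n} \frac{|u|^{s-n}-|v|^{s-n}}{|u-v|^{s+n}}\, du = 0.
\ee
Because the integrand depends only on $|u|$ and $|v|$ through a difference, it vanishes on $\{|u|=|v|\}$; on the complement the strong triangle inequality gives $|u-v|=\max(|u|,|v|)$, so the integral splits as $\int_{|u|<|v|}+\int_{|u|>|v|}$ with $|u-v|$ replaced by $|v|$ and $|u|$ respectively.

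Next, I would parametrize each region by the valuation $k = v(u)$ and apply Lemma \ref{lemma1}. Writing $|v|=p^{-m}$ with $m=v(v)$, the first region contributes $\sum_{k=m+1}^{\infty} \frac{p^{-k(s-n)}-p^{-m(s-n)}}{p^{-m(s+n)}}(1-p^{-n})p^{-nk}$, and the second $\sum_{k=-\infty}^{m-1} \frac{p^{-k(s-n)}-p^{-m(s-n)}}{p^{-k(s+n)}}(1-p^{-n})p^{-nk}$. Each splits into two geometric sums; the $|u|<|v|$ piece converges because $\Re(s)>0$, and the $|u|>|v|$ piece converges because both $\Re(s)>0$ and $n\geq 1$ control the $k\to-\infty$ tail. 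Evaluating the closed forms, I expect the first contribution to equal $(1-p^{-n})\frac{p^{mn-s}}{1-p^{-s}} - p^{(m-1)n}$ and the second to equal $p^{(m-1)n} - (1-p^{-n})\frac{p^{mn-s}}{1-p^{-s}}$, which sum to $0$.

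For the special case $s-n = 2\pi i k/\ln p$, where $G$ is logarithmic, the numerator $|u|^{s-n}-|v|^{s-n}$ is replaced by $\log_p|u|-\log_p|v| = m-k$ on the sphere $S_{n,k}$. The same regional split gives two sums of the form $\sum j\, p^{-jn}$, which I would evaluate via the identity $\sum_{j\geq 1} j p^{-jn} = p^{-n}/(1-p^{-n})^2$; using the hypothesis in the equivalent form $p^s=p^n$ to simplify the power of $p$, the two regional contributions cancel symmetrically as $\pm \frac{p^{(m-1)n}}{1-p^{-n}}$. The main obstacle is purely bookkeeping of exponents and checking convergence at both tails; conceptually nothing new beyond the $n=1$ argument in Section \ref{sec232} is required, since the ultrametric structure and the factorized volume formula from Lemma \ref{lemma1} are exactly what make the cancellation proceed in the same way.
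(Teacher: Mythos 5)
Your proposal is correct and takes essentially the same route as the paper's own proof in Appendix \ref{appA}: the substitution $u=z-y$, $v=x-y$, the ultrametric split into $|u|<|v|$ and $|u|>|v|$ (with the sphere $|u|=|v|$ contributing nothing), summation over spheres using $\mrm{vol}(S_{n,k})=(1-p^{-n})p^{-nk}$ from Lemma \ref{lemma1}, and the separate logarithmic case $s-n=2\pi i k/\ln p$. Your explicit closed forms check out --- the two regional contributions are $\pm\bigl((1-p^{-n})\frac{p^{mn-s}}{1-p^{-s}}-p^{(m-1)n}\bigr)$ in the generic case and $\pm\frac{p^{(m-1)n}}{1-p^{-n}}$ in the logarithmic case --- so the cancellations proceed exactly as in the paper.
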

\begin{proof} The proof proceeds by direct computation. In the case $s-n\neq 2\pi i k/\ln p$ we need to evaluate the integral
\be
I_n(s,v) \coloneqq \int_{\mathbb{Q}_p^n} \frac{|z-y|^{s-n} - |x-y|^{s-n}}{|z-x|^{s+n}} dz = \int_{\mathbb{Q}_p^n} \frac{|u|^{s-n} - |v|^{s-n}}{|u-v|^{s+n}} du,
\ee
where $u\coloneqq z-y$ and $v\coloneqq x-y$. Assuming $v\neq 0$, we have
\ba
\label{appAeq416}
& &I_n(s,v) = \int_{\mathbb{Q}_p^n} \frac{|u|^{s-n} - |v|^{s-n}}{|u-v|^{s+n}} du \\
&=& \int_{|u|<|v|} \frac{|u|^{s-n} - |v|^{s-n}}{|v|^{s+n}} du + \int_{|u|>|v|} \frac{|u|^{s-n} - |v|^{s-n}}{|u|^{s+n}} du \nn\\
&=& \sum_{k=\mrm{val}(v)+1}^\infty \frac{p^{-k(s-n)} - |v|^{s-n}}{|v|^{s+n}} \mrm{vol}\lb S_{n,k} \rb \nn \\ &+& \sum_{k=-\infty}^{\mrm{val}(v)-1} \frac{p^{-k(s-n)} - |v|^{s-n}}{p^{-k(s+n)}} \mrm{vol}\lb S_{n,k} \rb \nn\\
&=& \lb 1 - p^{-n} \rb \lb \sum_{k=\mrm{val}(v)+1}^\infty \frac{p^{-ks} - p^{-nk}|v|^{s-n}}{|v|^{s+n}} + \sum_{k=-\infty}^{\mrm{val}(v)-1} \lb p^{kn} - |v|^{s-n} p^{ks} \rb \rb \nn\\
&=& 0,\nn
\label{AppAeq418}
\ea
where we have used Eq. \eqref{volSni} for the volume factor. The sums converge when $\Re\lb s\rb>0$.

In the case $s-n= 2\pi i k/\ln p $ we need to evaluate the integral (when $v\neq0$)
\ba
&&J_n(v) \coloneqq \int_{\mathbb{Q}_p^n} \frac{\log_p |z-y| - \log_p |x-y|}{|z-x|^{s+n}} dz  \\ &=& \int_{\mathbb{Q}_p^n} \frac{\log_p |u| - \log_p |v|}{|u-v|^{s+n}} du \nn\\
&=& \int_{|u|<|v|} \frac{\log_p |u| - \log_p |v|}{|v|^{s+n}} du + \int_{|u|>|v|} \frac{\log_p |u| - \log_p |v|}{|u|^{s+n}} du \nn\\
&=& \sum_{k=\mrm{val}(v)+1}^\infty \frac{-k - \log_p |v|}{|v|^{s+n}} \mrm{vol}\lb S_{n,k} \rb+ \sum_{k=-\infty}^{\mrm{val}(v)-1} \frac{-k - \log_p |v|}{p^{-k(s+n)}} \mrm{vol}\lb S_{n,k} \rb \nn \\
&=& 0 \nn.
\ea
This completes the proof.
\end{proof}

\begin{thm}
When $\Re(s)>0$, $G(x,y)$ is the Green's function for the Vladimirov derivative $D_x^s$ on $\mathbb{Q}_p^n$, that is
\be
\label{appAmainresult11111}
D_x^s G\lb x,y \rb = \delta \lb x-y \rb,
\ee
acting on the Schwartz space of compactly supported locally constant functions.
\end{thm}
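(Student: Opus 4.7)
The plan is to closely mirror the proof of the analogous statement for $\mathbb{Q}_p$ given in Section \ref{sec232}, using the previous theorem (Theorem \ref{appAhereisthm1}) which already establishes $D_x^s G(x,y) = 0$ for $x \neq y$. By that result, the distribution $D_x^s G(\cdot, y)$ is supported at $x = y$, so it suffices to compute its pairing against an arbitrary Schwartz test function $\psi$ and verify the answer is $\psi(y)$. By the self-adjointness of the Vladimirov derivative (the $\mathbb{Q}_p^n$ analog of Lemma \ref{lemmasadj}) we rewrite $\langle D^s G(\cdot, y), \psi\rangle = \langle G(\cdot, y), D^s \psi\rangle$, which moves the singularity off of $G$ and replaces it with the contact term built into the integral presentation of $D^s$.

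Next, by translation invariance of the Haar measure on $\mathbb{Q}_p^n$ (together with the $\mathbb{Q}_p^n$ analog of Lemma \ref{invariant}), it is enough to verify the Green's function identity against the single test function $\chi_r$, the characteristic function of the ball $B_r = \{x \in \mathbb{Q}_p^n \mid |x|_n \le p^r\}$ of arbitrary radius $r$ centered at $0$. The task then reduces to showing
\be
\langle G(\cdot, y), D^s \chi_r\rangle = \chi_r(y),
\ee
i.e.\ that this bracket equals $1$ when $y \in B_r$ and $0$ when $y \notin B_r$. Expanding the definition and exchanging the order of integration yields a sum of products of one-variable integrals over $B_r$, $B_r^c$ weighted by powers of $|x|_n$ and $|x-y|_n$, which can be evaluated cycle-by-cycle using the volume formula $\mathrm{vol}(S_{n,i}) = (1 - p^{-n}) p^{-ni}$ from Lemma \ref{lemma1}.

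The computation splits into two main cases according to whether $s - n = 2\pi i k / \ln p$ or not, because this determines whether $G$ is a power law or a logarithm and whether $c_{n,s,p}$ takes the generic $\Gamma$-function value or the residue-type value. Within each case there are two sub-cases depending on whether $y \in B_r$ or $y \notin B_r$. The $y \in B_r$ sub-case is routine: after a shift $x' = x - y$ inside the $B_r$ integral, the integrals over $B_r$ and $B_r^c$ reduce to geometric series in $p^{-s}$, $p^{-n}$, and $p^{-(s+n)}$, which combine with the prefactor $c_{n,s,p} \Gamma(\pi^\mathrm{max}_{s+n})$ to give exactly $1$. The $y \notin B_r$ sub-case is the genuine technical obstacle: the integrand $|x-y|^{s-n}$ on the annulus $|x| = |y|$ does not shift to a pure power of $|x'|$, so we must prove a $\mathbb{Q}_p^n$ analog of Lemma \ref{lemma2}, namely a closed form for $\int_{|x|_n = |y|_n} |x-y|_n^t \, dx$. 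This is obtained by decomposing the region $\{|x|_n = |y|_n\}$ according to the value of $v_n(x - y)$ relative to $v_n(y)$, applying the volume formula \eqref{volSni}, and carefully subtracting the slice $\{v_n(u) = v_n(y),\ v_n(x) \ne v_n(y)\}$ exactly as in the $n=1$ case.

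With that lemma in hand, the remaining $y \notin B_r$ calculation proceeds by summing three geometric-series contributions (corresponding to $|x| < |y|$, $|x| = |y|$, and $|x| > |y|$ on the region $B_r^c$), and in both the generic and logarithmic sub-cases the total collapses to $0$ after the required cancellations; the logarithmic sub-case uses the identity $p^{s} = p^n$ forced by $s - n \in 2\pi i \mathbb{Z}/\ln p$ to simplify the answer. Combining all four sub-cases gives $\langle D^s G(\cdot, y), \chi_r\rangle = \chi_r(y)$; extending to arbitrary Schwartz test functions via the translation-invariance lemma completes the proof of \eqref{appAmainresult11111}. The only non-routine step is the annulus integral just described; everything else is bookkeeping of geometric sums.
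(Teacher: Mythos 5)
Your proposal is correct and follows essentially the same route as the paper's own proof in Appendix \ref{appA}: self-adjointness to move $D^s$ onto the test function, reduction to characteristic functions $\chi_r$ of balls (with arbitrary test functions recovered via translation invariance and the analog of Lemma \ref{invariant}), the case split on $s-n \in 2\pi i\mathbb{Z}/\ln p$ versus not, the subcases $y\in B_r$ and $y\notin B_r$, and the annulus integral $\int_{|x|_n=|y|_n}|x-y|_n^t\,dx$, which is exactly the paper's Lemma \ref{appAlemma2}. You correctly identify that annulus lemma as the one non-routine ingredient and the rest as geometric-series bookkeeping, so nothing is missing.
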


We will prove this theorem in several steps (lemmas).

\begin{lemma}
Let $\chi_r$ be the characteristic function of the ball
\be
B_r \coloneqq \lcb x\in \mathbb{Q}_p^n\ |\ |x| \leq p^r \rcb.
\ee
$G(x,y)$ acts as the delta function on the test function $\chi_r$.
\end{lemma}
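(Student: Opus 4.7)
The plan is to follow the proof structure of the analogous $n=1$ lemma (the one appearing immediately after Theorem~\ref{hereisthm1}) essentially verbatim, with the scalar circle volume $(1-p^{-1})p^{-k}$ systematically replaced by the max-norm circle volume $(1-p^{-n})p^{-nk}$ from Lemma~\ref{lemma1}, and exponents $s\pm 1$ replaced by $s\pm n$. First, by the already-established self-adjointness of $D^s$ on $\mathbb{Q}_p^n$, it suffices to prove $\langle G,D^s\chi_r\rangle = \chi_r(y)$. Observing that $\chi_r(z)-\chi_r(x)$ vanishes unless exactly one of $x,z$ lies in $B_r$, the double integral defining $\langle G,D^s\chi_r\rangle$ factorizes, as in Eq.~\eqref{eq323}, into a difference of two products of single integrals, one pair over $B_r$ and one over its complement.

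I would then split into the two main cases: (a) the generic case $s-n\neq 2\pi ik/\ln p$ where $G = c_{n,s,p}|x-y|^{s-n}$, and (b) the pole case $s-n = 2\pi ik/\ln p$ where $G = c_{n,s,p}\log_p|x-y|$. In each, the $z$-integrals $\int_{B_r}dz = p^{nr}$ and $\int_{z\notin B_r}|z|^{-s-n}dz$ follow directly from Lemma~\ref{lemma1} as geometric sums in $k$. For the $x$-integrals one splits further on (i) $y\in B_r$ and (ii) $y\notin B_r$. In subcase (i), the ultrametric translation $x\mapsto x-y$ preserves the domain $B_r$, reducing the remaining integrals to geometric sums of $p^{-ki}\,\mrm{vol}(S_{n,-i})$ type, which collapse to $\langle G,D^s\chi_r\rangle = 1 = \chi_r(y)$. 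In subcase (ii), translation no longer preserves the domain, and one stratifies the region $x\notin B_r$ according to whether $|x|<|y|$, $|x|=|y|$, or $|x|>|y|$; the three resulting pieces should cancel to give $0 = \chi_r(y)$.

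The principal obstacle will be the analog of Lemma~\ref{lemma2} needed in subcase (ii): evaluating $\int_{|x|=|y|}|x-y|^t\,dx$ for the max norm on $\mathbb{Q}_p^n$. In one dimension the sphere $\{|x|=|y|\}$ decomposes cleanly as $\{v(u)>v(y)\}\sqcup\{v(u)=v(y),\,v(x)=v(y)\}$ with $u=x-y$, and a small measure correction $1-2p^{-1}$ removes the neighborhood of $u=-y$. For the max norm, the sphere has the richer stratification of Lemma~\ref{lemma1}, indexed by the subset $\mathcal{I}_u\subseteq\{1,\dots,n\}$ of coordinates achieving the maximum, and the binomial bookkeeping of circle$\,\times\,$disk volumes must be carried out carefully while excising the stratum on which translation by $y$ changes the norm of $x$. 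I expect that the resulting evaluation will combine with the remaining subcase~(ii) integrals to cancel, by exactly the same mechanism as in the $n=1$ computation, but the stratified accounting is the delicate point. The pole case~(b) is then handled by the same template, with $p^{-is}\mapsto -\partial_s p^{-is}$ producing the required logarithmic factors.
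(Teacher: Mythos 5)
Your proposal follows the paper's proof essentially step for step: self-adjointness reducing to $\langle G, D^s\chi_r\rangle$, the factorization into products of single integrals as in Eq.~\eqref{appAeq323}, the case split on $s-n$ modulo $2\pi i/\ln p$ and on $y\in B_r$ versus $y\notin B_r$, geometric sums over circle volumes from Lemma~\ref{lemma1}, and the sphere integral of Lemma~\ref{appAlemma2} as the key technical input. The only point where you over-anticipate difficulty is that sphere integral: since $|\cdot|_n$ is still an ultrametric, the paper's Lemma~\ref{appAlemma2} transcribes the $n=1$ argument verbatim with no stratification over the subsets $\mathcal{I}_u$ --- the excised set $\lcb u \,:\, v_n(u+y) > v_n(y)\rcb$ is a single translated polydisk of volume $p^{-n(v_n(y)+1)}$, which immediately yields the measure factor $p^{-n v_n(y)}\lb 1-2p^{-n}\rb$.
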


\begin{proof}
We use direct computation of the bracket. By self-adjointness of the derivative we have
\be
\langle D^s G, \chi_r \rangle = \langle G, D^s\chi_r \rangle.
\ee
\textbf{Case 1.} Let's first consider the case $s-n \neq 2\pi i k/\ln p$. We have
\ba
&&\frac{1}{c_{n,s,p}\Gamma(\pi_{s+n})}\langle G, D^s\chi_r \rangle = \int_{\mathbb{Q}_p^n} |x-y|^{s-n} \frac{\chi_{r}\lb z \rb - \chi_{r}\lb x \rb}{|z-x|^{s+n}}dzdx \\
&=& \lb \int_{x\notin B_r} \frac{|x-y|^{s-n}}{|x|^{s+n}}dx\rb\lb \int_{z\in B_r} dz \rb\\
& & - \lb\int_{x\in B_r} |x-y|^{s-n}dx\rb \lb \int_{z\notin B_r} |z|^{-s-n} dz \rb.\nn
\label{appAeq323}
\ea
We now compute the two $z$ integrals, which converge when $n>0$, $\Re\lb s\rb>0$,
\ba
\label{appAeqq324}
\int_{z\in B_r} dz &=& \sum_{j=-\infty}^r \mrm{vol} \lb S_{n,-j} \rb = p^{nr},\\
\int_{z\notin B_r} |z|^{-s-n} dz &=& \sum_{j={r+1}}^\infty p^{-(s+n)j} \mrm{vol} \lb S_{n,-j} \rb\nn\\ 
&=&\lb 1-p^{-n} \rb \frac{p^{-sr}}{p^{s}-1}. \nn
\ea
Now we need to compute the $x$ integrals. There are two subcases: (i) $y\in B_r$ and (ii) $y\notin B_r$.

\textbf{(i)} Let's start with $y\in B_r$. We have
\ba
&&\int_{x\notin B_r} \frac{|x-y|^{s-n}}{|x|^{s+n}}dx = \int_{x\notin B_r} |x|^{-2n}dx \\&=& \sum_{j={r+1}}^\infty p^{-2nj} \mrm{vol} \lb S_{n,-j} \rb = p^{-n(r+1)}. \nn
\ea
For the $x\in B_r$ integral, note that $x'\coloneqq x-y\in B_r$, so that we can shift the integral to compute
\ba
\int_{x\in B_r} |x-y|^{s-n}dx &=& \int_{x'\in B_r} |x'|^{s-n}dx' \\
&=& \sum_{i=-\infty}^r p^{i\lb s-n \rb} \mrm{vol} \lb S_{n,-i} \rb\\
&=& \lb 1-p^{-n} \rb \frac{p^{(r+1) s}}{p^{s}-1}.
\ea
Putting everything together, we obtain, when $y\in B_r$,
\be
\frac{1}{c_{n,s,p}\Gamma(\pi_{s+n})}\langle G, D^s\chi_r \rangle = \frac{\left(1-p^{s-n}\right)
   \left(p^{-n}-p^{s}\right)}{\left(1-p^{s}\right)^2},
\ee
which implies that $\langle G, D^s\chi_r \rangle=1$, as desired.

\textbf{(ii)} Let's now consider the subcase $y\notin B_r$, and we once again compute the $x$ integrals. We have
\ba
\label{appAeq328}
\int_{x\in B_r} |x-y|^{s-n}dx &=& |y|^{s-n} \int_{x\in B_r}dx \\ &=& |y|^{s-n} (1-p^{-n}) \sum_{i=-\infty}^r p^{ni} = |y|^{s-n} p^{n r}.\nn
\ea
The last remaining integral is marginally more involved. Let $|y|\eqqcolon p^{r'}$. We have
\ba
&&\int_{x\notin B_r} \frac{|x-y|^{s-n}}{|x|^{s+n}}dx = |y|^{s-n} \int_{\substack{x\notin B_r\\|x|<|y|}} |x|^{-s-n}dx \\
& & + |y|^{-s-n} \int_{|x|=|y|} |x-y|^{s-n}dx + \int_{|x|>|y|} |x|^{-2n}dx.\nn
\ea
We now compute these three integrals, as
\ba
\label{appAeq330}
\int_{\substack{x\notin B_r\\|x|<|y|}} |x|^{-s-n}dx
&=& \sum_{i=r+1}^{r'-1} p^{-i(s+n)} \mrm{vol} \lb S_{n,-i} \rb\\
&=& \frac{\left(1-p^{-n}\right) \left(p^{-s(r+1)}-p^{-sr'}\right)}{1-p^{-s}},\nn\\
\int_{|x|>|y|} |x|^{-2n}dx &=& \sum_{i=r'+1}^\infty p^{-2ni} \mrm{vol} \lb S_{n,-i} \rb = p^{-n (r'+1)},\nn
\ea
and finally for the middle term in Eq. \eqref{appAeq330}, using Eq. \eqref{appAeq3p9} in Lemma \ref{appAlemma2}, we obtain
\be
\label{appAeq331}
|y|^{-s-n}  \int_{|x|=|y|} |x-y|^{s-n}dx = -\frac{p^{-n r'} \left(p^n+\left(p^n-2\right)
   p^{s+n}\right)}{p^{2 n}-p^{s+2n}}.
\ee
Putting together Eqs. \eqref{appAeq330}, \eqref{appAeq331}, we derive
\be
\label{appAeq332}
\int_{\substack{x\notin B_r\\|x|<|y|}} |x|^{-s-n}dx = -\frac{\left(1-p^{-n}\right) p^{-sr+r'(s-n)}}{1-p^{s}},
\ee
so that from Eqs. \eqref{appAeq323} \eqref{appAeqq324}, \eqref{appAeq328} and \eqref{appAeq332} we conclude (when $y\notin B_r$) that
\be
\langle D^s G, \chi_r \rangle = 0.
\ee 
\textbf{Case 2.} Let's now compute case $s-n = 2\pi i k/\ln p$. We have the integral
\ba\label{appAeq3233}
& &\frac{1}{c_{n,s,p}\Gamma(\pi_{s+n})}\langle G, D^s\chi_r \rangle = \lb \int_{x\notin B_r} \frac{\log_p|x-y|}{|x|^{s+n}}dx\rb\lb \int_{z\in B_r} dz \rb\\
& & - \lb\int_{x\in B_r} \log_p|x-y|dx\rb \lb \int_{z\notin B_r} |z|^{-s-n} dz \rb.\nn
\ea
When $y\in B_r$, the integrals are
\ba
&&\int_{x\notin B_r} \frac{\log_p|x-y|}{|x|^{s+n}}dx = \int_{x\notin B_r} \frac{\log_p|x|}{|x|^{s+n}}dx \\ &&= (1-p^{-n}) \sum_{i=r+1}^\infty i p^{-is}\nn\\
&=& \frac{\left(p^n-1\right) \left((r+1) p^s-r\right) p^{-n-rs}}{\left(p^s-1\right)^2} \\
&&\int_{x\in B_r} \log_p|x-y|dx = \int_{x\in B_r} \log_p|x'|dx' \\&=& (1-p^{-n}) \sum_{i=-\infty}^{r} i p^{ni}\nn \\
&=& p^{n r} \left(\frac{1}{1-p^n}+r\right).
\ea
Putting these together with Eqs. \eqref{appAeqq324} we obtain, when $y\in B_r$,
\be
\frac{1}{c_{n,s,p}\Gamma(\pi_{s+n})}\langle G, D^s\chi_r \rangle = \frac{p^{-n}+1}{p^n-1},
\ee
so that $\langle G, D^s\chi_r \rangle=1$ as desired.

When $y\notin B_r$, the integrals are 
\ba
\int_{x\in B_r} \log_p|x-y|dx &=& (1-p^{-n}) \log_p |y| \sum_{i=-\infty}^{r} p^{ni} \\&=& p^{nr} \log_p |y|, \nn \\
\int_{x\notin B_r} \frac{\log_p|x-y|}{|x|^{s+n}}dx &=& - \int_{v(x) < v(y)} \frac{v(x)}{|x|^{s+n}} dx \\&-& \int_{v(x) = v(y)} \frac{v(x-y)}{|y|^{s+n}} dx\nn\\
& & - \int_{-r>v(x)>v(y)} \frac{v(y)}{|x|^{s+n}} dx. \nn
\ea
Denote $i \coloneqq v(x)$ and $b \coloneqq v(y)$. The first integral is 
\ba
\int_{v(x) < v(y)} \frac{v(x)}{|x|^{s+n}} dx &=& (1-p^{-n}) \sum_{i=-\infty}^{b-1} ip^{is}\\
&=& \frac{\left(1-p^{-n}\right) p^{b s} \left((b-1)p^s-b\right)}{\left(p^s-1\right)^2}.
\ea
To evaluate the second integral, we let $u = x-y$ and $j = v(u)$. We have this integral equal to
\ba
&&\int_{v(x) = v(y)} \frac{v(x-y)}{|y|^{s+n}} dx =\int_{j > b} \frac{j}{|u+y|^{s+n}} du\\ &+& \int_{j = b} \frac{b}{p^{-(s+n)b}}\nn\\
&=& bp^{sb}(1 - 2p^{-n})+(1-p^{-n})\sum_{j>b} j p^{(s+n)b - nj}\\
&=& bp^{sb}(1-p^{-n}) + \frac{p^{sb-n}}{1-p^{-n}}. \nn
\ea
For the third integral, we have
\ba
\int_{-r>v(x)>v(y)} \frac{v(y)}{|x|^{s+n}} dx &=& \sum_{-r>i>b}bp^{is}(1-p^{-n})\\
&=& b(1 -p^{-n})\frac{p^{s(-r-1)}-p^{bs}}{1 - p^{-s}}.
\ea
Adding the three contributions, the result of this integral, using that $s = n + 2\pi ik/\ln p$, is simply
\be 
\int_{x\notin B_r} \frac{\log_p|x-y|}{|x|^{s+n}}dx = -b p^{n (-r-1)}.
\ee
Plugging this back into \eqref{appAeq3233}, we obtain that it equals zero. This completes the proof.
\end{proof}

Below is the technical result we have used.

\begin{lemma}
\label{appAlemma2}
 For $y\in \mathbb{Q}_p^n$, $|y|\eqqcolon p^{r'}$ and $\Re\lb t\rb> -n $, we have the integral
\be
\int_{|x|=|y|} |x-y|^{t} dx = \frac{\left(\left(p^n-2\right) p^{n+t}+1\right) p^{n(r'-1)+r' t}}{p^{n+t}-1}.
\ee
\label{appAeq3p9}
\end{lemma}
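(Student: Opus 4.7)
The plan is to mirror the structure of the $n=1$ version of this lemma in the main text, exploiting the fact that the maximum norm on $\mathbb{Q}_p^n$ is non-Archimedean in exactly the same way as $|\cdot|_p$ on $\Qp$. First I would change variables by setting $u \coloneqq x-y$, so that $|x-y|^t$ becomes $|u|^t$, and then decompose the domain $\{x : |x|=|y|\}$ according to the valuation of $u$. By the ultrametric inequality there are only two possibilities contributing: either $v(u) > v(y)$, in which case $v(x) = v(y)$ is automatic, or $v(u) = v(y)$ together with the additional constraint $v(x) = v(y)$. The case $v(u) < v(y)$ forces $v(x) = v(u) \neq v(y)$ and so drops out.

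For the first piece, I would write the integral as a sum over circles $S_{n,i}$ with $i \geq v(y)+1$, substitute the volume $\mrm{vol}(S_{n,i}) = (1-p^{-n})p^{-ni}$ from Lemma \ref{lemma1}, and sum the resulting geometric series; after multiplying through by $p^{n+t}$ the convergence condition $\Re(t) > -n$ delivers the clean expression $(1-p^{-n})p^{(n+t)r'}/(p^{n+t}-1)$, using $v(y) = -r'$. For the second piece I would note that on the annulus $\{v(u) = v(y)\}$ the condition $v(u+y) = v(y)$ fails exactly on the translate by $-y$ of the disk $\{w : v(w) > v(y)\}$, whose $\mathbb{Q}_p^n$-volume is $p^{-n(v(y)+1)}$. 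Subtracting this from $\mrm{vol}(S_{n,v(y)})$ yields the measure factor $p^{-nv(y)}(1-2p^{-n})$, and on this region $|u|^t$ is constant and equal to $p^{tr'}$, giving the contribution $(1-2p^{-n})p^{(n+t)r'}$.

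Finally I would combine the two contributions over the common denominator $p^{n+t}-1$, pull out the factor $p^{n(r'-1)+r't}$, and observe that the remaining bracket simplifies as
\begin{equation*}
(p^n-1) + (p^n-2)(p^{n+t}-1) = p^{2n+t} - 2p^{n+t} + 1 = (p^n-2)p^{n+t} + 1,
\end{equation*}
which is precisely the numerator factor claimed in the statement. The most delicate step is the careful bookkeeping in the second piece: one has to identify correctly which translate is being removed from the annulus and confirm that the $n=1$ formula $1-2p^{-1}$ from the proof of Lemma \ref{lemma2} generalizes to $1 - 2p^{-n}$, which ultimately traces back to the ratio $\mrm{vol}(D_i)/\mrm{vol}(S_{n,i}) = p^{-n}/(1-p^{-n})$. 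Once that geometric step is in place, the rest is algebraic manipulation identical in spirit to the $\mathbb{Q}_p$ case, with $p$ consistently replaced by $p^n$ in the relevant volume factors.
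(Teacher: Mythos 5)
Your proof is correct and takes essentially the same route as the paper's: the same change of variables $u = x-y$ with the decomposition into the region $v(u) > v(y)$ (summed as a geometric series over the circles $S_{n,i}$ using the volume from Lemma \ref{lemma1}, converging precisely for $\Re(t) > -n$) and the region $v(u) = v(y)$ with $v(x) = v(y)$, where the measure factor $p^{-nv(y)}\lb 1-2p^{-n}\rb$ arises by removing the translated disk around $u = -y$. Your write-up is in fact slightly more explicit than the paper's on why the excluded case $v(u) < v(y)$ drops out and on the volume of the deleted translate, but the substance is identical.
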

\begin{proof}
	We proceed by direct computation. We write $x$ as $x = u + y$ and divide this integral into two parts according to the $\mathbb{Q}_p^n$ valuation $v(u)$,
	\be \int_{|x|=|y|} |x-y|^{t} dx = \int_{v(u) > v(y)} |x-y|^t dx  + \int_{\substack{v(u) = v(y)\\ v(x) = v(y)}} |x-y|^t dx. \ee
	We evaluate the first integral, denoting $v_n(u) = i$ and using the volume factor derived in Lemma \ref{lemma1},
	\ba 
	\int_{v(u) > v(y)} |x-y|^t dx  &=& \int_{v(u) > v(y)} |u|^t du \\
	&=& (1 - p^{-n})\sum_{i \geq v(y)+1} p^{-(n+t)i} \\
	&=& (1 -p^{-n})\frac{p^{-(n+t)(v(y)+1)}}{1 - p^{-(n+t)}}.
	\ea
	The second integral is 
	\be
	\int_{\substack{v(u) = v(y) \\ v(x) = v(y)}} p^{-v(y)t} du = p^{-(n+t)v(y)}(1 - 2p^{-n}). 
	\ee
	Here measure factor $p^{-nv(y)}(1 - 2p^{-n})$ obtained by subtracting the neighborhood of $u = -y$, since in that case $v(x) \neq v(y)$. 
	
	Summing up the two contributions, we obtain
	\ba 
	\int_{|x|=|y|} |x-y|^{t} dx &=& \frac{(p^{n+t}(p^n -2) + 1)p^{n(r' -1) + r't}}{p^{n+t} -1},
	\ea
	as advertised.
\end{proof}

\end{document}